\newtheorem{Theorem}{Theorem}
\newtheorem{Corollary}{Corollary}
\newtheorem{Remark}{Remark}
\newtheorem{Lemma}{Lemma}
\begin{document}
%
\title{An Attack-Resilient Pulse-Based Synchronization Strategy for General Connected Topologies}
%
%
%

\author{Zhenqian~Wang and 
        Yongqiang~Wang,~\IEEEmembership{Senior Member,~IEEE}
\thanks{Zhenqian Wang and Yongqiang Wang are with the Department of Electrical and Computer Engineering, Clemson University, Clemson, SC, 29630 USA. e-mail:~\{zhenqiw, yongqiw\}@clemson.edu. The work was supported in part by the National Science Foundation under Grants 1738902, 1912702, and in part by the China Scholarship Council.}}

%
%

\markboth{}%
{}
\maketitle

\begin{abstract}
 Synchronization of pulse-coupled oscillators (PCOs) has gained significant attention recently due to increased applications in sensor networks and wireless communications. Given the distributed and unattended nature of wireless sensor networks, it is imperative to enhance the resilience of pulse-based synchronization against malicious attacks. However, most existing results on resilient PCO synchronization are obtained for all-to-all networks. We propose a new pulse-based synchronization mechanism to improve the resilience of PCO synchronization that is applicable under general connected topologies. Under the proposed synchronization mechanism, we rigorously characterize the condition for stealthy Byzantine attacks and prove that perfect synchronization of legitimate oscillators can be guaranteed in the presence of multiple stealthy Byzantine attackers, irrespective of whether the attackers collude with each other or not. The new mechanism can guarantee resilient synchronization even when the initial phases of legitimate oscillators are widely distributed in a half circle, which is in distinct difference from most existing attack-resilient synchronization algorithms (including the seminal paper from Lamport and Melliar-Smith \cite{lamport1985synchronizing}) that require a priori (almost) synchronization among legitimate oscillators. Numerical simulation results are given to confirm the theoretical results.
\end{abstract}
\begin{IEEEkeywords}
Synchronization, Pulse-Coupled Oscillators, General Connected Topologies, Stealthy Byzantine Attacks. 
\end{IEEEkeywords}

\section{Introduction}

Inspired by flashing fireflies and contracting cardiac cells, pulse-based synchronization is attracting increased attention in sensor networks and wireless communications \cite{mirollo1990synchronization,peskin1975mathematical,mathar1996pulse,simeone2008distributed}. By exchanging simple and identical messages (so-called pulses), pulse-based synchronization incurs much less energy consumption and communication overhead compared with conventional packet-based synchronization approaches \cite{pagliari2011scalable}. These inherent advantages make pulse-based synchronization extremely appealing for event coordination and clock synchronization in various networks \cite{werner2005firefly,hong2005scalable,hu2006scalability,leidenfrost2009firefly,nunez2017pulse}. In the past decade, plenty of results have been reported on pulse-based synchronization. For example, by optimizing the interaction function, i.e., phase response function, the synchronization speed of pulse-coupled oscillators (PCOs) is maximized in \cite{wang_tsp2:12}; with a judiciously-added refractory period in the phase response function, the energy consumption of PCO synchronization is reduced in \cite{konishi:08,okuda2011experimental,wang_tsp:12}; \cite{wang_tsp:13,nunez2015synchronization,nunez2016synchronization} show that PCOs can achieve synchronization under a general coupling topology even when their initial phases are randomly distributed in the entire oscillation period. Recently, synchronization of PCOs in the presence of time-delays and unreliable links is also discussed \cite{klinglmayr2012guaranteeing,klinglmayr2017convergence}. Other relevant results include \cite{canavier2010pulse,nishimura2011robust,nishimura2012probabilistic,lucken2012two,nunez2015global,kannapan2016synchronization,lyu2018global,proskurnikov2017synchronization,gao2018pulse}.

However, the above results are obtained under the assumption that all oscillators behave correctly with no nodes compromised by malicious attackers. Due to the distributed and unattended nature, wireless sensor nodes are extremely vulnerable to attacks, making it imperative to study synchronization in the presence of attacks. Although plenty of discussions exist for conventional packet-based synchronization, e.g.,  \cite{pease1980reaching,lamport1985synchronizing,manzo2005time,li2006global,song2007attack,du2008security,leidenfrost2010fault}, results are very sparse on the attack-resilience of pulse-based synchronization. In \cite{tyrrell2010does}, the authors showed that pulse-based synchronization is more robust than its packet-based counterpart in the presence of a faulty node. In \cite{klinglmayr2012self}, a new phase response function was proposed to improve the precision of pulse-based synchronization against non-persistent random attacks. The authors in \cite{yun2015robustness} considered pulse-based synchronization in the presence of faulty nodes which fire periodically ignoring neighboring nodes' influence. However, none of the above results address phase synchronization of PCOs when compromised nodes act maliciously to corrupt synchronization by applying disturbing pulses with judiciously-crafted patterns. Furthermore, the above results only apply to a priori synchronized PCOs, i.e., all legitimate nodes are required to have identical phases when faulty pulses are emitted.

In this paper, we present a new pulse-based synchronization strategy for general connected PCOs that can achieve phase synchronization even in the presence of multiple stealthy Byzantine attackers. Throughout this paper, we use ``general connected" to describe undirected graphs in which there exists a (multi-hop) path between any pair of nodes. In the pulse-based interaction framework where exchanged messages are identical and content-free, Byzantine attacks mean compromised nodes injecting pulses using judiciously crafted patterns to disturb the synchronization process. So compared with existing results in \cite{tyrrell2010does,klinglmayr2012self,yun2015robustness} which address faulty PCO nodes with random or periodic pulse emitting patterns, the situation considered in this paper is more difficult to deal with due to the intelligent behavior of malicious attackers. By proposing a new pulse-based interaction mechanism, we show that perfect phase synchronization of legitimate oscillators can still be guaranteed as long as their initial phases are distributed within a half oscillation period. The approach is applicable even when individual oscillators do not have access to the total number of oscillators in a network. The result is in distinct difference from our recent results in \cite{wang2018pulse,wang2018attack} which can only guarantee phase synchronization under all-to-all topologies. 

The main contributions of this paper are as follows: 1) We propose a new mechanism for pulse-coupled synchronization that employs a ``cut-off" algorithm to restrict the number of pulses able to affect a receiving oscillator's phase in any three-quarter oscillation period, which is key to enable resilience to attacks; 2) The ``cut-off" algorithm also brings superior robustness to time-varying delays (see the numerical-simulation based comparison with existing algorithms in the absence of attacks in Fig. \ref{Sync_error_1} and Fig. \ref{Sync_error_2}), making the new pulse-coupled synchronization mechanism fundamentally different from existing ones and important in its own even in the absence of attacks; 3) We rigorously analyze the condition for an attacker to stay stealthy in a general connected pulse-coupled oscillator network, and address an attack model that is more difficult to deal with than existing results like \cite{wang2018pulse,wang2018attack}; 4) We guarantee that the collective oscillation period is invariant under attacks and identical to the free-running period, which is superior to existing results (e.g., \cite{wang2018pulse,wang2018attack}) that lead to a collective oscillation period affected by attacker pulses; 5) The results are applicable to general connected topologies whereas existing results on attack-resilience of pulse-coupled synchronization all assume an all-to-all topology.

It is worth noting that the analysis method here is also significantly different from the methods in \cite{wang2018pulse,wang2018attack}. In \cite{wang2018pulse,wang2018attack}, one can prove that the length of the containing arc will decrease to a value no greater than $(1-l)$ of its original value after each round of firing, where $l\in(0,\,1]$ is the coupling strength. However, in this paper, while enabling resilience to attacks, the new interaction mechanism also leads to more complicated dynamics, as reflected by the fact that we cannot prove length reduction in the containing arc after each round of firing. In fact, in the worse case, we can only prove that the length of the containing arc will decrease to a value no greater than $(1-l/2)$ of its original value after every two consecutive firing rounds.

This paper is organized as follows. Sec. \uppercase\expandafter{\romannumeral2} introduces a new pulse-based synchronization mechanism. Under the new mechanism, Sec. \uppercase\expandafter{\romannumeral3} presents a synchronization condition for general connected PCOs in the absence of attacks. In Sec. \uppercase\expandafter{\romannumeral4}, we characterize the condition for an attacker to keep stealthy, i.e., mounting attacks without being detected. In Sec. \uppercase\expandafter{\romannumeral5}, we prove that synchronization of legitimate oscillators can be guaranteed in the presence of multiple stealthy Byzantine attackers, with and without collusion. In Sec. \uppercase\expandafter{\romannumeral6}, we prove the applicability of our approach even when the total number of oscillators is unknown to individual oscillators. Simulation results are presented in Sec. \uppercase\expandafter{\romannumeral7}.

\section{A New Pulse-Based Synchronization Mechanism}
Consider a network of $N$ pulse-coupled oscillators. Each oscillator is equipped with a phase variable. When the evolving phase of an oscillator reaches $2\pi$ rad, the oscillator emits a pulse. Receiving pulses from neighboring oscillators will lead to the adjustment of the receiving oscillator's phase, which can be designed to achieve a desired collective behavior such as phase synchronization. An edge $(i,j)$ from oscillator $i$ to oscillator $j$ means that oscillator $j$ can receive pulses from oscillator $i$ but not necessarily vice versa. The number of edges entering oscillator $i$ is called the indegree of oscillator $i$ and is represented as $d^-(i)$. The number of edges leaving oscillator $i$ is called the outdegree of oscillator $i$ and is represented as $d^+(i)$. The value $d(i) \triangleq\min\{d^-(i),d^+(i)\}$ is called the degree of oscillator $i$. The degree of a network is defined as $d\triangleq\min_{i=1,2,\cdots,N}\{d(i)\}$.

The conventional pulse-based synchronization mechanism is presented below:

\noindent\rule{9cm}{0.12em}
\emph{Conventional Pulse-Based Synchronization Mechanism \cite{yun2015robustness}:} \\
\rule{9cm}{0.1em}
\begin{enumerate}
\item The phase $\phi_i$ of oscillator $i$ evolves from $0$ to $2\pi$ rad with a constant speed $\omega=1$ rad/second.

\item Once $\phi_i$ reaches $2\pi$ rad, oscillator $i$ fires and resets its phase to $0$.
		
\item \emph{Whenever} oscillator $i$ receives a pulse, it \emph{instantaneously} resets its phase to:
\begin{equation}\label{PhaseJump}
\phi_i^+=\phi_i+l\times F(\phi_i)\\
\end{equation}
where $l\in(0,1]$ is the coupling strength and $F(\bullet)$ is the phase response function (PRF) given below:
\begin{equation}\label{PRF}
\begin{array}{ccc}
F(\phi):=\left\{
\begin{array}{ll}
\hspace{0.1cm} -\phi~~~~~~~~~0\leq\phi\leq\pi \\
\hspace{0.1cm} 2\pi-\phi~~~~~\pi<\phi\leq 2\pi 
\end{array}\right.
\end{array}
\end{equation}
\end{enumerate}
\rule{9cm}{0.12em}

In the above conventional pulse-based synchronization mechanism, every incoming pulse will trigger a jump on the receiving oscillator's phase, which makes it easy for attackers to perturb the phases of legitimate oscillators and destroy their synchronization. Moreover, one can easily get that synchronization can never be maintained for general connected PCOs under the conventional mechanism, even when the coupling strength is set to $l=1$. This is because attack pulses can always exert nonzero phase shifts on affected legitimate oscillators and make them deviate from unaffected ones. Due to the same reason, existing attack resilient pulse-coupled synchronization mechanisms in \cite{wang2018pulse} and \cite{wang2018attack} for all-to-all graphs cannot be applied to general connected graphs, either. Motivated by these observations on the inherent vulnerability of existing pulse-based synchronization mechanisms, we propose a new pulse-based synchronization mechanism to improve the attack resilience of general connected PCO networks. Our key idea to enable attack resilience is a ``cut-off" mechanism which can restrict the number of pulses able to affect a receiving oscillator's phase in any three-quarter oscillation period. The ``cut-off" mechanism only allows pulses meeting certain conditions to affect a receiving oscillator's phase and hence can effectively filter out attack pulses with extremely negative effects on the synchronization process. Noting that all pulses are identical and content-free, so the ``cut-off" mechanism is judiciously designed based on the number of pulses an oscillator received in the past, i.e., based on memory. This is also the reason that we let an entire oscillation period $T=2\pi$ seconds elapse so that each oscillator can acquire memory.
 
\noindent\rule{9cm}{0.12em}
\emph{New Pulse-Based Synchronization Mechanism (Mechanism 1):}\\
\rule{9cm}{0.1em}
\begin{enumerate}
\item The phase $\phi_i$ of oscillator $i$ evolves from $0$ to $2\pi$ rad with a constant speed $\omega=1$ rad/second.
 	
\item Once $\phi_i$ reaches $2\pi$ rad, oscillator $i$ fires and resets its phase to $0$.
 	
\item When oscillator $i$ receives a pulse at time instant $t$, it resets its phase according to (\ref{PhaseJump}) only when all the following three conditions are satisfied:
\begin{enumerate}
\item an entire period of $T=2\pi$ seconds has elapsed since initiation.
 		
\item before receiving the current pulse, oscillator $i$ has received at least
\begin{equation}\label{lower_bound}
\lambda_i=\lfloor(d(i)-\lfloor N/2 \rfloor)/4\rfloor
\end{equation}
pulses within $(t-T/4,\,t]$, where $d(i)$ is the degree of oscillator $i$ and $\lfloor\bullet\rfloor$ is the largest integer no greater than $``\bullet."$
\item  before receiving the current pulse, oscillator $i$ has received less than $\bar{\lambda}_i$ pulses within $(t-3T/4,\,t]$, where
\begin{equation}\label{upper_bound}	
\bar{\lambda}_i=d(i)-2\lambda_i
\end{equation}
\end{enumerate}
Otherwise, the pulse has no effect on $\phi_i$.
\end{enumerate}
\rule{9cm}{0.12em}
 
Fig. \ref{fig_PRM} illustrates the phase evolution of oscillator $i$ having degree $d(i)=9$ in a network of $11$ PCOs. According to (\ref{lower_bound}) and (\ref{upper_bound}), we have $\lambda_i=1$ and $\bar{\lambda}_i=7$. So a pulse received at time instant $t$ can shift oscillator $i$'s phase when all the following three conditions are met: $1)$ $t>T$; $2)$ oscillator $i$ has received at least $1$ pulse within $(t-T/4,\,t]$; and $3)$ oscillator $i$ has received less than $7$ pulses within $(t-3T/4,\,t]$. Take the scenario in Fig. \ref{fig_PRM} as an example, only the $11$th and the $12$th pulses triggered phase jumps on oscillator $i$.
\begin{figure}[htbp]
\centering
\includegraphics[width=0.4\textwidth]{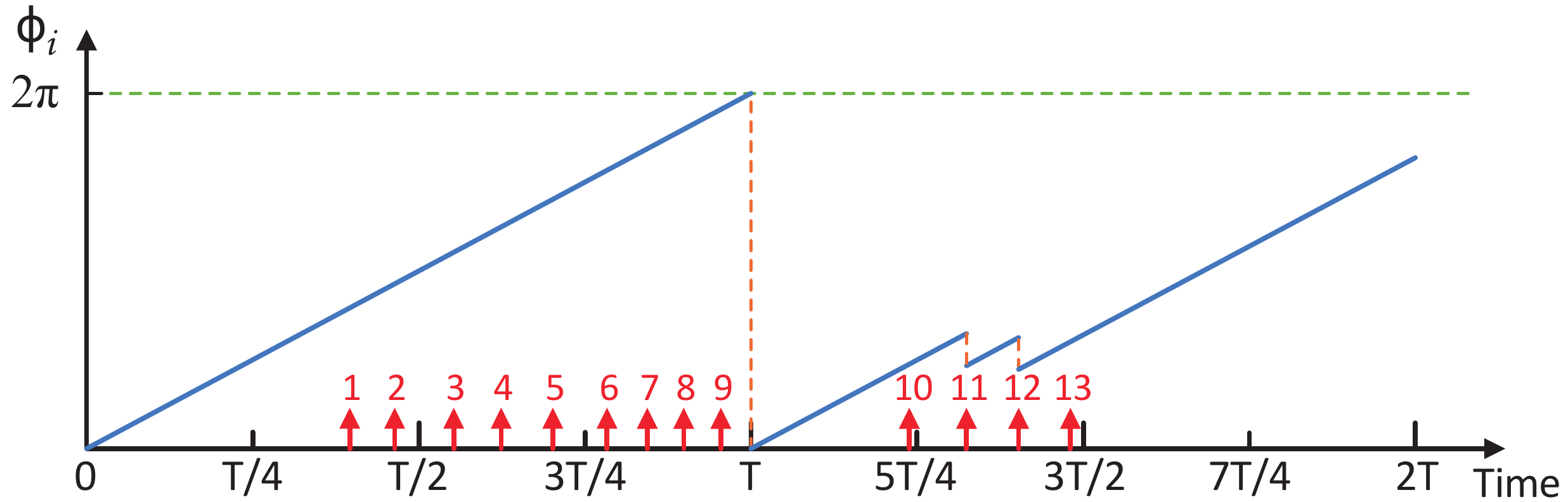}
\caption{The phase evolution of oscillator $i$ in a network of $11$ PCOs under Mechanism $1$. Indexed red arrows represent incoming pulses.}
\label{fig_PRM}
\end{figure}
\begin{Remark}\label{Remark_1}
Following \cite{lucken2012two,nunez2015global, proskurnikov2017synchronization}, we assume that when a legitimate oscillator receives multiple pulses simultaneously, it will process the incoming pulses consecutively. In other words, no two pulses will be regarded as an aggregated pulse.  
\end{Remark}

\section{Synchronization of General Connected PCOs in the Absence of Attacks}

In this section, we will show that Mechanism $1$ can guarantee the synchronization of general connected PCOs in the absence of attacks.

Assuming that all oscillators' phases rotate clockwise on a unit circle, the containing arc of legitimate oscillators is defined as the shortest arc on the unit circle that contains all legitimate oscillators' phases. The leading and terminating points of a containing arc are defined as the starting and ending points of the containing arc in the clockwise direction, respectively.

Based on the definition of containing arc, we can define phase synchronization:

\emph{Definition 1 (Phase Synchronization)}: A network of pulse-coupled oscillators achieves phase synchronization if the length of the containing arc of all legitimate oscillators converges to $0$ upon which all legitimate oscillators fire simultaneously with a fixed period $T=2\pi$ seconds.

\begin{Remark}
Requiring the firing period to be $T=2\pi$ seconds in Definition $1$ is important for two reasons. First, this requirement guarantees that all legitimate oscillators will not have irregular behaviors. For example, otherwise all oscillators having fixed and constant phases $0$ meets the condition of containing arc converging to $0$ but is unacceptable for pulse-coupled oscillators. Secondly, this additional requirement on firing period guarantees that the collective oscillation period after synchronization is not affected by attacks. In fact, in existing results \cite{klinglmayr2012self,yun2015robustness,wang2018pulse,wang2018attack}, the collective firing period could be affected by attack pulses.
\end{Remark}

We next give two important properties of general connected PCO networks under Mechanism $1$.

\begin{Lemma}\label{Lemma_1}
For a general connected network of $N$ legitimate PCOs evolving under Mechanism $1$, when the initial length of the containing arc is less than $\pi$ rad, the length of the containing arc is non-increasing. 
\end{Lemma}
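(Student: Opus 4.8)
The plan is to view the containing-arc length as a function of time and verify that it never increases, by splitting time into the free-evolution intervals between firings and the instants at which firings occur. On a free-evolution interval no pulse is emitted, every phase advances at the common rate $\omega=1$, the configuration rotates rigidly, and the length is exactly constant. Thus it suffices to show the length cannot increase across a single firing instant; since ``constant on free intervals'' and ``weakly decreasing at each firing'' together give a non-increasing function of time, no counting of events is needed. Moreover, because the length starts below $\pi$ and weak decrease preserves this bound, the hypothesis that the arc is shorter than $\pi$ is available at every firing instant, which is exactly what the geometric step requires.

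First I would fix the geometry at a firing instant with arc length $L<\pi$. Let oscillator $j$ be the one reaching $2\pi$. Identifying $2\pi$ with $0$, write the arc clockwise through the firing point as $[2\pi-a,\,b]$ with $a+b=L<\pi$ (allowing $b=0$ for the not-yet-wrapped configuration in which all phases lie in $(\pi,2\pi]$). Since $a+b<\pi$, every phase in $[2\pi-a,2\pi)$ lies in the branch $(\pi,2\pi]$ of (\ref{PRF}) and every phase in $[0,b]$ lies in the branch $[0,\pi]$, so the response is single-signed on each of the two clusters. The key computation is that an applied jump (\ref{PhaseJump}) sends a second-half phase to $(1-l)\phi+2\pi l\in(\phi,2\pi)$ and a first-half phase to $(1-l)\phi\in[0,\phi)$: in either case the oscillator moves strictly toward the firing point $2\pi\equiv0$ while staying in its own cluster, i.e. no oscillator reaches $2\pi$ or crosses below $0$. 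Hence the endpoint at $2\pi-a$ can only move clockwise-inward and the endpoint at $b$ only counter-clockwise-inward; taking the extreme phases over the (possibly reordered) clusters gives new endpoints with $a'\le a$ and $b'\le b$, so the new length $a'+b'\le L$.

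Next I would clear the bookkeeping introduced by the ``cut-off'' rule and by simultaneous firings. Conditions (a)--(c) of Mechanism $1$ can only \emph{suppress} a phase jump; a suppressed pulse leaves the phase fixed, which is a neutral inward move and cannot lengthen the arc, so the argument is insensitive to exactly which neighbors are actually updated. The reset $2\pi\mapsto0$ does not move $j$ on the circle, and by Remark \ref{Remark_1} coincident firings are processed one pulse at a time; a pulse that arrives at a phase of exactly $0$ or exactly $2\pi$ yields $F=0$ and no motion, so distinct firers never disturb one another. Each individual applied jump is inward by the previous computation, hence processing all coincident pulses in any order is a composition of inward (or neutral) moves and the length after the whole instant is still $\le L$.

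Finally I would assemble the two pieces: constancy on free intervals together with weak decrease at every firing instant yields a length that is non-increasing in time, and since it begins below $\pi$ it remains below $\pi$, so the geometric hypothesis is self-sustaining. I expect the main obstacle to be the geometric step of the second paragraph---making precise, in the wraparound picture, that ``moving toward the firing point'' is equivalent to ``each endpoint moving inward,'' and in particular ruling out that an updated oscillator jumps across the firing point or leaves its half-circle cluster. This is exactly where the strict inequality $L<\pi$ (rather than $L\le\pi$) is essential, since it forces the two clusters into disjoint branches of (\ref{PRF}); everything else is either a rigid rotation or a monotonicity/neutrality check.
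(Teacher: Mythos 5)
Your proposal is correct and follows essentially the same route as the paper: both reduce the claim to what happens at a firing instant and then use the sign of the PRF on the two half-circles (non-negative on $(\pi,2\pi]$, non-positive on $[0,\pi]$) to show every triggered jump moves a phase toward the firing point $2\pi\equiv 0$, so neither endpoint of the arc can move outward; your $[2\pi-a,\,b]$ bookkeeping just unifies the paper's three phase-distribution scenarios into one. The only quibble is that for $l=1$ a second-half phase lands exactly at $2\pi$ rather than in $(\phi,2\pi)$, but that oscillator then coincides with the firing point, so the conclusion is unaffected.
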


\begin{proof} Following the same line of reasoning as in Theorem $1$ of \cite{wang2018pulse}, the containing arc's length will change only when an oscillator's firing triggers a phase jump on at least one other oscillator. We assume that oscillator $i$ fires at time instant $t_i$ whose pulse triggers a phase jump on at least one other oscillator. One can easily get $\phi_i(t_i)=2\pi$ and the phase distribution of all the other $N-1$ oscillators can only fall within one of the following three scenarios, as depicted in Fig. \ref{fig_noattack}:
\begin{itemize}
\item[a)] all the other $N-1$ oscillators' phases reside in $(\pi, 2\pi]$;
\item[b)] all the other $N-1$ oscillators' phases reside in $[0, \pi)$;
\item[c)] the other $N-1$ oscillators' phases reside partially in $[0, \pi)$ and partially in $(\pi, 2\pi]$.
\end{itemize}

\begin{figure}[htbp]
\centering
\includegraphics[width=0.4\textwidth]{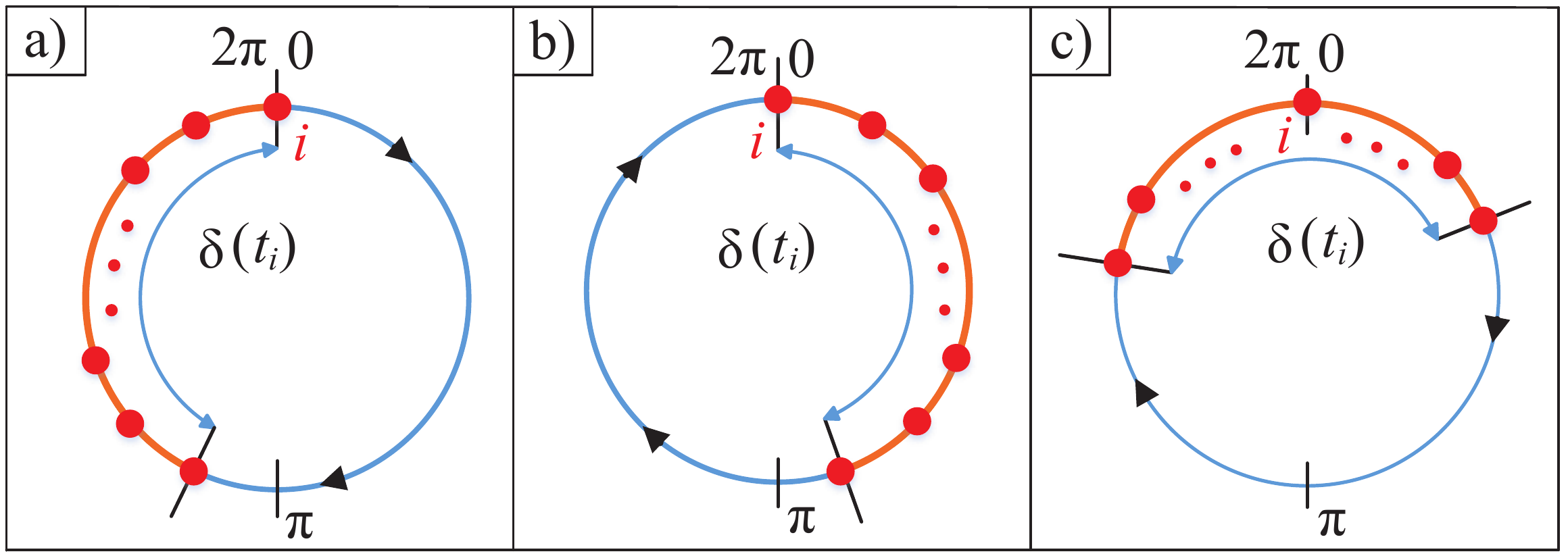}
\caption{Three scenarios of phase distributions of oscillators when oscillator $i$ fires at time instant $t_i$.}	
\label{fig_noattack}
\end{figure}

Denoting $\delta(t_i)$ as the length of the containing arc at time instant $t_i$, next we show that $\delta(t_i)$ cannot be increased by the firing of oscillator $i$ in any of the aforementioned three scenarios, i.e., $\delta^+(t_i)\leq\delta(t_i)$ always holds.
\begin{itemize}			
\item[a)] When all the other $N-1$ oscillators' phases reside in $(\pi, 2\pi]$ at $t_i$, the length of the containing arc can be expressed as
\begin{flalign}\label{Length_1.1}
\delta(t_i)=2\pi-\min_{j\in\mathcal{N},j\neq i}\{\phi_j(t_i)\}
\end{flalign}
where $\mathcal{N}=\{1,2,\cdots,N\}$ is the index set of all oscillators. After the firing of oscillator $i$, we have $\phi_i^+(t_i)=0$. Since the PRF in (\ref{PRF}) is non-negative on $(\pi, 2\pi]$, the pulse can only trigger a forward jump or have no effect on an oscillator with phase residing in $(\pi, 2\pi]$. Hence, we have $\phi_j^+(t_i)=\phi_j(t_i)+F(\phi_j(t_i))\geq\phi_j(t_i)$ or $\phi_j^+(t_i)=\phi_j(t_i)$ for $j\in\mathcal{N}, j\neq i$. In both cases we have $\phi_j(t_i)\leq\phi_j^+(t_i)$ for $j\in\mathcal{N}, j\neq i$, which implies
\begin{flalign}\label{Length_1.2}
\min_{j\in\mathcal{N},j\neq i}\{\phi_j(t_i)\}\leq\min_{j\in\mathcal{N}, j\neq i}\{\phi_j^+(t_i)\}
\end{flalign}
The length of the containing arc immediately after oscillator $i$'s firing at $t_i$ becomes
\begin{flalign}\label{Length_1.3}
\delta^+(t_i)&=2\pi-\min_{j\in\mathcal{N}, j\neq i}\{\phi_j^+(t_i)\}+\phi_i^+(t_i)\nonumber\\
&=2\pi-\min_{j\in\mathcal{N}, j\neq i}\{\phi_j^+(t_i)\}
\end{flalign}
One can easily get $\delta^+(t_i)\leq\delta(t_i)$ by combining (\ref{Length_1.1}), (\ref{Length_1.2}) and (\ref{Length_1.3}).

\item[b)] When all the other $N-1$ oscillators' phases reside in $[0, \pi)$ at time instant $t_i$ (note that phases $0$ and $2\pi$ are the same point on the unit circle), noting that under Mechanism $1$, the pulse can only trigger a backward jump or have no effect on an oscillator with phase residing in $[0,\,\pi)$, one can easily get $\delta^+(t_i)\leq\delta(t_i)$ following the same line of reasoning as in Scenario $a)$.

\item[c)] When the other $N-1$ oscillators' phases reside partially in $(\pi, 2\pi]$ and partially in $[0, \pi)$ at time instant $t_i$, one can easily get $\delta^+(t_i)\leq\delta(t_i)$ by combining the arguments in Scenario $a)$ and Scenario $b)$.
\end{itemize}

Summarizing the above three scenarios, we get that the length of the containing arc is non-increasing.
\end{proof}

Based on Lemma \ref{Lemma_1}, next we show that every oscillator will fire at least once within any time interval of length $3T/2$ under Mechanism $1$.

\begin{Lemma}\label{Lemma_2}
For a general connected network of $N$ legitimate PCOs with their initial length of the containing arc less than $\pi$ rad, every oscillator will fire at least once within any time interval of length $3T/2$ under Mechanism $1$.
\end{Lemma}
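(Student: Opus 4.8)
The plan is to establish the stronger quantitative fact that the time between any two consecutive firings of a given oscillator is strictly less than $3T/2$; the interval formulation then follows, since a window of length $3T/2$ cannot lie strictly between two consecutive firings. Because each phase climbs at unit rate, the free-running inter-firing time is exactly $T$, and by the form of the PRF in (\ref{PRF}) a pulse received with phase in $(\pi,2\pi]$ can only advance (or leave unchanged) the receiving phase, whereas a pulse received with phase in $[0,\pi)$ can only retard it. Hence forward jumps never delay firing, and I only need to bound the cumulative retarding effect of the backward jumps.

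Fix an oscillator $i$ and a firing instant $t_i$, so that $\phi_i(t_i)=2\pi$ and $\phi_i$ resets to $0$. Let $\delta=\delta(t_i^-)<\pi$ be the containing-arc length guaranteed by Lemma \ref{Lemma_1}. Since $i$ is a leading oscillator at $t_i$, every other legitimate oscillator has phase in $(2\pi-\delta,2\pi)\subset(\pi,2\pi)$ at $t_i$. The first key step is to confine all backward jumps on $i$ to the short window $(t_i,t_i+\delta)$: while an oscillator $j$ has phase in $(\pi,2\pi)$ it can only receive advancing jumps, so $\phi_j$ reaches $2\pi$ no later than by free climbing, i.e.\ within $\delta$ seconds; thus every other oscillator fires at least once in $(t_i,t_i+\delta]$, and it does so before $\phi_i$ can leave $[0,\pi)$ (as $\phi_i(t)\le t-t_i<\delta<\pi$ in this window). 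Consequently the only pulses able to retard $i$ are those received during $(t_i,t_i+\delta)$.

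The second step bounds the total backward displacement $D$ accumulated by $i$. Throughout $(t_i,t_i+\delta)$ we have $\phi_i(t)<\delta<\pi$, so $\phi_i$ stays in $[0,\pi)$ and receives no advancing jumps there; the phase balance over the window then reads $\phi_i(t_i+\delta)=\delta-D\ge 0$, which immediately gives $D\le\delta<\pi=T/2$. Note this bound is independent of the number of arriving pulses, and the cut-off rule of Mechanism $1$ can only suppress jumps and hence only decrease $D$. After $t_i+\delta$ no further retarding pulses act on $i$ (all neighbors have fired, and once $\phi_i$ exceeds $\pi$ any subsequent firing only advances it), so $\phi_i$ climbs freely from $\delta-D$ to $2\pi$, and the next firing occurs by $t_i+\delta+\big(2\pi-(\delta-D)\big)=t_i+T+D<t_i+3T/2$, as claimed.

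The step I expect to be the main obstacle is the confinement claim of the second paragraph: I must argue rigorously that no other oscillator is itself delayed before completing its firing in this round (so that they all fire within $\delta$), and that $i$ receives no backward jump after its phase crosses $\pi$ or during the following round. The former rests on the observation that an oscillator in $(\pi,2\pi)$ only ever sees advancing jumps; the latter on checking that when the next round of firings occurs (near $t_i+T$) the phase of $i$ already satisfies $\phi_i>2\pi-D>\pi$, so those pulses can only advance it. The remaining care needed is the treatment of simultaneous firings, which by Remark \ref{Remark_1} are processed one pulse at a time and therefore do not affect the displacement balance.
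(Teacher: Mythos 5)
Your proof is correct in substance but takes a genuinely different route from the paper's. The paper classifies the phase configuration at an arbitrary time $t$ into the four cases of Fig.~\ref{keep_firing} and observes that the worst case (containing arc straddling $2\pi$) costs at most $T/2$ to finish the current burst of firings plus one clean period $T$ for the next cycle. You instead fix one oscillator and do a displacement balance over a single firing round: all retarding pulses land while $\phi_i$ is still in $[0,\delta]$, and since $\phi_i\geq 0$ the total backward displacement satisfies $D\leq\delta<\pi=T/2$, giving an inter-firing gap of at most $T+D<3T/2$. Your accounting is slightly sharper (it bounds the gap by $T$ plus the current containing-arc length rather than by the uniform $3T/2$) and it isolates cleanly why the cut-off conditions of Mechanism $1$ are harmless here: they can only suppress jumps and hence only reduce $D$.

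One claim needs repair. When oscillator $i$ fires at $t_i$ it need not be the most advanced oscillator of its round, so it is not true in general that ``every other legitimate oscillator has phase in $(2\pi-\delta,2\pi)$ at $t_i$'': some oscillators may already have fired in the current burst and sit at phases in $[0,\delta)$. This does not break your argument, but the confinement claim must then be argued for the two sub-populations separately. The not-yet-fired oscillators lie in $(\pi,2\pi)$, receive only advancing jumps, and hence fire within $(t_i,t_i+\delta]$ as you say; the already-fired ones emitted their pulses before $t_i$ (while $\phi_i>\pi$, so those pulses could not retard $i$) and remain silent until the next burst, which --- as you compute --- begins only after $\phi_i$ has climbed past $2\pi-\delta>\pi$. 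With that case split in place, the confinement of all retarding pulses to $(t_i,t_i+\delta]$ and the bound $D\leq\delta$ both stand, and the rest of your argument goes through.
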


\begin{proof} From Lemma $1$, we know that the length of the containing arc is non-increasing. So the phase distribution of all oscillators at an arbitrary time instant $t$ can only fall within one of the following four scenarios, as illustrated in Fig. \ref{keep_firing}:	
\begin{itemize}
	\item[1)] all oscillators' phases reside in $[0,\,\pi]$;
	\item[2)] oscillators' phases reside partially in $[0,\,\pi]$, partially in $(\pi,\,2\pi]$ and the containing arc includes phase $\pi$;
	\item[3)] all oscillators' phases reside in $(\pi,\,2\pi]$;
	\item[4)] oscillators' phases reside partially in $[0,\,\pi]$, partially in $(\pi,\,2\pi]$ and the containing arc includes phase $2\pi$.
\end{itemize}
\begin{figure}[htbp]
	\centering
	\includegraphics[width=0.45\textwidth]{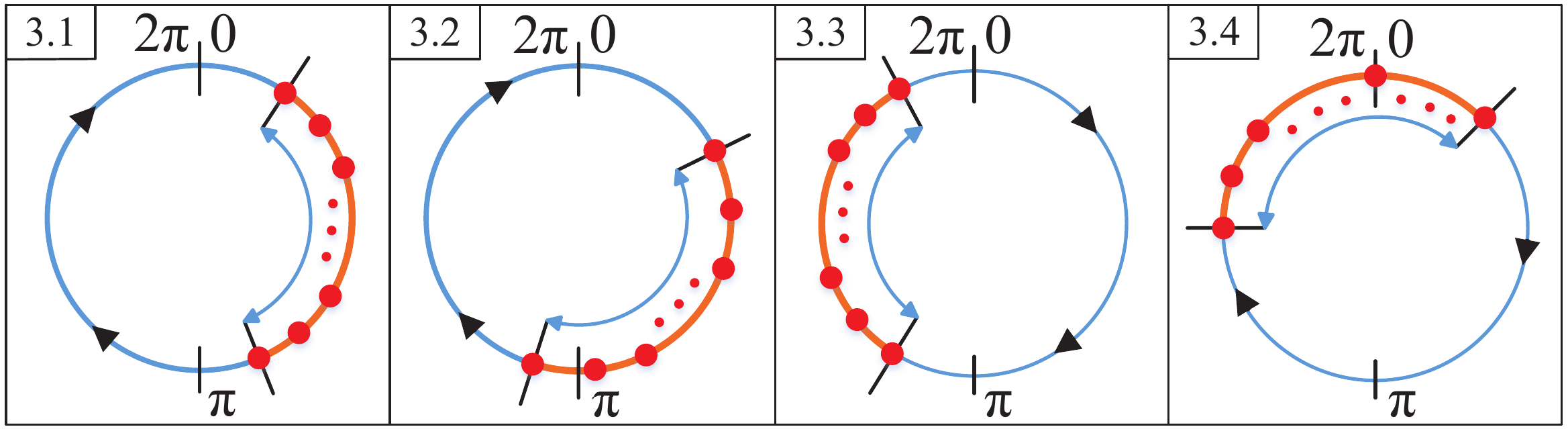}
	\caption{Four possible scenarios of phase distribution at time instant $t$.}	
	\label{keep_firing}
\end{figure}

Since all oscillators are legitimate, according to Mechanism $1$, one can easily get that in Scenarios $1)$, $2)$ and $3)$, all oscillators will evolve towards phase $2\pi$ rad and fire within $[t,\,t+T]$. In Scenario $4)$, given that the PRF in (\ref{PRF}) is non-negative on $(\pi, 2\pi]$, the pulse can only advance or have no effect on the oscillators with phase residing in $(\pi, 2\pi]$. Hence, all oscillators residing in $(\pi, 2\pi]$ will evolve towards phase $2\pi$ rad and fire within $[t,\,t+T/2]$. Since the length of the containing arc is less than $\pi$ rad and non-increasing, all oscillators reside in $[0,\pi]$ immediately after the firing of the oscillator on the ending point of the containing arc, meaning that the network shifts to Scenario $1)$. Then all oscillators will evolve towards phase $2\pi$ rad and fire within the following $T$ seconds. Therefore, we can get that in Scenario $4)$, every oscillator will fire within $[t,\,t+3T/2]$. By iterating the above argument, we know that every oscillator will fire at least once within any time interval of length $3T/2$.
\end{proof}

Now we are in position to present the synchronization condition in the absence of attacks:

\begin{Theorem}\label{Theorem_1_WithoutAttack}
For a general connected network of $N$ legitimate PCOs, if the initial length of the containing arc is less than $\pi$ rad and the degree of the PCO network satisfies $d>\lfloor N/2\rfloor$, then the containing arc of all oscillators will converge to zero under Mechanism $1$.
\end{Theorem}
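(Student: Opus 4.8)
The plan is to combine the two structural facts already established, namely that the containing arc is non-increasing (Lemma~\ref{Lemma_1}) and that every oscillator fires at least once in any window of length $3T/2$ (Lemma~\ref{Lemma_2}), with a quantitative \emph{contraction estimate}. Concretely, I would aim to show that over every two consecutive firing rounds the length $\delta$ of the containing arc is reduced to at most $(1-l/2)\delta$. Once this contraction is in hand, the conclusion follows quickly: since Lemma~\ref{Lemma_2} bounds the duration of each firing round, two rounds elapse in at most $3T$ seconds, so after $2k$ rounds the arc is at most $(1-l/2)^k$ times its initial length; as $1-l/2<1$ for $l\in(0,1]$, the arc length converges geometrically to zero. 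The residual claim that the common period equals $T$ then follows from $F(0)=0$ in (\ref{PRF}): once the arc has collapsed, all oscillators fire together at phase $2\pi$, and the simultaneously received pulses induce no phase jump, so free-running resumes with period $T=2\pi$.

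The heart of the argument is the contraction estimate, and here the degree condition $d>\lfloor N/2\rfloor$ plays the decisive role. Fix a round and consider the terminating oscillator $j$ at the trailing endpoint of the arc (by Lemma~\ref{Lemma_1} we may assume $\delta<\pi$). I would first argue combinatorially that $j$ receives enough pulses to \emph{clear the lower cut-off}: because every oscillator has indegree $d^-(j)\ge d>\lfloor N/2\rfloor$, the number of in-neighbors of $j$ that fire before $j$ in the round exceeds $\lambda_j$ (recall $\lambda_j<d(j)/4\le d^-(j)$ from (\ref{lower_bound})), so the $(\lambda_j+1)$-th such pulse is \emph{effective} under condition~(b) of Mechanism~1. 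Symmetrically, when the trailing cluster is the larger side, it is the leading oscillators that receive a majority of pulses and clear the cut-off; the threshold $\lfloor N/2\rfloor$ guarantees that in either configuration the smaller side has enough in-neighbors on the larger side to respond. One must also check that this crucial effective pulse is not suppressed by the upper cut-off~(c): since $\bar\lambda_j-\lambda_j=d(j)-3\lambda_j>0$ by (\ref{upper_bound}), there is always room for at least one effective response before the count reaches $\bar\lambda_j$ in the $3T/4$ window.

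Next I would turn the clearing of the cut-off into an actual reduction of $\delta$ by tracking the \emph{direction} of the triggered jump. Since $F$ is nonnegative on $(\pi,2\pi]$ and nonpositive on $[0,\pi]$ (see (\ref{PRF})), an effective pulse from the firing cluster pulls the responding oscillator \emph{toward} that cluster: a trailing oscillator in $(\pi,2\pi]$ is advanced toward $2\pi$, while a leading oscillator that has already reset into $[0,\pi]$ is pulled back toward $0$. In the clean case where the responding oscillator reacts while still at the extreme endpoint of the arc, its distance to the firing point is multiplied by $(1-l)$ according to (\ref{PhaseJump}), collapsing the arc by a full factor $(1-l)$; quantifying the residual endpoint displacement in the general case yields the stated arc reduction.

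The step I expect to be the main obstacle is reconciling the width of the cut-off windows with the width of the arc. The lower cut-off in (\ref{lower_bound}) is evaluated over a window of only $T/4=\pi/2$ seconds, whereas the arc may initially be as long as $\delta<\pi$, so the leading oscillators of a round can fire more than $T/4$ before the terminating oscillator is ready to respond; consequently some of the pulses needed to clear $\lambda_j$ may fall outside the counting window, and $j$ need not respond at its extreme endpoint. This is precisely the phenomenon that prevents a clean per-round $(1-l)$ contraction and forces the weaker $(1-l/2)$ estimate spread over two consecutive rounds: one round brings the configuration into a regime where the relevant pulses do land within the $T/4$ (respectively $3T/4$) windows, and the second round then delivers the guaranteed reduction. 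Carrying this out rigorously requires a careful case analysis on the relative sizes of the leading and trailing groups and on whether $\delta$ exceeds $T/4$, together with bookkeeping of which in-neighbor pulses fall in each window; this accounting, rather than any single inequality, is the technical crux.
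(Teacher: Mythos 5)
Your overall route is the paper's: combine the monotonicity of the containing arc (Lemma \ref{Lemma_1}) and the firing guarantee (Lemma \ref{Lemma_2}) with a $(1-l/2)$ contraction of the arc, obtained per firing round where possible and per two consecutive rounds in the worst case, using $d>\lfloor N/2\rfloor$ to force the smaller cluster to clear the lower cut-off and be pulled toward the larger one. But two pieces of your plan do not survive scrutiny. First, your handling of the upper cut-off --- ``since $\bar\lambda_j-\lambda_j=d(j)-3\lambda_j>0$ there is always room for at least one effective response'' --- is not a valid argument: an oscillator may enter the current round having already accumulated $\bar\lambda_{j'}$ or more pulses in its trailing $3T/4$ window from the \emph{previous} round's firings, and then condition $(c)$ of Mechanism $1$ suppresses \emph{every} pulse it receives in the current round regardless of how many arrive. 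The missing idea is the paper's resolution of exactly this case: if condition $(c)$ is what blocked oscillator $j'$, then at least $\bar\lambda_{j'}-\lambda_{j'}=d(j')-3\lambda_{j'}\ge\lfloor N/2\rfloor+1$ oscillators must have fired in the second half of the \emph{previous} round, which retroactively shows that the previous round had an early-firing cluster of size at most $\lfloor N/2\rfloor$ and therefore already delivered the $(1-l/2)$ contraction. That is where the ``two consecutive rounds'' actually come from --- not, as you suggest, from the $T/4$ width of the lower counting window.

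Second, the per-round estimate itself rests on a dichotomy that you only gesture at and that carries the whole proof. Split the firing window $[t_1,\,t_1+\epsilon]$ at its midpoint. Either every oscillator fires or jumps forward in $[t_1,\,t_1+\epsilon/2]$, in which case the fired ones land in $[0,\,\epsilon/2]$ and the jumped ones are pushed to at least $2\pi-(1-l)\epsilon/2$, giving $\delta(t_1+\epsilon)\le(1-l/2)\delta(t_1)$ directly; or some oscillator $j'$ is untouched in the first half, in which case the failure of condition $(b)$ for $j'$ (it received at most $\lambda_{j'}$ pulses in a window of length $\epsilon/2<T/4$, with nothing received in the preceding $T-\epsilon>3T/4$) forces at least $d(j')-\lambda_{j'}\ge\lfloor N/2\rfloor+1$ oscillators into the late cluster, so the early-firing set has at most $\lfloor N/2\rfloor$ members, each of which then receives at least $d(i)-\lfloor N/2\rfloor+1>\lambda_i$ pulses in the second half-window and is pulled back toward $0$ by the non-positive branch of (\ref{PRF}). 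Writing out this case split (the paper's Scenarios 1.1--1.3 and 2.1--2.3.2), and verifying condition $(c)$ at the triggering instants, is the actual content of the theorem; your proposal names the destination and the right landmarks but leaves this decisive accounting undone, and the one shortcut it does commit to (the ``always room'' claim for the upper cut-off) is false.
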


\begin{proof} Without loss of generality, we denote $\delta(t)$ as the length of the containing arc at time $t$ and set the initial time to $t=0$. According to Lemma \ref{Lemma_1}, we have that the containing arc is non-increasing and $0\leq\delta(t)<\pi$ for $t\geq 0$. From Lemma \ref{Lemma_2}, every oscillator will fire at least once within any time interval of length $3T/2$ and hence there exists a time instant $t_0>2T$ at which the ending point of the containing arc resides at phase $0$. Denoting the starting point of the containing arc at this time instant as $0\leq\epsilon<\pi$, we have $\delta(t_0)=\epsilon$. Next, we separately discuss the $0\leq \epsilon<\pi/2$ case and the $\pi/2\leq\epsilon<\pi$ case to prove the convergence of $\delta(t)$ to $0$.

\begin{figure}[htbp]
\centering
\includegraphics[width=0.35\textwidth]{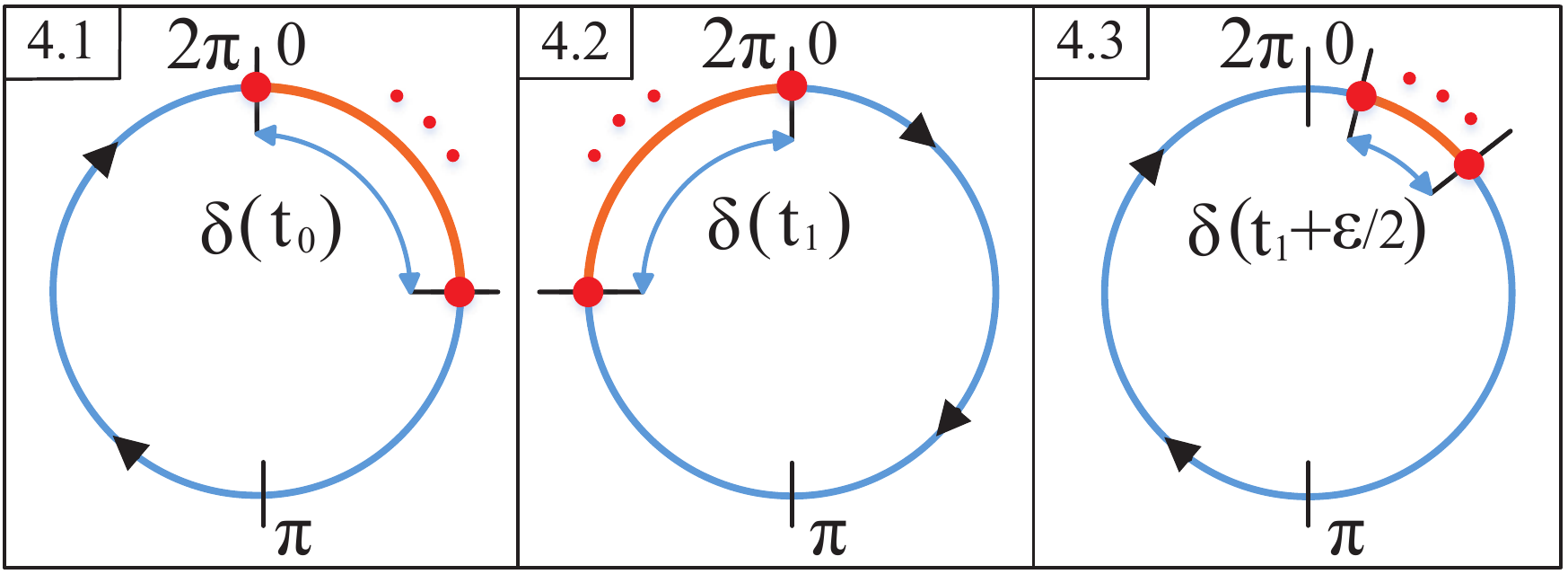}
\caption{Phase distributions of all oscillators at different time instants in \emph{Scenario 1.1}.}
\label{Theorem_1_1_1}
\end{figure}	
		
\emph{Case $1$~($0\leq\epsilon<\pi/2$)}: If $\epsilon$ is $0$, the network is synchronized. So we only consider $0<\epsilon<\pi/2$. Noting that the ending and starting points of the containing arc reside on phases $0$ and $0<\epsilon<\pi/2$ rad at time instant $t_0$, respectively (as depicted in Fig. \ref{Theorem_1_1_1}.1), so after $t_0$, all oscillators will evolve freely without firing for exactly $T-\epsilon>3T/4$ seconds before the starting point of the containing arc reaches phase $2\pi$ rad at time $t_1=t_0+T-\epsilon$ (as depicted in Fig. \ref{Theorem_1_1_1}.2). Meanwhile, the ending point of the containing arc resides on phase $2\pi-\epsilon$ rad and we have $\delta(t_1)=\delta(t_0)=\epsilon$.

Given that the PRF in (\ref{PRF}) is non-negative on $[2\pi-\epsilon,\,2\pi]$, a pulse can only trigger a forward jump or have no effect on an oscillator with phase residing in $[2\pi-\epsilon,\,2\pi]$. So all oscillators will reach phase $2\pi$ rad and fire no later than $t_1+\epsilon$ and within $[t_1,\,t_1+\epsilon/2]$, we can only have one of the following three scenarios:
\leftmargini=22mm
\begin{enumerate}
	\item[\emph{Scenario 1.1:}] all oscillators fired within $[t_1,\,t_1+\epsilon/2]$;
	
	\item[\emph{Scenario 1.2:}] some oscillators did not fire within $[t_1,\,t_1+\epsilon/2]$ but all these oscillators jumped in phase within $[t_1,\,t_1+\epsilon/2]$;
	
	\item[\emph{Scenario 1.3:}] some oscillators neither fired nor jumped in phase within $[t_1,\,t_1+\epsilon/2]$.
\end{enumerate}	

Next, we prove $\delta(t_1+\epsilon)\leq(1-l/2)\delta(t_1)$ in all above three scenarios, based on which we can further prove such a decrease of containing arc after each round of firing and hence the convergence of $\delta(t)$ to zero. Without loss of generality, we label all oscillators in an increasing order of their phases at time instant $t_1$, i.e., $2\pi-\epsilon=\phi_1(t_1)\leq\phi_2(t_1)\leq\cdots\leq\phi_{N}(t_1)=2\pi$ and denote $\mathcal{N}_{f}$ (respectively $\mathcal{N}_{n}$) as the index set of oscillators fired (respectively did not fire) in $[t_1,\,t_1+\epsilon/2]$.

\emph{Scenario 1.1 (all oscillators fired within $[t_1,\,t_1+\epsilon/2]$)}: One can easily know that in this case $\mathcal{N}_{f}$ contains all oscillators and $\mathcal{N}_{n}$ is an empty set. The phases of all oscillators at $t_1+\epsilon/2$ should follow the pattern depicted in Fig. \ref{Theorem_1_1_1}.3.

Since the PRF in (\ref{PRF}) is non-positive on $[0,\,\pi]$, the phase evolution of an oscillator cannot be advanced by received pulses when its phase resides in $[0,\,\pi]$. So all oscillators' phases reside in $[0,\,\epsilon/2]$ at time $t_1+\epsilon/2$, which means $0\leq\delta(t_1+\epsilon/2)\leq\epsilon/2=\delta(t_1)/2$. Given $l\in(0,\,1]$, one can obtain $\delta(t_1+\epsilon/2)\leq(1-l/2)\delta(t_1)$. According to the non-increasing property of the containing arc in Lemma \ref{Lemma_1}, we have $\delta(t_1+\epsilon)\leq(1-l/2)\delta(t_1)$.

\begin{figure}[htbp]
	\centering
	\includegraphics[width=0.43\textwidth]{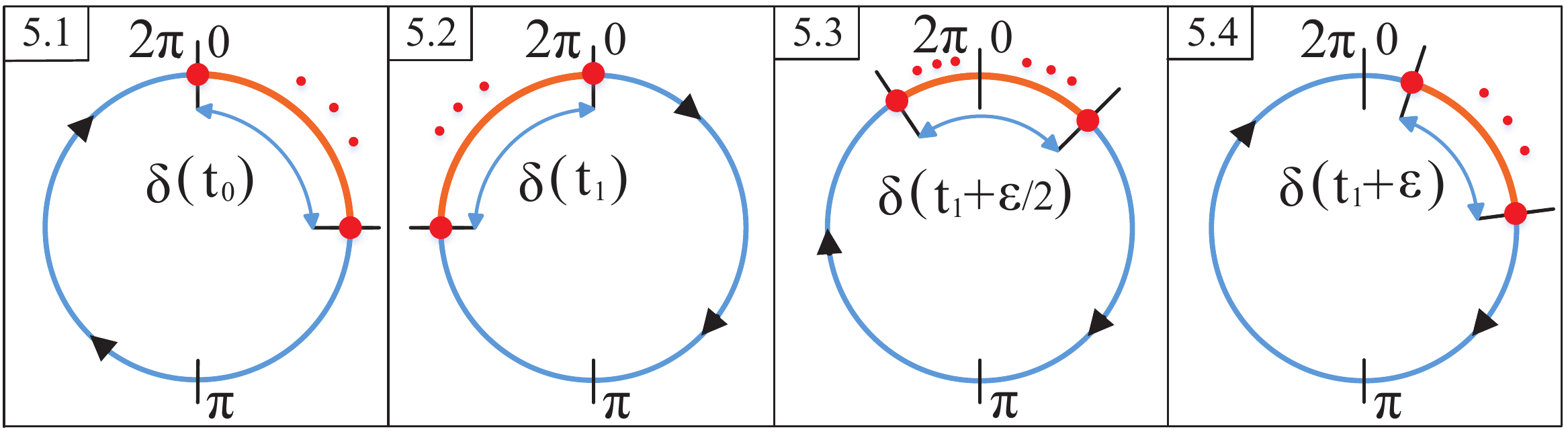}
	\caption{Phase distributions of all oscillators at different time instants in \emph{Scenario 1.2} and \emph{Scenario 1.3}.}
	\label{Theorem_1_1_2}
\end{figure}

\emph{Scenario 1.2 (some oscillators did not fire within $[t_1,\,t_1+\epsilon/2]$ but all these oscillators jumped in phase within $[t_1,\,t_1+\epsilon/2]$)}: At time instant $t_1+\epsilon/2$, the phase distribution of all oscillators should follow the pattern depicted in Fig. \ref{Theorem_1_1_2}.3. The length of the containing arc at $t_1+\epsilon/2$ can be obtained as
\begin{flalign}\label{scenario_1.2.1}
\delta(t_1+\epsilon/2)=\max_{i\in\mathcal{N}_{f}}\{\phi_i(t_1+\epsilon/2)\}+2\pi-\min_{j\in\mathcal{N}_{n}}\{\phi_j(t_1+\epsilon/2)\}
\end{flalign}
Following the same line of reasoning as in \emph{Scenario 1.1}, one can get $\phi_i(t_1+\epsilon/2)\in[0,\,\epsilon/2]$ for $i\in\mathcal{N}_{f}$, i.e.,
\begin{flalign}\label{scenario_1.2.2}
\max_{i\in\mathcal{N}_{f}}\{\phi_i(t_1+\epsilon/2)\}\leq\epsilon/2
\end{flalign}
Next, we characterize $\min_{j\in\mathcal{N}_{n}}\{\phi_j(t_1+\epsilon/2)\}$. Since all oscillators in $\mathcal{N}_{n}$ jumped at least once within $[t_1,\,t_1+\epsilon/2]$, we denote $\hat{t}_j\in[t_1,\,t_1+\epsilon/2]$ as the time instant of oscillator $j$'s first jump within $[t_1,\,t_1+\epsilon/2]$. So the phase of oscillator $j$ immediately before the jump at $\hat{t}_j$ is $\phi_j(\hat{t}_j)=\phi_j(t_1)+\hat{t}_j-t_1$. According to the PRF in (\ref{PRF}), we have the phase of oscillator $j$ immediately after the jump at $\hat{t}_j$ as
\begin{flalign}
\phi_j^+(\hat{t}_j)&=\phi_j(\hat{t}_j)+(2\pi-\phi_j(\hat{t}_j))l\nonumber\\
&=2\pi l+(1-l)(\phi_j(t_1)+\hat{t}_j-t_1)\nonumber
\end{flalign}

Noting that the PRF in (\ref{PRF}) is non-negative on $[2\pi-\epsilon,\,2\pi]$ and oscillator $j$ can be triggered to jump multiple times within $[t_1,\,t_1+\epsilon/2]$, the phase of oscillator $j$ at $t_1+\epsilon/2$ satisfies
\begin{flalign}
\phi_j(t_1+\epsilon/2)&\geq\phi_j^+(\hat{t}_j)+t_1+\epsilon/2-\hat{t}_j\nonumber\\
&=2\pi l+(1-l)\phi_j(t_1)+\epsilon/2-(\hat{t}_j-t_1)l\nonumber
\end{flalign}
Using the facts $\phi_j(t_1)\in[2\pi-\epsilon,\,2\pi]$ and $\hat{t}_j\in[t_1,\,t_1+\epsilon/2]$, we have $\phi_j(t_1+\epsilon/2)\geq2\pi-(1-l)\epsilon/2$ for $j\in\mathcal{N}_{n}$, i.e.,
\begin{flalign}\label{scenario_1.2.4}
\min_{j\in\mathcal{N}_{n}}\{\phi_j(t_1+\epsilon/2)\}\geq2\pi-(1-l)\epsilon/2 
\end{flalign}

Combining (\ref{scenario_1.2.1}), (\ref{scenario_1.2.2}), and (\ref{scenario_1.2.4}), we have $\delta(t_1+\epsilon/2)\leq(1-l/2)\delta(t_1)$. According to the non-increasing property of the containing arc in Lemma \ref{Lemma_1}, one can obtain $\delta(t_1+\epsilon)\leq(1-l/2)\delta(t_1)$.

\emph{Scenario 1.3 (some oscillators neither fired nor jumped in phase within $[t_1,\,t_1+\epsilon/2]$)}: At time instant $t_1+\epsilon/2$, the phase distribution of all oscillators should also follow the pattern depicted in Fig. \ref{Theorem_1_1_2}.3. To prove $\delta(t_1+\epsilon)\leq(1-l/2)\delta(t_1)$, we first characterize the number of oscillators in $\mathcal{N}_{f}$ and $\mathcal{N}_{n}$.

We assume oscillator $j'\in\mathcal{N}_{n}$ neither fired nor jumped in phase within $[t_1,\,t_1+\epsilon/2]$. Recall that no oscillators fired in $(t_0,\,t_1)$ of duration $t_1-t_0=T-\epsilon>3T/4$, according to Mechanism $1$, oscillator $j'$ being not triggered to jump in phase within $[t_1,\,t_1+\epsilon/2]$ implies it receiving no greater than $\lambda_{j'}$ pulses within $[t_1,\,t_1+\epsilon/2]$ of duration less than $T/4$, i.e., condition $b)$ of Mechanism $1$ is not satisfied.

As all oscillators will reach $2\pi$ rad and fire within $[t_1,\,t_1+\epsilon]$, every oscillator $k$ ($1\leq k\leq N$) should receive at least $d(k)$ pulses within $[t_1,\,t_1+\epsilon]$. Since oscillator $j'$ was not triggered to jump and hence received no greater than $\lambda_{j'}$ pulses within $[t_1,\,t_1+\epsilon/2]$, it will receive at least $d(j')-\lambda_{j'}$ pulses in $(t_1+\epsilon/2,\,t_1+\epsilon]$, i.e., the number of oscillators that did not fire in $[t_1,\,t_1+\epsilon/2]$ is at least $d(j')-\lambda_{j'}$. In other words, the number of oscillators in $\mathcal{N}_{n}$ is at least $d(j')-\lambda_{j'}$. According to the definition of $\lambda_{j'}$ in (\ref{lower_bound}), we have $4\lambda_{j'}\leq d(j')-\lfloor N/2\rfloor$, which further leads to $d(j')-\lambda_{j'}\geq \lfloor N/2\rfloor+3\lambda_{j'}$. Given $\lambda_{j'}\geq0$ and $d(j')>\lfloor N/2\rfloor$, we always have $d(j')-\lambda_{j'}\geq\lfloor N/2\rfloor+1$. Therefore, the number of oscillators in $\mathcal{N}_{n}$ is at least $\lfloor N/2\rfloor+1$ and the number of oscillators in $\mathcal{N}_{f}$ is at most $N-(\lfloor N/2\rfloor+1)$, which is no greater than $\lfloor N/2\rfloor$.

Next, we characterize the phases of oscillators at $t_1+\epsilon$. Since all oscillators in $\mathcal{N}_{n}$ fired within $(t_1+\epsilon/2,\,t_1+\epsilon]$, following the same line of reasoning as in \emph{Scenario 1.1}, we have
\begin{flalign}\label{scenario_1.3.1}
\phi_{j}(t_1+\epsilon)\in[0,\,\epsilon/2]
\end{flalign}
for $j\in\mathcal{N}_{n}$.

To determine $\phi_{i}(t_1+\epsilon)$ for $i\in\mathcal{N}_{f}$, we first determine $\phi_{i}(t_1+\epsilon/2)$ for $i\in\mathcal{N}_{f}$. Recall that all oscillators in $\mathcal{N}_{f}$ fired within $[t_1,\,t_1+\epsilon/2]$, following the same line of reasoning as in \emph{Scenario 1.1}, we have $\phi_i(t_1+\epsilon/2)\in[0,\,\epsilon/2]$ for $i\in\mathcal{N}_{f}$. Next, we prove that all oscillators in $\mathcal{N}_{f}$ will be triggered to jump in phase within $(t_1+\epsilon/2,\,t_1+\epsilon]$.

As has been proven, the number of oscillators in $\mathcal{N}_{f}$ is no greater than $\lfloor N/2\rfloor$ and all oscillators in $\mathcal{N}_{f}$ fired within $[t_1,\,t_1+\epsilon/2]$. So every oscillator $i$ in $\mathcal{N}_{f}$ can receive at most $\lfloor N/2\rfloor-1$ pulses within $[t_1,\,t_1+\epsilon/2]$ (note that oscillator $i$ cannot receive its own pulse) and will receive at least $d(i)-(\lfloor N/2\rfloor-1)$ pulses within $(t_1+\epsilon/2,\,t_1+\epsilon]$ of duration less than $T/4$. Using the definition of $\lambda_i$ in (\ref{lower_bound}), we have $d(i)-(\lfloor N/2\rfloor-1)>\lambda_i$, i.e., there must exist a time instant $\tilde{t}_i\in (t_1+\epsilon/2,\,t_1+\epsilon]$ for every oscillator $i$ at which it receives the $(\lambda_i+1)th$ pulse since (but not including) time instant $t_1+\epsilon/2$, i.e., condition $b)$ in Mechanism $1$ is satisfied. Next we proceed to prove that at $\tilde{t}_i$, condition $c)$ in Mechanism $1$ is also satisfied (note that condition $a)$ is always satisfied since we start at $t_0>2T$), and hence all oscillators in $\mathcal{N}_{f}$ will be triggered to jump in phase in $(t_1+\epsilon/2,\,t_1+\epsilon]$.

As no oscillators fire within $(t_0,\,t_1)$ of duration $t_1-t_0=T-\epsilon>3T/4$ and oscillator $i$ receives at most $\lfloor N/2\rfloor-1$ pulses within $[t_1,\,t_1+\epsilon/2]$, we have that within $(t_0,\,t_1+\epsilon/2]$ of duration $t_1+\epsilon/2-t_0>3T/4$, oscillator $i$ receives at most $\lfloor N/2\rfloor-1$ pulses, which is less than $\bar{\lambda}_i-2\lambda_i$ according to (\ref{upper_bound}), implying that at $\tilde{t}_i$, condition $c)$ of Mechanism $1$ is also satisfied. Therefore, according to Mechanism $1$, the phase of oscillator $i$ will be triggered to jump by the pulse received at $\tilde{t}_i$, i.e., every oscillator $i$ in $\mathcal{N}_{f}$ will be triggered to jump in phase within $(t_1+\epsilon/2,\,t_1+\epsilon]$.

Now we are in position to determine the phase of oscillator $i$ for $i\in\mathcal{N}_{f}$ at time instant $t_1+\epsilon$. Since every oscillator $i$ jumped at least once within $(t_1+\epsilon/2,\,t_1+\epsilon]$, we denote $\hat{t}_i\in(t_1+\epsilon/2,\,t_1+\epsilon]$ as the time instant of oscillator $i$'s first jump within $(t_1+\epsilon/2,\,t_1+\epsilon]$. So the phase of oscillator $i$ immediately before the jump at $\hat{t}_i$ is $\phi_i(\hat{t}_i)=\phi_i(t_1+\epsilon/2)+\hat{t}_i-(t_1+\epsilon/2)$. According to the PRF in (\ref{PRF}), the phase of oscillator $i$ immediately after the jump at $\hat{t}_i$ can be obtained as
\begin{flalign}
\phi_i^+(\hat{t}_i)=&(1-l)\phi_i(\hat{t}_i)\nonumber\\
=&(1-l)(\phi_i(t_1+\epsilon/2)+\hat{t}_i-(t_1+\epsilon/2))\nonumber
\end{flalign}

Noting that the PRF in (\ref{PRF}) is non-positive on $[0,\,\pi]$ and oscillator $i$ can be triggered to jump multiple times within $(t_1+\epsilon/2,\,t_1+\epsilon]$, the phase of oscillator $i$ at $t_1+\epsilon$ satisfies
\begin{flalign}\label{scenario_1.3.2}
\phi_i(t_1+\epsilon)&\leq\phi_i^+(\hat{t}_i)+(t_1+\epsilon)-\hat{t}_i\nonumber\\
&\leq (1-l)\phi_j(t_1+\epsilon/2)+\epsilon/2+(t_1+\epsilon/2-\hat{t}_i)l
\end{flalign}

Substituting $\phi_i(t_1+\epsilon/2)\in[0,\,\epsilon/2]$ and $\hat{t}_i\in(t_1+\epsilon/2,\,t_1+\epsilon]$ into (\ref{scenario_1.3.2}) leads to $\phi_i(t_1+\epsilon)\in[0,\,(1-l/2)\epsilon]$ for $i\in\mathcal{N}_{f}$. In combination with the fact $\phi_{j}(t_1+\epsilon)\in[0,\,\epsilon/2]$ for $j\in\mathcal{N}_{n}$ in (\ref{scenario_1.3.1}) and $l\in(0,\,1]$, we have that the phases of all oscillators reside in $[0,\,(1-l/2)\epsilon]$ at time $t_1+\epsilon$, i.e., $\delta(t_1+\epsilon)\leq(1-l/2)\delta(t_1)$.

In summary, we have $\delta(t_1+\epsilon)\leq(1-l/2)\delta(t_1)$ in all three \emph{Scenarios} \emph{1.1}, \emph{1.2}, and \emph{1.3}. At $t_1+\epsilon$, all oscillators reside in $[0,\pi]$ and will evolve towards phase $2\pi$ rad and fire. By repeating the above analyses, we can get that the length of the containing arc $\delta(t)$ decreases to a value no greater than $(1-l/2)\delta(t)$ after each round of firing until it converges to $0$. Therefore, synchronization can be achieved in Case $1$.


\emph{Case $2$~($\pi/2\leq\epsilon<\pi$)}: Similar to the reasoning in \emph{Case $1$}, there exists a time instant $t_0>2T$ at which the ending and starting points of the containing arc reside on phases $0$ and $\pi/2\leq\epsilon<\pi$ rad, respectively. After $t_0$, all oscillators evolve freely for exactly $T-\epsilon>T/2$ seconds before the starting point of the containing arc reaches phase $2\pi$ rad at $t_1=t_0+T-\epsilon$. At $t_1$, the ending point of the containing arc resides on phase $2\pi-\epsilon$ rad and we have $\delta(t_1)=\delta(t_0)=\epsilon$.

Given that the PRF in (\ref{PRF}) is non-negative on $[2\pi-\epsilon,\,2\pi]$, a pulse can only trigger a forward jump or have no effect on an oscillator with phase residing in $[2\pi-\epsilon,\,2\pi]$. So all oscillators will reach phase $2\pi$ rad and fire no later than time instant $t_1+\epsilon$ and within $[t_1,\,t_1+\epsilon/2]$, only one of the following three scenarios can happen:
\leftmargini=22mm
\begin{enumerate}
	\item[\emph{Scenario 2.1:}] all oscillators fired within $[t_1,\,t_1+\epsilon/2]$;
	
	\item[\emph{Scenario 2.2:}] some oscillators did not fire within $[t_1,\,t_1+\epsilon/2]$ but all of these oscillators jumped in phase within $[t_1,\,t_1+\epsilon/2]$;
	
	\item[\emph{Scenario 2.3:}] some oscillators neither fired nor jumped in phase within $[t_1,\,t_1+\epsilon/2]$.
\end{enumerate}	

Next, we show that $\delta(t)$ will decrease to less than $\pi/2$ rad in finite time, meaning that \emph{Case $2$} will shift to \emph{Case $1$} in finite time. Therefore, $\delta(t)$ will also converge to $0$ for $\pi/2\leq\epsilon<\pi$.

Similar to \emph{Case $1$}, we label all oscillators in an increasing order of their phases at $t_1$, i.e., $2\pi-\epsilon=\phi_1(t_1)\leq\phi_2(t_1)\leq\cdots\leq\phi_{N}(t_1)=2\pi$ and denote $\mathcal{N}_{f}$ (respectively $\mathcal{N}_{n}$) as the index set of oscillators fired (respectively did not fire) in $[t_1,\,t_1+\epsilon/2]$. Following the same line of reasoning as in \emph{Scenario 1.1} and \emph{Scenario 1.2}, one can easily obtain $\delta(t_1+\epsilon)\leq(1-l/2)\delta(t_1)$ in \emph{Scenario 2.1} and \emph{Scenario 2.2}, respectively. For \emph{Scenario 2.3}, i.e., some oscillators neither fired nor jumped in phase within $[t_1,\,t_1+\epsilon/2]$, we assume oscillator $j'$ is such an oscillator. According to Mechanism $1$, there could be two reasons for the not firing of oscillator $j'$ in $[t_1,\,t_1+\epsilon/2]$:

\leftmargini=23.8mm
\leftmarginii=23.8mm
\begin{enumerate}
	\item[\emph{Scenario 2.3.1:}] oscillator $j'$ receives no greater than $\lambda_{j'}$ pulses within $[t_1,\,t_1+\epsilon/2]$, i.e., condition $b)$ of Mechanism $1$ is not satisfied;
	\item[\emph{Scenario 2.3.2:}] oscillator $j'$ receives over $\lambda_{j'}$ pulses within $[t_1,\,t_1+\epsilon/2]$, but the number of pulses it received within the past period of length $3T/4$ is no less than $\bar\lambda_{j'}$, i.e., condition $c)$ of Mechanism $1$ is not satisfied.
\end{enumerate}

Next, we show that in both scenarios, the length of the containing arc will keep decreasing to less than $(1-l/2)$ of its original value.

\emph{Scenario 2.3.1}: Following the same line of reasoning as in \emph{Scenario 1.3}, all oscillators' phases reside in $[0,\,(1-l/2)\epsilon]$ at time instant $t_1+\epsilon$, which means $\delta(t_1+\epsilon)\leq(1-l/2)\delta(t_1)$.

\begin{figure}[htbp]
	\centering
	\includegraphics[width=0.43\textwidth]{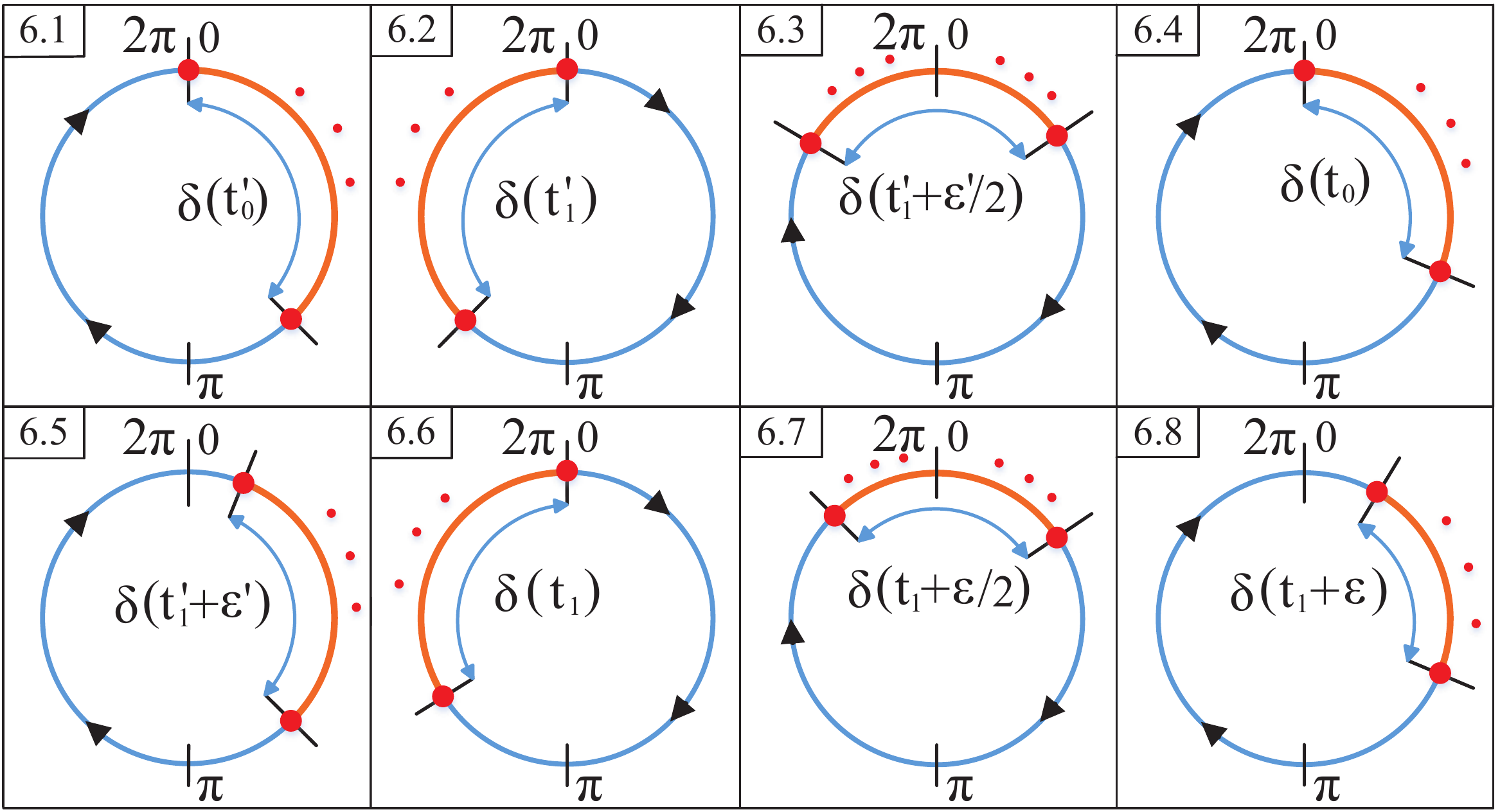}
	\caption{Phase distributions of all oscillators at different time instants in \emph{Scenario 2.3.2}.}	
	\label{Theorem_1_2}
\end{figure}

\emph{Scenario 2.3.2}: In this case, we cannot prove length decrease in the containing arc by focusing on the time interval $[t_0,\,t_1+\epsilon]$ (one firing round), so we extend our considered time span to two firing rounds. Without loss of generality, we assume that the previous firing round starts at $t_0'<t_0$ at which the ending and starting points of the containing arc reside on phases $0$ and $\epsilon'$ rad, respectively (as depicted in Fig. \ref{Theorem_1_2}.1). As the containing arc is non-increasing (Lemma \ref{Lemma_1}), we have $\epsilon\leq\delta(t_0')=\epsilon'<\pi$. After $t_0'$, all oscillators evolve freely for exactly $2\pi-\epsilon'>T/2$ seconds before the starting point of the containing arc reaches phase $2\pi$ rad at time $t_1'=t_0'+2\pi-\epsilon'$ (as depicted in Fig. \ref{Theorem_1_2}.2). At $t_1'$, the ending point of the containing arc resides on phase $2\pi-\epsilon'$ rad and we have $\delta(t_1')=\delta(t_0')=\epsilon'$.

Given that the PRF in (\ref{PRF}) is non-negative on $[2\pi-\epsilon',\,2\pi]$, a pulse can only trigger a forward jump or have no effect on an oscillator with phase residing in $[2\pi-\epsilon',\,2\pi]$. So all oscillators will reach phase $2\pi$ rad and fire no later than $t_1'+\epsilon'$. The phases of all oscillators at $t_1'+\epsilon'$ should follow the pattern depicted in Fig. \ref{Theorem_1_2}.5. Next, we prove $\delta(t_1'+\epsilon')\leq(1-l/2)\delta(t_1')$. To this end, we need to characterize the number of oscillators fired within $[t_1',\,t_1'+\epsilon'/2]$. The phases of all oscillators follow the pattern depicted in Fig. \ref{Theorem_1_2}.3 at time instant $t_1'+\epsilon'/2$. We denote $\mathcal{N}_{f}'$ (respectively $\mathcal{N}_{n}'$) as the index set of oscillators fired (respectively did not fire) within $[t_1',\,t_1'+\epsilon'/2]$ and analyze the numbers of oscillators in the two sets.

Recall that in \emph{Scenario 2.3.2}, condition $c)$ of Mechanism $1$ is not satisfied. So oscillator $j'$ should receive at least $\bar{\lambda}_{j'}-\lambda_{j'}$ pulses within $(t_1-3T/4,\,t_1)$. Since no oscillators fired within $(t_0,\,t_1)$, the number of oscillators fired in $(t_1-3T/4,\,t_0]$ is at least $\bar{\lambda}_{j'}-\lambda_{j'}$. Next, by proving $(t_1-3T/4,\,t_0]\subseteq(t_1'+\epsilon'/2,\,t_1'+\epsilon']$, we show that the number of oscillators fired in $(t_1'+\epsilon'/2,\,t_1'+\epsilon']$ is no less than $\bar{\lambda}_{j'}-\lambda_{j'}$. As indicated earlier, all oscillators will reach phase $2\pi$ rad and fire no later than $t_1'+\epsilon'$. So we have $t_0\leq t_1'+\epsilon'$. On the other hand, since the starting point of the containing arc resides on phase $\pi/2\leq\epsilon<\pi$ at $t_0$ and the PRF in (\ref{PRF}) is non-positive on $[0,\,\epsilon]$, oscillators having phase in $[0,\,\epsilon]$ will not be advanced by incoming pulses. So it takes an oscillator at least $\epsilon$ time to evolve from $0$ to $\epsilon$ rad. Therefore, we can obtain $t_0-t_1'\geq\epsilon$. Given $\epsilon'<\pi=T/2$ and $t_1=t_0+T-\epsilon$, one can get
\[
t_1'+\epsilon'/2\leq t_0-\epsilon+\epsilon'/2<t_0-\epsilon+T/4=t_1-3T/4
\]
and hence $(t_1-3T/4,\,t_0]\subseteq(t_1'+\epsilon'/2,\,t_1'+\epsilon']$, implying that at least $\bar{\lambda}_{j'}-\lambda_{j'}$ oscillators fired within $(t_1'+\epsilon'/2,\,t_1'+\epsilon']$. According to the definition of $\lambda_{j'}$ and $\bar{\lambda}_{j'}$ in (\ref{lower_bound}) and (\ref{upper_bound}), we have $4\lambda_{j'}\leq d(j')-\lfloor N/2\rfloor$ and $\bar{\lambda}_{j'}-\lambda_{j'}=d(j')-3\lambda_{j'}$, which further lead to $d(j')-3\lambda_{j'}\geq \lfloor N/2\rfloor+\lambda_{j'}$. Given $\lambda_{j'}\geq0$ and $d(j')>\lfloor N/2\rfloor$, we always have $d(j')-3\lambda_{j'}\geq\lfloor N/2\rfloor+1$. Therefore, the number of oscillators in ${\mathcal{N}}_{n}'$ is at least $\lfloor N/2\rfloor+1$ and the number of oscillators in ${\mathcal{N}}_{f}'$ is at most $N-(\lfloor N/2\rfloor+1)$, which is no greater than $\lfloor N/2\rfloor$.

Based on obtained knowledge of the numbers of oscillators in $\mathcal{N}_{f}'$ and $\mathcal{N}_{n}'$, respectively, we can characterize the phases of all oscillators at time instant $t_1'+\epsilon'$. Following the same line of reasoning as in \emph{Scenario 1.3}, one can obtain that all oscillators' phases reside in $[0,\,(1-l/2)\epsilon']$ at time instant $t_1'+\epsilon'$, which means $\delta(t_1'+\epsilon')\leq(1-l/2)\delta(t_1')$. Note that proving such a length decrease of the containing arc requires a careful characterization of phase evolution starting from $t_0'$ to $t_1+\epsilon$, which spans two consecutive firing rounds. After $t_1+\epsilon$, the phase evolution could follow \emph{Scenario 2.1}, \emph{Scenario 2.2}, \emph{Scenario 2.3.1} (in which we can prove such $(1-l/2)$ length decrease after each round of firing) or \emph{Scenario 2.3.2} (in which we can prove such $(1-l/2)$ length decrease after every two consecutive firing rounds).

In summary, we can prove that the length of the containing arc will reduce to $(1-l/2)$ of its original value after every firing round in \emph{Scenarios} \emph{2.1}, \emph{2.2}, and \emph{2.3.1}, whereas in \emph{Scenario 2.3.2}, we can prove such a decrease after every two consecutive firing rounds. Since every oscillator will fire at least once within any time interval of length $3T/2$ according to Lemma \ref{Lemma_2}, we can get that the length of the containing arc $\delta(t)$ will decrease to a value less than $\pi/2$ rad within finite time (in fact, after at most $2m$ firing rounds with $m$ satisfying $(1-l/2)^{m}\delta(t_0)<\pi/2$). And then, the containing arc will keep decreasing to $0$ following the derivations in \emph{Case $1$}.

By combining \emph{Case $1$} and \emph{Case $2$}, one can obtain that $\delta(t)$ will always converge to $0$ under the conditions of Theorem \ref{Theorem_1_WithoutAttack}.
\end{proof}

\begin{Corollary}\label{Corollary_1_2pi_period}
Under conditions in Theorem \ref{Theorem_1_WithoutAttack}, Mechanism $1$ guarantees that all oscillators synchronize with an oscillation period $T=2\pi$ seconds in the absence of attacks.
\end{Corollary}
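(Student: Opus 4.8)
The plan is to build directly on Theorem \ref{Theorem_1_WithoutAttack}, which already guarantees that the containing arc converges to zero, and to show that the synchronized configuration reached in the limit is an invariant of the dynamics whose firing period equals the free-running period $T=2\pi$ seconds. Recall that by Definition $1$ phase synchronization requires \emph{both} that the containing-arc length vanish \emph{and} that all legitimate oscillators fire simultaneously with a fixed period $T=2\pi$. The former is supplied by Theorem \ref{Theorem_1_WithoutAttack}, so the only remaining task is to establish the period claim.

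First I would examine the fully synchronized state in which the containing arc has length zero, so that all oscillators share a common phase and, in particular, reach phase $2\pi$ rad simultaneously and fire together. The crux of the argument is to verify that the pulses exchanged at this common firing instant induce no phase shift. At that instant each oscillator $i$ simultaneously resets its own phase to $0$ and receives pulses from all of its in-neighbors; by Remark \ref{Remark_1} these pulses are processed consecutively, and each is processed while $\phi_i$ equals $2\pi$ (equivalently $0$) rad. Since the PRF in (\ref{PRF}) satisfies $F(2\pi)=F(0)=0$, the induced jump $l\times F(\phi_i)$ in (\ref{PhaseJump}) is exactly zero, so the mutual pulses leave every phase unchanged, regardless of whether the ``cut-off'' conditions b) and c) of Mechanism $1$ happen to be met.

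Having established that the simultaneous firing merely resets all phases to $0$ without perturbation, I would then observe that immediately afterward every oscillator evolves freely at unit speed with no intervening pulses, because no oscillator fires again until the common phase again reaches $2\pi$ rad. Consequently exactly $2\pi$ seconds elapse between two consecutive collective firings, which is precisely the free-running period. This shows the synchronized configuration is invariant under Mechanism $1$ and is traversed with period $T=2\pi$ seconds, completing the argument.

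I expect the main obstacle to be the bookkeeping at the simultaneous-firing instant: one must argue carefully that the only phase values at which the mutual pulses can be processed are $0$ and $2\pi$, so that $F$ vanishes there, and that this conclusion is entirely independent of the cut-off logic. A secondary subtlety is reconciling the asymptotic nature of the convergence in Theorem \ref{Theorem_1_WithoutAttack} with the claim of an exact period; the resolution is that the period statement characterizes the limiting synchronized motion, toward which the free-running segment of each firing round (of length $T-\epsilon$, which approaches $T$ as the arc length $\epsilon\to 0$) drives the collective oscillation.
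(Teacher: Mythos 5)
Your argument is correct and takes essentially the same route as the paper, which simply defers to the reasoning in the proof of Theorem \ref{Theorem_1_WithoutAttack} and omits the details. Your writeup supplies exactly what that deferral leaves implicit: convergence of the containing arc from Theorem \ref{Theorem_1_WithoutAttack}, together with the observation that $F(0)=F(2\pi)=0$ makes the synchronized configuration invariant (independently of the cut-off conditions), so consecutive collective firings are separated by exactly $T=2\pi$ seconds in the limit.
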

\begin{proof}The result can be easily obtained from the reasoning in the proof of Theorem \ref{Theorem_1_WithoutAttack} and hence is omitted.
\end{proof}

\begin{Remark}
Besides enabling attack resilience, Mechanism $1$ also has better robustness against time-varying delays. For example, numerical simulations in Fig. \ref{Sync_error_1} and Fig. \ref{Sync_error_2} show that Mechanism $1$ has much smaller synchronization errors compared with synchronization mechanisms in \cite{klinglmayr2012self,yun2015robustness,wang2018pulse} when the communication is subject to random time-varying delays.
\end{Remark}


\section{Stealthy Byzantine Attacks and Attack Detection Mechanism}

The concept of Byzantine attacks stems from the Byzantine generals problem \cite{lamport1982byzantine}. It was used to describe a traitor commander who sends or relays fake information to other commanders to avoid the loyal ones from reaching agreement \cite{pease1980reaching}. In the case of PCO synchronization, a node compromised by Byzantine attacks can emit malicious pulses at arbitrary time instants. However, given that the purpose of Byzantine attacks is to delay or damage the synchronization of legitimate oscillators, we assume that a compromised oscillator sends malicious pulses only when such pulses can negatively affect the synchronization process of legitimate oscillators, i.e., enlarge the containing arc of affected legitimate oscillators. 

A compromised node decides the timing of its malicious pulses based on information of other oscillator's phases that it can perceive from received pulses. Given that in a general connected PCO network, an oscillator can only receive pulses from its neighbors, a compromised oscillator can only perceive phase information of nodes that it can receive pulses from and decide its optimal attacking strategy accordingly. 

We consider two types of attacks, non-colluding attacks and colluding attacks. In non-colluding attacks, an attacker determines its attacking strategy based on its own neighbors' phase information. In colluding attacks, two attackers can share perceived phase information about each other's neighbors, which is equivalent to expanding the neighbor sets of both attackers to the union of their neighbor sets. The same concept can be extended to three or more colluding attackers.  

Now we proceed to discuss the attacking strategy. If an attacker keeps sending pulses continuously without rest, it can effectively prevent legitimate oscillators from reaching synchronization. However, such attacks are not energy efficient and will also render themselves easily detectable, just as jamming of communication channels being easy to detect, isolate, and remove \cite{xu2005feasibility}. Therefore, we are only interested in ``stealthy" Byzantine attacks, in which attack pulses are emitted in a way that cannot be detected by legitimate oscillators in the pulse-based interaction framework. 

In PCO networks, since all exchanged pulses are identical without embedded content such as source or destination information, conventional content-checking based attack-detection mechanisms such as \cite{lamport1985synchronizing} are inapplicable. We propose to let each oscillator detect potential attacks by monitoring the number of pulses it receives within a certain time interval. The basic rationale is as follows: In a given time interval, if the number of received pulses is greater than the maximally possible number of pulses emitted by all legitimate oscillators, then it is safe to conclude that an attacker is present who injected the superfluous pulses. To this end, we first characterize the number of pulses that an oscillator can receive within a certain time interval in the absence of attacks.
\begin{Lemma}\label{Lemma_3_bound}
For a general connected network of $N$ legitimate PCOs, under Mechanism $1$, an oscillator $i$ can receive at most $d^-(i)$ pulses within any time interval $[t,\,t+T/2]$ for $t\geq0$ where $d^-(i)$ is the indegree of oscillator $i$.
\end{Lemma}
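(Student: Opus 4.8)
The plan is to reduce the claim to a statement about how often a single legitimate oscillator can fire. Since there are no attackers, oscillator $i$ receives pulses only from its $d^-(i)$ in-neighbors, so it suffices to show that each in-neighbor fires at most once in any interval $[t,\,t+T/2]$; summing over the $d^-(i)$ in-neighbors then yields the bound $d^-(i)$.

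The crux is therefore a lower bound on the inter-firing time of any single legitimate oscillator, and I would prove that two consecutive firings of the same oscillator are separated by strictly more than $T/2=\pi$ seconds. The argument rests on the sign of the PRF in (\ref{PRF}): right after a firing the phase is reset to $0$, and to fire again it must evolve up to $2\pi$, hence it must first traverse the whole arc $[0,\pi]$. On $[0,\pi]$ we have $F(\phi)=-\phi\le 0$, so every pulse that takes effect produces a non-positive (backward) jump, while pulses filtered out by the cut-off conditions of Mechanism $1$ (including all pulses before the initial period $T$ has elapsed) have no effect at all. In either case a pulse never advances the phase on $[0,\pi]$, so there the phase can increase only through free evolution at unit rate, and at least $\pi$ seconds must elapse before $\phi$ reaches $\pi$. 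Moreover, exactly at $\phi=\pi$ one still has $F(\pi)=-\pi<0$, so no pulse can carry the oscillator across $\pi$; it can leave $[0,\pi]$ only by free evolution, which forces strictly more than $\pi$ seconds to pass before $\phi$ first exceeds $\pi$, and hence strictly more than $\pi=T/2$ seconds before $\phi$ can reach $2\pi$ and trigger the next firing.

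With the strict inter-firing bound in hand, the counting is immediate: if an oscillator fired twice inside the closed interval $[t,\,t+T/2]$ of length $T/2$, its two firing instants would differ by at most $T/2$, contradicting the strict separation $>T/2$. Hence each in-neighbor contributes at most one pulse to $i$ on $[t,\,t+T/2]$, and oscillator $i$ receives at most $d^-(i)$ pulses, as claimed.

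I expect the delicate point to be precisely the \emph{strictness} of the inter-firing bound, i.e. ruling out an oscillator firing at both endpoints $t$ and $t+T/2$ of the closed interval. This is what pins the count at $d^-(i)$ rather than $2d^-(i)$, and it hinges on the observation that $\phi=\pi$ is itself a point where the PRF is backward-pushing, so the crossing of $\pi$ cannot be instantaneous and must consume a positive amount of free-evolution time beyond the $\pi$ seconds already spent reaching $\pi$. The remaining observations (forward jumps on $(\pi,2\pi]$ cannot shorten the traversal of $[0,\pi]$, and filtered pulses are harmless) are routine once this sign analysis is in place.
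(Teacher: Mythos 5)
Your argument is correct. The paper itself does not give a self-contained proof here: it simply observes that oscillator $i$ has $d^-(i)$ incoming edges and then defers the rest to Theorem~3 of the cited all-to-all work \cite{wang2018pulse}. Your proposal fills in exactly the content that citation hides, and it does so by the natural route: reduce the pulse count to a per-neighbor firing count, and bound the latter by showing that consecutive firings of any one legitimate oscillator are separated by strictly more than $T/2$, because on $[0,\pi]$ the PRF in (\ref{PRF}) is non-positive (and strictly negative at $\phi=\pi$), so the phase can traverse that half-circle only by free evolution at unit rate, while cut-off-filtered pulses and the initial silent period $[0,T]$ only remove phase jumps and so cannot speed this up. Your emphasis on the strictness of the bound is well placed --- it is precisely what rules out two firings at the endpoints of the closed interval $[t,\,t+T/2]$ and pins the count at $d^-(i)$ rather than $2d^-(i)$, and it holds because after reaching $\pi$ the oscillator still needs a positive amount of free evolution before any forward-pushing pulse (on $(\pi,2\pi]$) can act. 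One merit of your self-contained version over the paper's deferral is that Mechanism~$1$ is not the mechanism analyzed in \cite{wang2018pulse} (the cut-off conditions are new), so a direct verification that the added filtering can only suppress jumps --- and hence cannot shorten the inter-firing time --- is genuinely worth having; also note that, like the lemma itself, your argument needs no assumption on the initial containing arc.
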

\begin{proof} Noting that the number of edges entering oscillator $i$ is $d^-(i)$ in the considered general connected PCO network, Lemma \ref{Lemma_3_bound} can be obtained following the same line of reasoning as in Theorem $3$ of \cite{wang2018pulse}.
\end{proof}

Based on Lemma \ref{Lemma_3_bound}, we have, under the pulse-number based detection mechanism, that oscillator $i$'s receiving more than $d^-(i)$ pulses within an arbitrary time interval $[t,\,t+T/2]$ implies the presence of attackers among its neighbors. Therefore, to keep stealthy, one compromised oscillator should launch stealthy attacks by sending pulses with a time separation over $T/2$ seconds.

From the above analysis, we summarize the attacking models as follows:

In non-colluding attacks, a Byzantine attacker emits an attack pulse only when the pulse can enlarge the containing arc of its neighbors. In addition, to keep stealthy, every individual attacker sends malicious pulses with a time separation over $T/2$ seconds.

In colluding attacks, a Byzantine attacker emits an attack pulse either when the pulse  can enlarge the containing arc of the union set of colluding attackers' neighbor sets, or  when the pulse can help other attack pulse to do so.

\section{Synchronization of PCO Networks under Stealthy Byzantine Attacks}

In this section, we address the synchronization of general connected PCO networks in the presence of stealthy Byzantine attacks. Among $N$ PCOs, we assume that $M$ are compromised and act as stealthy Byzantine attackers. We first show that the proposed pulse-based synchronization mechanism (Mechanism $1$) can synchronize legitimate oscillators when attackers do not collude, i.e., every attacker determines its attacking strategy based on its own neighbors' phase information. Then we further prove that all legitimate oscillators can still be synchronized even when attackers collude with each other, i.e., attackers can exchange phase information of their neighbors. To this end, we first analyze the phase evolution of legitimate oscillators in the presence of non-colluding attackers.

\begin{Lemma}\label{Lemma_4_encounter_attack}
For a general connected network of $N$ PCOs, within which $M\leq2\times\lfloor(d-\lfloor N/2\rfloor )/4\rfloor$ oscillators are compromised non-colluding attackers launching attacks following the stealthy Byzantine attack model in Section \uppercase\expandafter{\romannumeral4}, if the initial length of the containing arc of legitimate oscillators is less than $\pi$ and $d>\lfloor N/2 \rfloor$, then under Mechanism $1$, the $N-M$ legitimate oscillators encounter attack pulses only when their phases reside partially in $[0,\,\pi)$, partially in $(\pi, 2\pi]$ with phase $2\pi$ belonging to the containing arc.
\end{Lemma}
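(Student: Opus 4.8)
The plan is to confine attack-induced phase jumps by means of the two cut-off thresholds of Mechanism~1. First I would partition the phase space exactly as in the proof of Lemma~\ref{Lemma_2}: whenever a pulse is processed the legitimate phases are either (i) all in $[0,\pi]$, (ii) split with the containing arc covering $\pi$, (iii) all in $(\pi,2\pi]$, or (iv) split with the containing arc covering $2\pi$. Configuration (iv) is the one named in the statement, so the lemma is equivalent to the claim that in (i)--(iii) condition b) or condition c) of Mechanism~1 fails at every legitimate receiver, so that no attack pulse can trigger a jump. The only facts about legitimate firings I would use, all inherited from Lemmas~\ref{Lemma_1}--\ref{Lemma_2} and the firing-round analysis of Theorem~\ref{Theorem_1_WithoutAttack}, are that legitimate oscillators emit pulses only inside a ``firing round'' and that such a round is in progress exactly when the containing arc is crossing phase $2\pi$, i.e.\ in configuration (iv); in particular configurations (i)--(iii) together form the \emph{silent} intervals during which no legitimate oscillator fires.

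The structural key is an immediate consequence of the attack model: because a non-colluding attacker emits a pulse only when that pulse enlarges its neighbours' containing arc, it never emits a pulse that fails to trigger a jump, so it cannot spend ``priming'' pulses merely to inflate the tally that condition b) reads. I would combine this with the hypothesis $M\le 2\lfloor(d-\lfloor N/2\rfloor)/4\rfloor$, which forces $\lambda_i\ge 1$ at every oscillator as soon as a single attacker is present (if $\lfloor(d-\lfloor N/2\rfloor)/4\rfloor=0$ the bound gives $M=0$ and the lemma is vacuous). Now take the \emph{earliest} attack pulse, at a time $t^\star$, inside any silent interval. For it to be emitted, condition b) must already hold at $t^\star$, so at least $\lambda_i\ge 1$ pulses were received in $(t^\star-T/4,t^\star]$. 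Since a silent interval contains no legitimate firings and, by minimality, no earlier attack pulses, every such pulse must originate from the preceding firing round; hence that round ended within $T/4$ of $t^\star$.

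It remains to eliminate attacks in this ``just after a completed round'' window, and here condition c) takes over. If the preceding round ended within the last $T/4$ seconds then, because a round lasts less than $T/2$ (its duration is the arc length at the onset, which is below $\pi$), the entire round lies inside the current $3T/4$ window, and the receiver has by then absorbed its whole burst: at least $d^-(i)-M$ of its legitimate in-neighbours fired into the window. Because $d^-(i)\ge d(i)$ and $M\le 2\lfloor(d-\lfloor N/2\rfloor)/4\rfloor\le 2\lambda_i$, this tally is at least $d(i)-2\lambda_i=\bar\lambda_i$, so condition c) is violated and the pulse is filtered out. Therefore no attack pulse can clear both cut-offs during a silent interval, which forces every attack-induced jump to occur while a firing round is in progress — configuration (iv) — as asserted. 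The argument does not rule out attacks during an \emph{in-progress} round, consistent with (iv) being the exceptional case: there the targeted oscillator has received only the partial burst accumulated so far, leaving condition c) satisfiable.

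The hard part will be making the firing-round structure rigorous in the presence of attackers and nailing down the saturation count for condition c). Concretely, I expect the main effort to be verifying that, despite up to $M$ compromised in-neighbours and the jumps that attacks inject during configuration (iv), a completed round still deposits at least $\bar\lambda_i$ legitimate pulses inside the $3T/4$ window of every legitimate receiver, and that the silent-interval/firing-round alternation inherited from the attack-free analysis persists once attacks are switched on. Once these bookkeeping facts are in place, the four-configuration partition and the no-priming consequence of the attack model make the rest routine.
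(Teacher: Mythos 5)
Your proposal is correct and follows essentially the same route as the paper's proof: the same four-configuration partition, the same ``no-priming'' consequence of the non-colluding attack model, and the same use of conditions b) and c) --- via the counts $M\le 2\lambda_i$ and $d(i)-M\ge\bar\lambda_i$ and the fact that a firing round lasts less than $T/2$ --- to rule out attack-induced jumps outside configuration (iv). The only difference is organizational: you unify Scenarios I--III into a single ``earliest attack pulse in a silent interval'' contradiction, whereas the paper argues scenario by scenario, but the underlying estimates are identical.
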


\begin{proof}
According to Mechanism $1$, all legitimate oscillators will evolve freely for an entire period $T=2\pi$. Since the initial length of the containing arc is assumed to be less than $\pi$, the possible phase distribution of all legitimate oscillators immediately after the initial period of free evolution can only fall within one of the following four scenarios, as depicted in Fig. \ref{keep_firing}:

\begin{itemize}
\item[I)] all legitimate oscillators' phases reside in $[0,\,\pi]$;
\item[II)] legitimate oscillators' phases reside partially in $(0,\,\pi]$, partially in $(\pi,\,2\pi]$ with phase $\pi$ belonging to the containing arc;
\item[III)] all legitimate oscillators' phases reside in $(\pi,\,2\pi]$;
\item[IV)] legitimate oscillators' phases reside partially in $[0,\,\pi)$, partially in $(\pi,\,2\pi]$ with phase $2\pi$ belonging to the containing arc.
\end{itemize}

Since in non-colluding attacks, an attacker will emit an attack pulse only when the pulse can enlarge the containing arc of its legitimate neighbors, every attack pulse will trigger a phase shift on at least one legitimate oscillator. Next, we prove that an attacker can trigger a legitimate oscillator (say oscillator $j$) to jump in phase only under Scenario \uppercase\expandafter{\romannumeral4}).

\begin{itemize}
\item[I)] All legitimate oscillators' phases reside in $[0,\,\pi]$. Without loss of generality, we assume that legitimate oscillator $k$ fires last among all legitimate oscillators at time instant $t_k$. One can easily get that all legitimate oscillators fired in the past $T/2$ seconds prior to $t_k$. Recalling $d\triangleq\min_{i=1,2,\cdots,N}\{d(i)\}$, we have $M\leq2\times\lfloor(d-\lfloor N/2\rfloor )/4\rfloor\leq2\times\lfloor(d(i)-\lfloor N/2\rfloor )/4\rfloor=2\lambda_i$. Hence, immediately after the firing of oscillator $k$, legitimate oscillator $i$ has received at least $d(i)-M\geq\bar{\lambda}_i$ legitimate pulses during $[t_k-T/2,\,t_k]$ for $i\in\mathcal{N}_L$ where $\mathcal{N}_L$ is the index set of all legitimate oscillators. According to Mechanism $1$, if legitimate oscillator $i$ received no less than $\bar{\lambda}_i$ pulses within the past $3T/4$, no pulse can trigger oscillator $i$ to jump in phase. Hence, immediately after the firing of legitimate oscillator $k$, all legitimate oscillators will evolve freely for $T/4$ and no pulses can trigger a legitimate oscillator to jump in phase within this period. After this quarter period, legitimate oscillators will not emit pulses before the network shifts to Scenario \uppercase\expandafter{\romannumeral2}) and the number of attacker pulses is not enough to trigger a legitimate oscillator to jump in phase. Given that an attacker sends pulses only when the containing arc of its legitimate neighbors can be enlarged, no attack pulse will be emitted in this scenario.

\item[II)] Legitimate oscillators' phases reside partially in $(0,\,\pi]$, partially in $(\pi,\,2\pi]$ with phase $\pi$ belonging to the containing arc. Following the same line of reasoning as in Scenario \uppercase\expandafter{\romannumeral1}), one can get that no legitimate oscillators reach phase $2\pi$ and fire in this scenario. Because no attack pulse can shift the phase of a legitimate oscillator, no attacker will emit attack pulses in this scenario.

\item[III)] All legitimate oscillators' phases reside in $(\pi,\,2\pi]$. One can get that no legitimate oscillators fire in the past $T/4$. Since the number of attacker pulses is not enough to trigger a legitimate oscillator to jump in phase, no attacker will emit attack pulses in this scenario.

\item[IV)] Legitimate oscillators' phases reside partially in $[0,\,\pi)$, partially in $(\pi,\,2\pi]$ with phase $2\pi$ belonging to the containing arc. One can get that a portion of legitimate oscillators fired in the past $T/4$ in this scenario. So an attacker may be able to emit an attack pulse at a right time instant to trigger legitimate neighbors to jump in phase and enlarge the containing arc of its legitimate neighbors.
\end{itemize}

By iterating the above analysis, we can get that an attacker will emit an attack pulse to shift the phase of a legitimate oscillator only when legitimate oscillators' phases reside partially in $[0, \pi)$, partially in $(\pi, 2\pi]$ with phase $2\pi$ rad belonging to the containing arc.
\end{proof}

Next, we establish the synchronization condition for general connected PCO networks in the presence of non-colluding stealthy Byzantine attackers.

\begin{Theorem}\label{Theorem_2_WithAttacks_Noclue}
For a general connected network of $N$ PCOs, within which $M\leq2\times\lfloor (d-\lfloor N/2\rfloor)/4\rfloor$ oscillators are compromised non-colluding attackers launching attacks following the stealthy Byzantine attack model in Sec \uppercase\expandafter{\romannumeral4}, if the initial length of the containing arc of legitimate oscillators is less than $\pi$ rad and $d>\lfloor N/2\rfloor$, then the containing arc of legitimate oscillators will converge to zero under Mechanism $1$.
\end{Theorem}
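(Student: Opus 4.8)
The plan is to follow the architecture established in the proof of Theorem \ref{Theorem_1_WithoutAttack}, but now account for attack pulses using Lemma \ref{Lemma_4_encounter_attack} as the crucial localization tool. First I would invoke Lemma \ref{Lemma_4_encounter_attack} to confine the effect of attackers: the containing arc of legitimate oscillators is non-increasing whenever all oscillators' phases reside outside Scenario IV, since attack pulses can shift a legitimate oscillator's phase \emph{only} in Scenario IV (legitimate phases straddle $[0,\pi)$ and $(\pi,2\pi]$ with phase $2\pi$ in the containing arc). Thus the containing arc of legitimate oscillators stays below $\pi$ for all $t\geq 0$, exactly as in Lemma \ref{Lemma_1}, and I would need to check that the non-increase argument of Lemma \ref{Lemma_1} (forward jumps on $(\pi,2\pi]$, backward jumps on $[0,\pi)$) is unaffected by attack pulses, which it is, since an attack pulse triggering a jump still respects the sign of the PRF.

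Next I would re-run the two-case analysis ($0\leq\epsilon<\pi/2$ and $\pi/2\leq\epsilon<\pi$) from Theorem \ref{Theorem_1_WithoutAttack}, tracking the containing arc $\delta(t)$ over firing rounds. The key point is that in each round, once the starting point of the containing arc reaches phase $2\pi$ at $t_1$, all legitimate oscillators lie in $[2\pi-\epsilon,2\pi]$ and evolve toward firing. The counting arguments distinguishing $\mathcal{N}_f$ and $\mathcal{N}_n$ must be redone to absorb the $M$ attack pulses. Specifically, where the no-attack proof showed that a non-jumping oscillator $j'$ forces at least $\lfloor N/2\rfloor+1$ oscillators into $\mathcal{N}_n$ (using $4\lambda_{j'}\leq d(j')-\lfloor N/2\rfloor$), here each legitimate oscillator $i$ still needs $d(i)$ pulses to fire, but up to $M\leq 2\lambda_i$ of the received pulses could be attack pulses. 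I would re-derive the lower bound on $|\mathcal{N}_n|$ accounting for this: the budget built into the definitions of $\lambda_i$ and $\bar\lambda_i$ in (\ref{lower_bound}) and (\ref{upper_bound}) is precisely chosen so that $d(i)-2\lambda_i=\bar\lambda_i$ legitimate pulses always suffice and remain below the cut-off threshold, preserving the conclusion that $\mathcal{N}_f$ has at most $\lfloor N/2\rfloor$ oscillators.

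The main obstacle will be \emph{Scenario 2.3.2}, carried over from the no-attack proof, where length reduction cannot be established within a single firing round and the analysis must span two consecutive rounds. Here I must verify that the interval inclusion $(t_1-3T/4,\,t_0]\subseteq(t_1'+\epsilon'/2,\,t_1'+\epsilon']$ still holds and that the firing-count lower bound $\bar\lambda_{j'}-\lambda_{j'}=d(j')-3\lambda_{j'}\geq\lfloor N/2\rfloor+1$ survives when attack pulses are present. The delicate issue is that an attacker, operating only in Scenario IV per Lemma \ref{Lemma_4_encounter_attack}, could still inject pulses during the critical $(t_1+\epsilon/2,\,t_1+\epsilon]$ window where legitimate oscillators in $\mathcal{N}_f$ are being pushed to jump; I would argue that since attack pulses on $[0,\pi)$ only induce backward (non-positive) jumps by the PRF, they cannot prevent the $(1-l/2)$ contraction and in the worst case merely contribute harmlessly to the jump. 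By the stealthiness constraint (time separation over $T/2$), the attacker cannot flood the detection window, so the counting bounds tied to $\bar\lambda_i$ remain valid.

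Finally, combining the per-round (Scenarios 2.1, 2.2, 2.3.1) and per-two-round (Scenario 2.3.2) contraction with Lemma \ref{Lemma_2}'s guarantee that every oscillator fires within any $3T/2$ window, I would conclude that $\delta(t)$ decreases below $\pi/2$ in finite time (reducing Case 2 to Case 1) and thereafter converges to $0$, establishing synchronization of the legitimate oscillators under the stated attack model.
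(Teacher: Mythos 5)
Your proposal follows essentially the same architecture as the paper's proof: Lemma \ref{Lemma_4_encounter_attack} to localize attack pulses to the Scenario-IV configuration, a Lemma \ref{Lemma_1}-style argument for non-increase of the legitimate containing arc, and then a re-run of the Case 1/Case 2 scenario analysis from Theorem \ref{Theorem_1_WithoutAttack} in which the attacker budget $M\leq 2\lfloor(d-\lfloor N/2\rfloor)/4\rfloor$ is absorbed into the $\lambda_i$, $\bar\lambda_i$ counting bounds, with the two-consecutive-round contraction reserved for the analogue of \emph{Scenario 2.3.2}. The paper's Part I/Part II split and its invocation of the Theorem \ref{Theorem_1_WithoutAttack} scenarios match your plan step for step, and your explicit identification of where the $M\leq 2\lambda_i$ slack enters the firing-count argument is exactly the point the paper relies on.
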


\begin{proof} We divide the proof into two parts. In Part \uppercase\expandafter{\romannumeral1}, we prove that the length of the containing arc of legitimate oscillators is non-increasing. In Part \uppercase\expandafter{\romannumeral2}, we prove that it converges to $0$. 
	
Part \uppercase\expandafter{\romannumeral1} \emph{(The length of the containing arc of legitimate oscillators is non-increasing)}: It can be easily inferred that the length of the containing arc of legitimate oscillators remains unchanged if no legitimate oscillators jump in phase. So we only consider the case where a pulse (from either a legitimate oscillator or an attacker) triggers a phase jump on a legitimate oscillator.

As no legitimate oscillators will be triggered to jump in phase in the first free-running period, we only consider pulses sent after $t=T$. We will show that for any pulse sent at $t_i>T$, the length of the containing arc of legitimate oscillators is non-increasing. 
	
When the pulse is from a legitimate oscillator $i$, we have $\phi_i(t_i)=2\pi$, i.e., at $t_i$ the containing arc of legitimate oscillators includes phase $2\pi$ rad. Following the same line of reasoning as in Lemma \ref{Lemma_1}, one can obtain that the pulse cannot increase the length of the containing arc of legitimate oscillators.
	
When the pulse is from an attacker, according to Lemma \ref{Lemma_4_encounter_attack}, the pulse can only be sent when legitimate oscillators' phases reside partially in $[0, \pi)$, partially in $(\pi, 2\pi]$ with phase $2\pi$ rad belonging to the containing arc. Following the same line of reasoning as in Scenario $c)$ of Lemma \ref{Lemma_1}, one can obtain that the length of the containing arc of all legitimate oscillators cannot be increased by the attack pulse, although the containing arc of a subset of legitimate oscillators (an attacker's neighbor set) will be enlarged, as confirmed later in the numerical simulations in Fig. \ref{Attack_N_known_No_clue}. Hence, we can conclude that the length of the containing arc of all legitimate oscillators is non-increasing.

Part \uppercase\expandafter{\romannumeral2} \emph{(The length of the containing arc of legitimate oscillators converges to $0$)}: First, we prove that every legitimate oscillator will fire at least once within any time interval of length $3T/2$. According to the argument in Lemma \ref{Lemma_4_encounter_attack}, attack pulses will only be emitted when legitimate oscillators' phases reside partially in $[0,\,\pi)$, partially in $(\pi,\,2\pi]$ with phase $2\pi$ rad belonging to the containing arc. Following the same line of reasoning as in Lemma \ref{Lemma_2}, we can easily get that every legitimate oscillator will fire at least once within any time interval of length $3T/2$.

Next, we prove that the length of the containing arc of legitimate oscillators will decrease to $0$. Without loss of generality, we denote $\delta(t)$ as the length of the containing arc of legitimate oscillators at $t$ and set the initial time to $t=0$. According to the argument in Part \uppercase\expandafter{\romannumeral1}, we have that $\delta(t)$ is non-increasing and $0\leq\delta(t)<\pi$ for $t\geq 0$. Since every legitimate oscillator will fire at least once within any time interval of length $3T/2$, there exists a time instant $t_0>2T$ at which the ending point of the containing arc of legitimate oscillators resides at phase $0$. Denoting the starting point of the containing arc at $t_0$ as $0\leq\epsilon<\pi$, we have $\delta(t_0)=\epsilon$. Next, we separately discuss the $0\leq \epsilon<\pi/2$ case and the $\pi/2\leq\epsilon<\pi$ case to prove the convergence of $\delta(t)$ to $0$.

\emph{Case} \uppercase\expandafter{\romannumeral1}~($0\leq\epsilon<\pi/2$): If $\epsilon$ is $0$, the network is synchronized. So we only consider $0<\delta(t_0)<\pi/2$. At time instant $t_0$, the ending and starting points of the containing arc of legitimate oscillators reside on phases $0$ and $0<\epsilon<\pi/2$ rad, respectively. According to Lemma \ref{Lemma_4_encounter_attack}, attack pulses are emitted only when legitimate oscillators' phases reside partially in $[0,\,\pi)$, partially in $(\pi,\,2\pi]$ with phase $2\pi$ rad belonging to the containing arc. So after $t_0$, all legitimate oscillators will evolve freely without perturbation for exactly $T-\epsilon>3T/4$ seconds before the starting point of the containing arc reaches phase $2\pi$ rad at time $t_1=t_0+T-\epsilon$. At $t_1$, the ending point of the containing arc resides on phase $2\pi-\epsilon$ rad and we have $\delta(t_1)=\delta(t_0)=\epsilon$. Given that the PRF in (\ref{PRF}) is non-negative on $[2\pi-\epsilon,\,2\pi]$, a pulse can only trigger a forward jump or have no effect on a legitimate oscillator with phase residing in $[2\pi-\epsilon,\,2\pi]$. All legitimate oscillators will reach phase $2\pi$ rad and fire no later than $t_1+\epsilon$ and within $[t_1,\,t_1+\epsilon/2]$, we can only have one of the following three scenarios:

\leftmargini=22mm
\begin{enumerate}
	\item[\emph{Scenario \uppercase\expandafter{\romannumeral1}.1:}] all legitimate oscillators fired within $[t_1,\,t_1+\epsilon/2]$;
	
	\item[\emph{Scenario \uppercase\expandafter{\romannumeral1}.2:}] some legitimate oscillators did not fire within $[t_1,\,t_1+\epsilon/2]$ but all of these legitimate oscillators jumped in phases within $[t_1,\,t_1+\epsilon/2]$;
	
	\item[\emph{Scenario \uppercase\expandafter{\romannumeral1}.3:}] some legitimate oscillators neither fired nor jumped in phase within $[t_1,\,t_1+\epsilon/2]$.
\end{enumerate}

Next, we prove $\delta(t_1+\epsilon)\leq(1-l/2)\delta(t_1)$ in all above three scenarios, based on which we can further prove such a length decrease of containing arc of legitimate oscillators after each round of firing and hence the convergence of $\delta(t)$ to zero.

Following the same line of reasoning as in \emph{Scenarios 1.1}, \emph{1.2}, and \emph{1.3} of Theorem \ref{Theorem_1_WithoutAttack} and using the fact that the number of attackers $M$ is no greater than $2\times\lfloor (d-\lfloor N/2\rfloor)/4\rfloor$, we can obtain $\delta(t_1+\epsilon)\leq(1-l/2)\delta(t_1)$ in \emph{Scenarios} \emph{\uppercase\expandafter{\romannumeral1}.1}, \emph{\uppercase\expandafter{\romannumeral1}.2}, and \emph{\uppercase\expandafter{\romannumeral1}.3}, respectively. At $t_1+\epsilon$, all legitimate oscillators reside in $[0,\pi]$ and will evolve towards phase $2\pi$ rad and fire. By repeating the above analyses, we can get that the length of the containing arc of legitimate oscillators $\delta(t)$ will decrease to a value no greater than $(1-l/2)\delta(t)$ after each round of firing until it converges to $0$.

\emph{Case} \uppercase\expandafter{\romannumeral2}~($\pi/2\leq\epsilon<\pi$): Similar to the reasoning in \emph{Case} \uppercase\expandafter{\romannumeral1}, there exists a time instant $t_0>2T$ at which the ending and starting points of the containing arc of legitimate oscillators reside on phases $0$ and $\pi/2\leq\epsilon<\pi$ rad, respectively. After $t_0$, all legitimate oscillators will evolve freely for exactly $T-\epsilon>T/2$ seconds before the starting point of the containing arc of legitimate oscillators reaches phase $2\pi$ rad at time $t_1=t_0+T-\epsilon$. At $t_1$, the ending point of the containing arc resides on phase $2\pi-\epsilon$ rad and we have $\delta(t_1)=\delta(t_0)=\epsilon$. As the PRF in (\ref{PRF}) is non-negative on $[2\pi-\epsilon,\,2\pi]$, a pulse can only trigger a forward jump or have no effect on a legitimate oscillator with phase in $[2\pi-\epsilon,\,2\pi]$. So all legitimate oscillators will reach phase $2\pi$ rad and fire no later than $t_1+\epsilon$ and within $[t_1,\,t_1+\epsilon/2]$, we can only have one of the following three scenarios:
\leftmargini=22mm
\begin{enumerate}
	\item[\emph{Scenario \uppercase\expandafter{\romannumeral2}.1:}] all legitimate oscillators fired within $[t_1,\,t_1+\epsilon/2]$;
	
	\item[\emph{Scenario \uppercase\expandafter{\romannumeral2}.2:}] some legitimate oscillators did not fire within $[t_1,\,t_1+\epsilon/2]$ but all of these legitimate oscillators jumped in phase within $[t_1,\,t_1+\epsilon/2]$;
	
	\item[\emph{Scenario \uppercase\expandafter{\romannumeral2}.3:}] some legitimate oscillators neither fired nor jumped in phase within $[t_1,\,t_1+\epsilon/2]$.
\end{enumerate}	

Next, we show that $\delta(t)$ will reduce to less than $\pi/2$ rad in finite time, i.e., \emph{Case} \uppercase\expandafter{\romannumeral2} will shift to \emph{Case} \uppercase\expandafter{\romannumeral1} in finite time, after which  $\delta(t)$ will convergence to zero, as ready proven in \emph{Case} \uppercase\expandafter{\romannumeral1}.

Following the same line of reasoning as in \emph{Scenario 2.1} and \emph{Scenario 2.2} of Theorem $1$, one can obtain $\delta(t_1+\epsilon)\leq(1-l/2)\delta(t_1)$ in \emph{Scenario \uppercase\expandafter{\romannumeral2}.1} and \emph{Scenario \uppercase\expandafter{\romannumeral2}.2}, respectively. For \emph{Scenario \uppercase\expandafter{\romannumeral2}.3}, i.e., some legitimate oscillators neither fired nor jumped in phase within $[t_1,\,t_1+\epsilon/2]$, we assume legitimate oscillator $j'$  is such an oscillator. According to Mechanism $1$, there could be two reasons for the not firing of oscillator $j'$ in $[t_1,\,t_1+\epsilon/2]$:
\leftmargini=23.8mm
\leftmarginii=23.8mm
\begin{enumerate}
	\item[\emph{Scenario \uppercase\expandafter{\romannumeral2}.3.1:}] legitimate oscillator $j'$ receives no greater than $\lambda_{j'}$ pulses within $[t_1,\,t_1+\epsilon/2]$, i.e., condition $b)$ of Mechanism $1$ is not satisfied;
	
	\item[\emph{Scenario \uppercase\expandafter{\romannumeral2}.3.2:}] legitimate oscillator $j'$ receives over $\lambda_{j'}$ pulses within $[t_1,\,t_1+\epsilon/2]$, but the number of pulses it received within the past period of length $3T/4$ is no less than $\bar\lambda_{j'}$, i.e., condition $c)$ of Mechanism $1$ is not satisfied.
\end{enumerate}

Still following the same line of reasoning as in \emph{Scenario 2.3.1} and \emph{Scenario 2.3.2} of Theorem $1$ and using the fact that the number of attackers $M$ is no greater than $2\times\lfloor (d-\lfloor N/2\rfloor)/4\rfloor$, we can obtain in \emph{Scenario \uppercase\expandafter{\romannumeral2}.3.1} that the length of the containing arc of legitimate oscillators will reduce to $(1-l/2)$ of its original value after every firing round whereas in \emph{Scenario \uppercase\expandafter{\romannumeral2}.3.2} such a reduction occurs after every two consecutive firing rounds.

Since every legitimate oscillator will fire at least once within any time interval of length $3T/2$ according to the reasoning at the beginning of Part \uppercase\expandafter{\romannumeral2}, we can get that the length of the containing arc of legitimate oscillators $\delta(t)$ will always decrease to a value less than $\pi/2$ rad within finite time (in fact, after at most $2m$ firing rounds with $m$ satisfying $(1-l/2)^{m}\delta(t_0)<\pi/2$), after which it will converge to zero according to the argument in \emph{Case} \uppercase\expandafter{\romannumeral1}.

By combining \emph{Case} \uppercase\expandafter{\romannumeral1} and \emph{Case} \uppercase\expandafter{\romannumeral2}, one can obtain that the containing arc of legitimate oscillators $\delta(t)$ will always converge to $0$ even in the presence of attackers.
\end{proof}

\begin{Corollary}\label{Corollary_2_2pi_period}
Under conditions in Theorem \ref{Theorem_2_WithAttacks_Noclue}, Mechanism $1$ guarantees that all legitimate oscillators synchronize with an oscillation period $T=2\pi$ seconds even in the presence of attacks. 
\end{Corollary}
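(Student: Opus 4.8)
The plan is to mirror the (omitted) argument behind Corollary~\ref{Corollary_1_2pi_period} and graft onto it a single extra step that neutralizes the attackers once synchronization sets in. First I would invoke Theorem~\ref{Theorem_2_WithAttacks_Noclue} directly: it already guarantees that the length of the containing arc of the legitimate oscillators converges to $0$, so the only thing left to certify is that the common phase reached in the limit advances with collective period exactly $T=2\pi$ seconds, uncorrupted by attack pulses.

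The key observation I would use is Lemma~\ref{Lemma_4_encounter_attack}: a non-colluding stealthy attacker can trigger a phase jump on a legitimate oscillator only when the legitimate phases are split, residing partially in $[0,\pi)$ and partially in $(\pi,2\pi]$ with phase $2\pi$ lying in the containing arc. Once the containing arc has collapsed to $0$, all legitimate oscillators occupy a single common phase point; as this point sweeps clockwise from $0$ through $2\pi$, the legitimate phases are at every instant either entirely in $[0,\pi]$ or entirely in $(\pi,2\pi]$, and hence never meet the split configuration required by Lemma~\ref{Lemma_4_encounter_attack}. Consequently, in the synchronized regime no attack pulse can alter a legitimate oscillator's phase.

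With the attackers thus rendered ineffective, each legitimate oscillator simply runs freely from phase $0$ to $2\pi$ at unit speed before firing, which takes precisely $2\pi$ seconds; since they all fire together, the collective oscillation period is exactly $T=2\pi$. To make the period claim rigorous as an asymptotic statement (Theorem~\ref{Theorem_2_WithAttacks_Noclue} yields convergence rather than finite-time synchronization), I would track the inter-firing interval along the contraction argument of Theorem~\ref{Theorem_2_WithAttacks_Noclue}: in each firing round the oscillators evolve freely for $T-\epsilon$ seconds and then fire within a window of width at most $\epsilon$, so as the arc length $\epsilon\to0$ the interval between consecutive collective firings tends to $T$. Combining this shrinking firing window with the attacker-silence established above shows the synchronized period equals $2\pi$ exactly, which is precisely the distinguishing feature advertised against prior work whose collective period is shifted by attack pulses.

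The step I expect to be the main obstacle is the clean invocation of Lemma~\ref{Lemma_4_encounter_attack} to exclude attacker interference in the (near-)synchronized state: one must argue that the attack-enabling configuration---phases straddling $\pi$ with $2\pi$ in the arc---is genuinely incompatible with a vanishing containing arc, and handle the boundary instants at which the common phase crosses $\pi$ or $2\pi$. The remaining free-evolution timing is routine, exactly as in the omitted proof of Corollary~\ref{Corollary_1_2pi_period}.
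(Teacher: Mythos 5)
Your proposal is correct and follows essentially the same route as the paper: the paper's own (two-sentence) proof likewise invokes Theorem~\ref{Theorem_2_WithAttacks_Noclue} to get the containing arc shrinking to zero and then asserts that, once synchronized, attack pulses can no longer affect legitimate phases---which is exactly what you justify explicitly via Lemma~\ref{Lemma_4_encounter_attack}---so the free-running period $T=2\pi$ is preserved. Your added care about the asymptotic (rather than finite-time) nature of the convergence and the shrinking firing window is a detail the paper glosses over, but it does not change the argument.
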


\begin{proof}
According to the proof of Theorem \ref{Theorem_2_WithAttacks_Noclue}, we know that despite the presence of attacks, the containing arc of legitimate oscillators will shrink to $0$ upon which the phases of legitimate oscillators will not be affected by attack pulses. Therefore, Mechanism $1$ can guarantee the $T=2\pi$ seconds oscillation period even in the presence of attacks.  
\end{proof}

Next, we prove that Mechanism $1$ can guarantee synchronization of general connected PCO network even when attackers collude with each other and exchange perceived phase information of their neighbors. In this situation, an attacker will emit a malicious pulse either when the pulse can enlarge the containing arc of the union set of colluding attackers' neighbor sets, or when the pulse can help other attack pulse to do so. 

To facilitate the analysis, we first characterize the phase evolution of legitimate oscillators in the presence of colluding attackers. 

\begin{Lemma}\label{Lemma_5_encounter_attack}
For a general connected network of $N$ PCOs, within which $M\leq\lfloor(d-\lfloor N/2\rfloor )/4\rfloor$ oscillators are compromised colluding attackers launching attacks following the stealthy Byzantine attack model in Section \uppercase\expandafter{\romannumeral4}, if the initial length of the containing arc is less than $\pi$ rad and $d>\lfloor N/2 \rfloor$, then under Mechanism $1$, the $N-M$ legitimate oscillators will encounter attack pulses only when their phases reside partially in $[0,\,\pi)$, partially in $(\pi,\,2\pi]$ with phase $2\pi$ rad belonging to the containing arc.
\end{Lemma}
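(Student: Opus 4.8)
The plan is to follow the four-scenario architecture of Lemma~\ref{Lemma_4_encounter_attack} and show that the same conclusion survives the stronger colluding adversary. After the initial free-running period, the phases of the legitimate oscillators fall into one of Scenarios I)--IV) depicted in Fig.~\ref{keep_firing}, and I would prove that in Scenarios I), II), and III) \emph{no} attack pulse can trigger a phase jump on any legitimate oscillator, so that attacks can matter only in Scenario IV). The one genuinely new feature relative to the non-colluding case is that a colluding attacker may fire a ``helping'' pulse that does not by itself enlarge the union arc; accordingly, rather than arguing that such pulses are never emitted, I would argue directly that the cut-off rule (conditions b) and c) of Mechanism~$1$) filters out \emph{every} attacker pulse, direct or helping, outside Scenario IV).

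The core of the argument is a two-sided pulse count, exploiting that each attacker must keep a firing separation exceeding $T/2$ to stay stealthy (cf. Lemma~\ref{Lemma_3_bound}), so any legitimate oscillator $i$ receives at most $M$ attacker pulses in any window of length $\le T/2$. I would then read the cut-off as a band-pass filter: condition b) blocks a jump when too few pulses have arrived in the last $T/4$, while condition c) blocks it when at least $\bar\lambda_i$ pulses have arrived in the last $3T/4$. In Scenario I), letting oscillator $k$ fire last at $t_k$, all legitimate oscillators fire within $(t_k-T/2,\,t_k]$, so each legitimate $i$ has accumulated at least $d(i)-M$ legitimate pulses in its $3T/4$-window; since $M\le\lambda\le\lambda_i$ gives $d(i)-M\ge d(i)-2\lambda_i=\bar\lambda_i$, condition c) is violated and no pulse can trigger $i$. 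In Scenarios II) and III) no legitimate oscillator has reached $2\pi$ in the recent past, so every pulse a legitimate oscillator sees in a $T/4$-window is an attacker pulse, of which there are at most $M\le\lambda_i$; since condition b) demands strictly more than $\lambda_i$ recent pulses, it fails and again no jump is triggered. This reduces everything to verifying, scenario by scenario, that the counts stay on the correct side of the thresholds $\lambda_i$ and $\bar\lambda_i$.

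The delicate point, and the place where the halved budget $M\le\lfloor(d-\lfloor N/2\rfloor)/4\rfloor$ becomes essential, is the hand-off between the two blocking mechanisms as the oscillators sweep from Scenario I) through III) toward IV). While the last batch of legitimate firings is still within the $3T/4$ window, condition c) does the blocking with comfortable slack. But once those firings age out of that window (which happens as the legitimate phases sweep into $(3\pi/2,\,2\pi]$, around the Scenario III)-to-IV) transition), condition c) no longer protects $i$ and the burden shifts entirely to condition b). At that instant the only recent pulses are from attackers, and because colluding attackers can concentrate all of their stealthy pulses onto a single legitimate oscillator, the worst-case count is exactly $M$; keeping this below the triggering threshold $\lambda_i$ is what forces $M\le\lambda$, as opposed to the $M\le2\lambda$ that condition c) alone tolerated in Lemma~\ref{Lemma_4_encounter_attack}. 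I expect the main obstacle to be making this hand-off rigorous: one must track the aging of legitimate pulses through the $3T/4$ window across the scenario transitions, confirm that the window in which \emph{neither} condition~b) nor condition~c) blocks coincides with Scenario IV), and simultaneously check that no ``helping'' pulse can open a triggering opportunity in Scenarios I)--III). Once this is established, iterating the scenario analysis exactly as in the proof of Lemma~\ref{Lemma_4_encounter_attack} yields that the legitimate oscillators encounter effective attack pulses only when their phases straddle $[0,\pi)$ and $(\pi,2\pi]$ with phase $2\pi$ in the containing arc.
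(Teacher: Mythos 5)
Your proposal is correct and follows essentially the same route as the paper: the four-scenario decomposition inherited from Lemma~\ref{Lemma_4_encounter_attack} together with the key observation that $M\le\lfloor(d-\lfloor N/2\rfloor)/4\rfloor\le\lambda_i$ makes attacker pulses alone (at most $M$ per quarter period by stealthiness) insufficient to satisfy condition b), so that even colluding ``helping'' pulses cannot open a triggering opportunity outside Scenario IV. Your explicit treatment of the hand-off between condition c) (which blocks whenever a legitimate firing of the last round lies in the $T/4$ window, since then the whole round lies in the $3T/4$ window) and condition b) (which blocks once the window contains only attacker pulses) is sound and in fact spells out a step the paper's own proof only gestures at via ``following an argument similar to Lemma~\ref{Lemma_4_encounter_attack}.''
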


\begin{proof} Similar to Lemma \ref{Lemma_4_encounter_attack}, we know that the phase distribution of legitimate oscillators after the first free-running period can only fall within one of the four scenarios in Fig. \ref{keep_firing}. 
	
According to the stealth Byzantine attack model in Section \uppercase\expandafter{\romannumeral4}, we know that $M$ attackers can emit at most $M$ attack pulses in a quarter period. Given $M\leq\lfloor(d-\lfloor N/2\rfloor)/4\rfloor\leq\lfloor(d(i)-\lfloor N/2\rfloor )/4\rfloor=\lambda_i$ for $i\in\mathcal{N}_L$ where $\mathcal{N}_L$ is the index set of all legitimate oscillators, we know from Mechanism $1$ that attacks pulses alone are not enough to trigger a legitimate oscillator to jump in phase. Therefore, following an argument similar to Lemma \ref{Lemma_4_encounter_attack}, we know that to enlarge the containing arc of legitimate neighbors, attack pulses are sent only when the phases of legitimate oscillators reside partially in $[0,\,\pi)$, partially in $(\pi,\,2\pi]$ with phase $2\pi$ rad belonging to the containing arc.
\end{proof}

Next, we establish the synchronization condition for general connected PCO networks in the presence of colluding attackers.

\begin{Theorem}\label{Theorem_3_WithAttacks_clue_N_known}
For a general-connected network of $N$ PCOs, within which $M\leq\lfloor(d-\lfloor N/2\rfloor)/4\rfloor$ oscillators are colluding attackers launching attacks following the stealthy Byzantine attack model in Sec. \uppercase\expandafter{\romannumeral4}, if the initial length of the containing arc is less than $\pi$ rad and $d>\lfloor N/2 \rfloor$, then all legitimate oscillators can be synchronized under Mechanism $1$.
\end{Theorem}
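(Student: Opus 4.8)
The plan is to reduce Theorem \ref{Theorem_3_WithAttacks_clue_N_known} to the argument already developed for the non-colluding case, because the colluding attack model enters the analysis through a single structural fact. That fact is supplied by Lemma \ref{Lemma_5_encounter_attack}, the colluding counterpart of Lemma \ref{Lemma_4_encounter_attack}, which yields the identical conclusion: under $M\leq\lfloor(d-\lfloor N/2\rfloor)/4\rfloor$ colluding attackers, a legitimate oscillator is triggered to jump by an attack pulse only in Scenario \uppercase\expandafter{\romannumeral4}, i.e.\ when the legitimate phases lie partially in $[0,\pi)$ and partially in $(\pi,2\pi]$ with $2\pi$ inside the containing arc. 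Since the proof of Theorem \ref{Theorem_2_WithAttacks_Noclue} invokes its attack model exclusively through this ``effective attacks live only in Scenario \uppercase\expandafter{\romannumeral4}'' property, I would retain its two-part skeleton and simply substitute Lemma \ref{Lemma_5_encounter_attack} for Lemma \ref{Lemma_4_encounter_attack} throughout.

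In Part \uppercase\expandafter{\romannumeral1} I would re-derive that the containing arc of legitimate oscillators is non-increasing: a legitimate firing pulse occurs only when the arc contains $2\pi$, so Lemma \ref{Lemma_1} forbids any increase, and by Lemma \ref{Lemma_5_encounter_attack} an effective attack pulse arrives only in Scenario \uppercase\expandafter{\romannumeral4}, where Scenario $c)$ of Lemma \ref{Lemma_1} again forbids growth of the full legitimate arc (though a sub-arc spanning one attacker's neighbors may enlarge). In Part \uppercase\expandafter{\romannumeral2} I would re-establish the ``every legitimate oscillator fires within any interval of length $3T/2$'' property from Lemma \ref{Lemma_2} together with Lemma \ref{Lemma_5_encounter_attack}, then reproduce the case split $0\leq\epsilon<\pi/2$ versus $\pi/2\leq\epsilon<\pi$ with the sub-scenarios \emph{\uppercase\expandafter{\romannumeral1}.1}--\emph{\uppercase\expandafter{\romannumeral1}.3} and \emph{\uppercase\expandafter{\romannumeral2}.1}--\emph{\uppercase\expandafter{\romannumeral2}.3.2}, obtaining the contraction $\delta(t_1+\epsilon)\leq(1-l/2)\delta(t_1)$ per firing round (per two rounds in Scenario \emph{\uppercase\expandafter{\romannumeral2}.3.2}).

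The only place needing rechecking is the pulse-counting in Scenarios \emph{\uppercase\expandafter{\romannumeral1}.3} and \emph{\uppercase\expandafter{\romannumeral2}.3.2}, where an unfired legitimate oscillator $j'$ has received at most $\lambda_{j'}$ pulses (legitimate and attack combined) in the first half-window, yet to fire by $t_1+\epsilon$ must collect a pulse from each of its at least $d(j')-M$ legitimate in-neighbors; this forces at least $d(j')-M-\lambda_{j'}$ legitimate oscillators to fire in the second half-window. Using $d\leq d(j')$, hence $M\leq\lfloor(d-\lfloor N/2\rfloor)/4\rfloor\leq\lambda_{j'}$, together with $4\lambda_{j'}\leq d(j')-\lfloor N/2\rfloor$, I obtain $d(j')-M-\lambda_{j'}\geq d(j')-2\lambda_{j'}\geq\lfloor N/2\rfloor+1$, so $|\mathcal{N}_{f}|\leq\lfloor N/2\rfloor$ exactly as in Theorem \ref{Theorem_1_WithoutAttack} and the estimate $\phi_i(t_1+\epsilon)\in[0,(1-l/2)\epsilon]$ goes through. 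Note that this is in fact a looser requirement than in Theorem \ref{Theorem_2_WithAttacks_Noclue}, whose budget $M\leq2\times\lfloor(d-\lfloor N/2\rfloor)/4\rfloor$ only yielded $d(j')-3\lambda_{j'}$; the halved colluding budget leaves extra slack here.

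I expect that, once Lemma \ref{Lemma_5_encounter_attack} is in hand, the theorem presents no genuinely new obstacle---the colluding model has been quarantined inside that lemma and the rest is a near-verbatim repeat of Theorem \ref{Theorem_2_WithAttacks_Noclue}. The one subtlety worth flagging is that the colluding condition is strictly more permissive: an attacker may fire a ``helper'' pulse that enlarges the union of all attackers' neighbor arcs, or merely assists another attack, rather than its own arc. One must therefore confirm that such helper pulses cannot manufacture a jump outside Scenario \uppercase\expandafter{\romannumeral4}, which is exactly what the halved budget $M\leq\lfloor(d-\lfloor N/2\rfloor)/4\rfloor\leq\lambda_i$ secures in Lemma \ref{Lemma_5_encounter_attack}: at most $M$ attack pulses in a quarter period cannot reach the activation threshold $\lambda_i+1$ of condition $b)$, so coordinated helper pulses stay inert unless legitimate firings coincide, i.e.\ in Scenario \uppercase\expandafter{\romannumeral4}. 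Granting this, convergence $\delta(t)\to0$ and the invariant collective period $T=2\pi$ follow as in Theorem \ref{Theorem_2_WithAttacks_Noclue} and Corollary \ref{Corollary_2_2pi_period}.
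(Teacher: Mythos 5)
Your proposal is correct and follows essentially the same route as the paper: the paper likewise quarantines the colluding model inside Lemma \ref{Lemma_5_encounter_attack}, reuses Lemma \ref{Lemma_1} for the non-increasing property in Part \uppercase\expandafter{\romannumeral1}, and defers Part \uppercase\expandafter{\romannumeral2} verbatim to Theorem \ref{Theorem_2_WithAttacks_Noclue}. Your explicit recheck of the pulse-counting in Scenarios \emph{\uppercase\expandafter{\romannumeral1}.3} and \emph{\uppercase\expandafter{\romannumeral2}.3.2} (showing the halved colluding budget leaves slack) is a detail the paper omits but is consistent with its argument.
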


\begin{proof} Similar to the proof in Theorem \ref{Theorem_2_WithAttacks_Noclue}, we divide the proof into two parts. In Part \uppercase\expandafter{\romannumeral1}, we prove that the length of the containing arc of legitimate oscillators is non-increasing. In Part \uppercase\expandafter{\romannumeral2}, we prove that it will converge to $0$. 

Part \uppercase\expandafter{\romannumeral1} \emph{(The length of the containing arc of legitimate oscillators is non-increasing)}: It can be easily inferred that the length of the containing arc of legitimate oscillators remains unchanged if no legitimate oscillators jump in phase. So we only consider the case where a pulse (from either a legitimate oscillator or an attacker) triggers a phase jump on a legitimate oscillator. 

Following the same line of reasoning as in Theorem \ref{Theorem_2_WithAttacks_Noclue}, one can easily get that the firing of a legitimate oscillator cannot increase the length of the containing arc of legitimate oscillators. By combining Lemma \ref{Lemma_1} and Lemma \ref{Lemma_5_encounter_attack}, we can also obtain that no attacker pulses can increase the length of the containing arc of legitimate oscillators, although the containing arc of a subset of legitimate oscillators (the union set of colluding attackers' neighbor sets) may be enlarged. Hence, we can conclude that the length of the containing arc of all legitimate oscillators is non-increasing.

Part \uppercase\expandafter{\romannumeral2} \emph{(The length of the containing arc of legitimate oscillators converges to $0$)}: The proof follows the same reasoning as in Part \uppercase\expandafter{\romannumeral2} of Theorem \ref{Theorem_2_WithAttacks_Noclue} and is omitted.
\end{proof}

\begin{Remark}
It is worth noting that the maximally allowable number of attackers in a PCO network is $2\times\lfloor (d-\lfloor N/2\rfloor )/4\rfloor$ when attackers do not collude with each other, which is greater than the maximally allowable number of compromised oscillators $\lfloor (d-\lfloor N/2\rfloor )/4\rfloor$ when attackers collude and exchange information.
\end{Remark}

In the colluding case, some attackers can emit attack pulses even if these pulses themselves do not enlarge the containing arc (as long as these pulses can help other attack pulses to enlarge the containing arc). In fact, even if all attackers are allowed to send attack pulses when the containing arc does not change, they still cannot prevent legitimate pulses from satisfying condition (\ref{upper_bound}) to decrease the length of the containing arc.

\begin{Corollary}\label{Corollary_3}
For a general connected network of $N$ PCOs, within which $M\leq\lfloor(d-\lfloor N/2\rfloor )/4\rfloor$ colluding attackers have the ability to emit attack pulses not only when their pulses can enlarge the length of the containing arc but also when the pulses do not change the containing arc, if the initial length of the containing arc of all legitimate oscillators is less than $\pi$ rad and $d>\lfloor N/2 \rfloor$, then there always exist legitimate pulses satisfying (\ref{upper_bound}) in Mechanism $1$.
\end{Corollary}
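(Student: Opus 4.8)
The plan is to isolate the single quantity an attacker can try to corrupt, namely the number of pulses falling in the three-quarter-period window of condition $c)$, and to exhibit one legitimate pulse for which that number provably stays below $\bar{\lambda}_i$. First I would record the structural facts that survive the strengthened attack model. The stealth requirement still forces each attacker to space its pulses more than $T/2$ apart, so each attacker emits at most two pulses in any interval of length $3T/4$ and at most one in any quarter period; hence a legitimate oscillator $i$ receives at most $2M$ attack pulses in any window of length $3T/4$. Since $M\le\lfloor(d-\lfloor N/2\rfloor)/4\rfloor\le\lambda_i$, attack pulses alone can never furnish the $\lambda_i+1$ pulses needed for condition $b)$, exactly as in Lemma \ref{Lemma_5_encounter_attack}. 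Consequently, even when attackers are allowed to inject ``priming'' pulses that do not move the containing arc, no legitimate oscillator is triggered to jump or to fire during the free-running stretch $(t_0,t_1)$ that precedes a firing round, and Lemma \ref{Lemma_1} still yields a non-increasing containing arc.

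Next I would reduce to the critical firing round. As in Theorem \ref{Theorem_1_WithoutAttack} it suffices to treat \emph{Case $1$} $(0<\epsilon<\pi/2)$, the regime $\pi/2\le\epsilon<\pi$ being folded into it by the two-consecutive-round argument used there. Fix a legitimate oscillator $i$ that fired in $[t_1,\,t_1+\epsilon/2]$. During $[t_1,\,t_1+\epsilon]$ its firing legitimate in-neighbors deliver at least $d^-(i)-M\ge d(i)-M>\lambda_i$ legitimate pulses, all inside an interval of length $\epsilon<T/4$. Let $\tau$ be the arrival time of the $(\lambda_i+1)$-th of these legitimate pulses. Because these $\lambda_i+1$ pulses lie in an interval shorter than $T/4$, condition $b)$ holds at $\tau$, so the pulse arriving at $\tau$ is a legitimate pulse that would trigger a phase jump provided condition $c)$ also holds.

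The heart of the argument is verifying condition $c)$ at $\tau$. Using $t_1=t_0+T-\epsilon$ and $\epsilon<T/4$ one checks $\tau-3T/4>t_0$, and since no legitimate oscillator fired in $(t_0,\,t_1)$, every legitimate pulse $i$ received before $\tau$ lies in $[t_1,\,\tau)\subset(\tau-3T/4,\,\tau)$; by the choice of $\tau$ there are exactly $\lambda_i$ of them. Adding the at most $2M$ attack pulses permitted by the stealth bound, the number of pulses received in $(\tau-3T/4,\,\tau)$ is at most $\lambda_i+2M\le3\lambda_i$. Since $\bar{\lambda}_i=d(i)-2\lambda_i$, it remains to check $5\lambda_i<d(i)$. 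From (\ref{lower_bound}) we have $4\lambda_i\le d(i)-\lfloor N/2\rfloor$, so $d(i)\ge\lfloor N/2\rfloor+4\lambda_i$; the same inequality together with $d(i)\le N-1$ gives $4\lambda_i\le\lfloor N/2\rfloor$ and hence $\lambda_i<\lfloor N/2\rfloor$, which combine to give $d(i)\ge\lfloor N/2\rfloor+4\lambda_i>5\lambda_i$. Therefore condition $c)$ holds at $\tau$, so the legitimate pulse arriving at $\tau$ satisfies (\ref{upper_bound}), which is the assertion.

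The step I expect to be the main obstacle is making the window bookkeeping airtight. One must select the $(\lambda_i+1)$-th legitimate pulse rather than the later pulse used in Theorem \ref{Theorem_1_WithoutAttack}, precisely so that only $\lambda_i$ legitimate pulses, and not the full set of early firings, fall inside the condition-$c)$ window, and one must confirm $\tau-3T/4>t_0$ so that pulses from the previous round do not leak in. Carrying this through the $\pi/2\le\epsilon<\pi$ regime is the delicate part, because there the free-running gap $T-\epsilon$ can be shorter than $3T/4$; as in Theorem \ref{Theorem_1_WithoutAttack} this is handled by spanning two consecutive firing rounds, where the bounds $M\le\lambda_i$ and $|\mathcal{N}_f'|\le\lfloor N/2\rfloor$ keep the accumulated pulse count below $\bar{\lambda}_i$.
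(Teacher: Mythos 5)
Your proposal is correct, but it is a considerably heavier construction than what the paper uses. The paper's entire proof is the counting observation you state at the outset: the stealth requirement (pulse separation over $T/2$) caps the number of attack pulses in any window of length $3T/4$ at $2M$, and $2M\leq 2\lambda_i<\bar{\lambda}_i=d(i)-2\lambda_i$ follows from $4\lambda_i\leq d(i)-\lfloor N/2\rfloor<d(i)$; hence attack pulses alone can never saturate the condition-$c)$ counter, so the counter can only reach $\bar{\lambda}_i$ after some legitimate pulses have already arrived and been counted while it was still below $\bar{\lambda}_i$ --- those are the legitimate pulses satisfying (\ref{upper_bound}). You instead embed this count inside the firing-round machinery of Theorem \ref{Theorem_1_WithoutAttack}: you exhibit the specific legitimate pulse (the $(\lambda_i+1)$-th one received after $t_1$), verify condition $b)$ for it as well, and check condition $c)$ via the sharper inequality $\lambda_i+2M\leq 3\lambda_i<\bar{\lambda}_i$, i.e., $5\lambda_i<d(i)$, together with the window bound $\tau-3T/4>t_0$. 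Your arithmetic is sound (the step $4\lambda_i\leq N-1-\lfloor N/2\rfloor\leq\lfloor N/2\rfloor$ does give $\lambda_i<\lfloor N/2\rfloor$ and hence $d(i)>5\lambda_i$), and your version buys more: it actually delivers the content of the Remark that follows the corollary, namely that these legitimate pulses trigger jumps and keep shrinking the containing arc. The cost is that you must re-derive a good part of Theorem \ref{Theorem_1_WithoutAttack} and you leave the $\pi/2\leq\epsilon<\pi$ regime only sketched via the two-round argument, whereas the paper's minimal counting argument is window-position-independent and needs none of that case analysis to establish the corollary as literally stated.
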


\begin{proof}According the stealthy requirement in Sec. \uppercase\expandafter{\romannumeral4}, $M\leq\lfloor(d-\lfloor N/2\rfloor )/4\rfloor$ attackers can emit at most $2M$ attack pulses within an arbitrary three-quarter oscillation period. Since $2M$ is less than $\bar{\lambda}_i$, one can get that (\ref{upper_bound}) cannot be made unsatisfied for all legitimate pulses.
\end{proof}

\begin{Remark}
Following Corollary \ref{Corollary_3} and the proof in Theorem $2$, one can get that there always exist legitimate pulses satisfying condition (\ref{upper_bound}), which will reduce the length of the containing arc, even though attackers can ensure that all their attack pulses do not change the length of the containing arc of legitimate oscillators. Hence, attackers cannot prevent legitimate oscillators from reaching synchronization by holding the containing arc constant.
\end{Remark}

\section{Extension to the Case where $N$ is Unknown to Individual Oscillators}

The implementation of the ``cut-off" algorithm in Mechanism $1$ requires each node to have access to $N$, which may be not feasible in a completely decentralized network. Therefore, in this section, we generalize our approach to the case where $N$ is unknown to individual oscillators by leveraging the degree information of individual oscillators. The essence is a new ``cut-off" mechanism that is designed based on the degree information of individual oscillators, as detailed below:

\noindent\rule{9cm}{0.12em}
\emph{New Pulse-Based Synchronization Mechanism (Mechanism 2):}\\
\rule{9cm}{0.1em}
\begin{enumerate}
\item The phase $\phi_i$ of oscillator $i$ evolves from $0$ to $2\pi$ rad with a constant speed $\omega=1$ rad/second.
	
\item Once $\phi_i$ reaches $2\pi$ rad, oscillator $i$ fires and resets its phase to $0$.
	
\item When oscillator $i$ receives a pulse at time instant $t$, it simultaneously resets its phase according to (\ref{PhaseJump}) only when all the following three conditions are satisfied:
\begin{enumerate}
\item an entire period $T=2\pi$ seconds has elapsed since initiation.
		
\item before receiving the current pulse, oscillator $i$ has received at least $\lfloor d(i)/9 \rfloor$ pulses within $(t-T/4,\,t]$, where $\lfloor\bullet\rfloor$ is the largest integer no greater than $``\bullet."$
		
\item before receiving the current pulse, oscillator $i$ has received less than $d(i)-2\times\lfloor d(i)/9 \rfloor$ pulses within $(t-3T/4,\,t]$.
\end{enumerate}
Otherwise, the pulse has no effect on $\phi_i$.
\end{enumerate}
\rule{9cm}{0.12em}

Following a similar line of reasoning in Section \uppercase\expandafter{\romannumeral3} and Section \uppercase\expandafter{\romannumeral5}, we can prove that Mechanism $2$ can synchronize legitimate oscillators both in the absence and presence of attackers.

\begin{Corollary}\label{Corollary_4}
For an attack-free general-connected network of $N$ PCOs, if the degree of the network satisfies $d>\lfloor 2N/3 \rfloor$ and the initial length of the containing arc is less than $\pi$ rad, then all oscillators can be synchronized under Mechanism $2$.
\end{Corollary}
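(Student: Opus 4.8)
The plan is to replay the attack-free convergence proof of Theorem~\ref{Theorem_1_WithoutAttack} essentially verbatim in structure, substituting the degree-only thresholds $\lambda_i'=\lfloor d(i)/9\rfloor$ and $\bar\lambda_i'=d(i)-2\lfloor d(i)/9\rfloor$ for the quantities $\lambda_i,\bar\lambda_i$ of Mechanism~1, and using the hypothesis $d>\lfloor 2N/3\rfloor$ wherever the original counting invoked $d>\lfloor N/2\rfloor$. First I would observe that the two preparatory lemmas transfer with no change: the non-increasing property (Lemma~\ref{Lemma_1}) and the ``fires within $3T/2$'' property (Lemma~\ref{Lemma_2}) rely only on the sign pattern of the PRF in (\ref{PRF}) and on the fact that a cut-off can only \emph{suppress} a jump, never create or enlarge one, so neither proof touches the numerical form of the thresholds. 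Hence under Mechanism~2 the containing arc is still non-increasing and bounded by $\pi$, and there is a time $t_0>2T$ at which the terminating point of the arc sits at phase $0$ with $\delta(t_0)=\epsilon\in[0,\pi)$; the final period claim then follows exactly as in Corollary~\ref{Corollary_1_2pi_period}.

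Next I would run the same Case~1 ($0\le\epsilon<\pi/2$) / Case~2 ($\pi/2\le\epsilon<\pi$) split, with the same Scenarios 1.1--1.3 and 2.1--2.3.2. The benign scenarios (all oscillators fired, or all the non-firers still jumped) are threshold-free and give $\delta(t_1+\epsilon)\le(1-l/2)\delta(t_1)$ word for word. The only steps that consume the hypotheses are those in which some legitimate oscillator $j'$ fails to jump; there the goal is to show that the set $\mathcal{N}_f$ of early-firing oscillators is small enough that every member is nonetheless re-triggered in the second half-period, pushing it back toward phase $0$ and producing the $(1-l/2)$ contraction. Concretely I would reproduce (i) the pigeonhole bound $|\mathcal{N}_n|\ge d(j')-\lambda_{j'}'$ (respectively $\ge d(j')-3\lambda_{j'}'$ in the two-round case), which upper-bounds $|\mathcal{N}_f|$, and (ii) the verification that a member $i\in\mathcal{N}_f$, having received at most $|\mathcal{N}_f|-1$ pulses in the first half, still receives more than $\lambda_i'$ pulses in the second half (so condition (b) holds) while never accumulating $\bar\lambda_i'$ pulses in any three-quarter window (so condition (c) holds).

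The substance thus reduces to re-deriving those inequalities with the new constants. In the single-round case (Scenario 1.3, $\epsilon<\pi/2$), using $d(i)\ge d>\lfloor 2N/3\rfloor$ one finds $|\mathcal{N}_f|\le N-d(j')+\lambda_{j'}'$, which sits comfortably below the bound $d(i)-3\lambda_i'+1$ needed by conditions (b) and (c); here the divisor $9$ against the fraction $2/3$ plays the role that ``$4$ against $\lfloor N/2\rfloor$'' played in Theorem~\ref{Theorem_1_WithoutAttack}. The hard part will be the two-round bookkeeping of Scenario 2.3.2: when $j'$'s failure is caused by condition (c) rather than (b), a single round need not contract the arc, so one must look back to the previous round, show that $(t_1-3T/4,t_0]$ lies inside that round's second half-period, and bound $|\mathcal{N}_f'|$ through the \emph{weaker} estimate $|\mathcal{N}_f'|\le N-d(j')+3\lambda_{j'}'$. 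This is the tightest counting in the whole argument: condition (c) there demands $(|\mathcal{N}_f'|-1)+\lambda_i'<\bar\lambda_i'$, and one must check that the floors hidden in $\lfloor d(i)/9\rfloor$ together with $d>\lfloor 2N/3\rfloor$ leave just enough room for this to survive. Confirming this floor-sensitive inequality is the crux; once it is in hand, the ``contraction after at most every two firing rounds'' mechanism of Theorem~\ref{Theorem_1_WithoutAttack} drives $\delta(t)$ below $\pi/2$, collapsing Case~2 into Case~1, and then to $0$.
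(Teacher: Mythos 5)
Your overall strategy is exactly what the paper intends (its own proof of Corollary~\ref{Corollary_4} is omitted and simply says ``follow Theorem~\ref{Theorem_1_WithoutAttack}''), and your single-round accounting is fine: there the requirement is $N - d(j') + \lambda_{j'}' \le d(i) - 3\lambda_i'$, whose worst case needs $d \ge 9N/14$, comfortably implied by $d > \lfloor 2N/3\rfloor$. The problem is the step you explicitly defer as ``the crux'': the two-round counting of \emph{Scenario 2.3.2}. There the argument needs
\begin{equation*}
N - \bigl(d(j') - 3\lambda_{j'}'\bigr) \;\le\; d(i) - 3\lambda_i',
\qquad \lambda_k' = \lfloor d(k)/9\rfloor ,
\end{equation*}
i.e.\ $N \le g(d(i)) + g(d(j'))$ with $g(x) = x - 3\lfloor x/9\rfloor = \tfrac{2x + (x \bmod 9)}{3}$. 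When the relevant degrees are multiples of $9$ this is $N \le \tfrac{4d}{3}$, i.e.\ $d \ge 3N/4$, which is strictly stronger than the hypothesis $d > \lfloor 2N/3\rfloor$. A concrete instance: $N = 40$, $d(i) = d(j') = d = 27 > \lfloor 80/3\rfloor = 26$ gives $\lambda' = 3$, $\bar\lambda' = 21$, so $|\mathcal{N}_n'| \ge \bar\lambda' - \lambda' = 18$ and $|\mathcal{N}_f'|$ can be as large as $22$; an oscillator in $\mathcal{N}_f'$ may then have received up to $21$ pulses in the first half-window, so at its $(\lambda_i'+1)$-th second-half pulse the three-quarter-period count is $21 + 3 = 24 \ge \bar\lambda_i' = 21$ and condition $c)$ of Mechanism~2 blocks the jump. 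The $(1-l/2)$ contraction over two firing rounds is therefore not established, and your proof does not close.

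To be clear, this does not by itself exhibit a network that fails to synchronize — it shows that the verbatim transplant of Theorem~\ref{Theorem_1_WithoutAttack}'s bookkeeping does not survive the substitution of the Mechanism~2 thresholds under the stated hypothesis $d > \lfloor 2N/3\rfloor$. To repair the argument you would need either a sharper lower bound on $|\mathcal{N}_n'|$ in \emph{Scenario 2.3.2} (e.g.\ exploiting that oscillator $j'$ received at least $\bar\lambda_{j'}'$, not merely $\bar\lambda_{j'}' - \lambda_{j'}'$, pulses in the three-quarter window), or a stronger connectivity hypothesis on the order of $d \ge 3N/4$, or a different route to contraction in that scenario. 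As written, the proposal identifies the decisive inequality but leaves it unverified, and the verification fails in the worst case the hypotheses allow.
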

\begin{proof}
Proof of Corollary \ref{Corollary_4} can be obtained following Theorem \ref{Theorem_1_WithoutAttack} and is omitted. 	
\end{proof}

\begin{Theorem}\label{Theorem_4}
For a general connected network of $N$ PCOs, within which $M$ oscillators are non-colluding stealthy Byzantine attackers, if $M$ is no greater than $2\times\lfloor d/9 \rfloor$ with $d>\lfloor2N/3\rfloor$, then all legitimate oscillators can be synchronized under Mechanism $2$ as long as their initial length of the containing arc is less than $\pi$ rad.
\end{Theorem}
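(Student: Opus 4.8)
The plan is to reproduce the two-part argument of Theorem~\ref{Theorem_2_WithAttacks_Noclue} verbatim in structure, but with the $N$-dependent cut-off thresholds of Mechanism~$1$ replaced by the degree-only thresholds of Mechanism~$2$, namely $\lambda_i=\lfloor d(i)/9\rfloor$ and $\bar{\lambda}_i=d(i)-2\lfloor d(i)/9\rfloor$, and with the degree requirement $d>\lfloor N/2\rfloor$ replaced by $d>\lfloor 2N/3\rfloor$. First I would establish the Mechanism~$2$ analog of Lemma~\ref{Lemma_4_encounter_attack}: that the $N-M$ legitimate oscillators can only be perturbed by attack pulses while their phases reside partially in $[0,\pi)$ and partially in $(\pi,2\pi]$ with phase $2\pi$ in the containing arc (Scenario~IV). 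The inequality that drives that lemma survives the substitution: since $d(i)\geq d$ implies $\lfloor d/9\rfloor\leq\lfloor d(i)/9\rfloor$, the hypothesis $M\leq 2\lfloor d/9\rfloor$ gives $M\leq 2\lambda_i$, hence $d(i)-M\geq d(i)-2\lambda_i=\bar{\lambda}_i$ for every legitimate oscillator $i$. This is exactly the bound that, after all legitimate neighbors have fired within a half period, prevents any further pulse from re-triggering a jump, thereby confining attacks to Scenario~IV.

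With this analog in hand, Part~I (the containing arc of legitimate oscillators is non-increasing) follows as in Theorem~\ref{Theorem_2_WithAttacks_Noclue}: a legitimate firing occurs only with phase $2\pi$ in the arc so the Scenario~$c)$ argument of Lemma~\ref{Lemma_1} applies, and an attack pulse is confined to Scenario~IV and hence cannot enlarge the global arc. Part~II (convergence to $0$) then reuses the \emph{Case~I}/\emph{Case~II} dissection together with the \emph{Scenario~$1.x$}/\emph{Scenario~$2.x$} subcase analysis underlying Corollary~\ref{Corollary_4} and Theorem~\ref{Theorem_2_WithAttacks_Noclue}, yielding a $(1-l/2)$ contraction over every one (Scenarios $1.x$, $2.1$, $2.2$, $2.3.1$) or two (Scenario $2.3.2$) firing rounds. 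The only steps requiring genuine re-derivation are the pulse-counting arguments in Scenarios~$1.3$, $2.3.1$, and $2.3.2$, where I must again show that the set $\mathcal{N}_{n}$ of oscillators not firing in the first half sub-window comprises more than half of all oscillators.

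This counting step is where $d>\lfloor 2N/3\rfloor$ does its work and is the main obstacle. For a legitimate oscillator $j'$ not triggered in the first half sub-window, the Mechanism~$2$ form of condition $b)$ forces at least $d(j')-\lambda_{j'}$ of its neighbors to fire in the second half sub-window; with $\lambda_{j'}=\lfloor d(j')/9\rfloor$ this gives $|\mathcal{N}_{n}|\geq d(j')-\lambda_{j'}\geq \tfrac{8}{9}d(j')$, and $d(j')\geq d>\lfloor 2N/3\rfloor$ must be shown to force $d(j')-\lambda_{j'}\geq\lfloor N/2\rfloor+1$, so that $|\mathcal{N}_{f}|\leq\lfloor N/2\rfloor$. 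The delicate part is the floor arithmetic: one has to confirm $d(j')-\lfloor d(j')/9\rfloor\geq\lfloor N/2\rfloor+1$ for \emph{all} admissible integer pairs $(N,d(j'))$, not merely in the continuous estimate $\tfrac{16}{27}N>\tfrac{N}{2}$. Once $|\mathcal{N}_{f}|\leq\lfloor N/2\rfloor$ is secured, each firing oscillator $i$ receives at most $\lfloor N/2\rfloor-1$ legitimate pulses in the first half sub-window, so at least $d(i)-(\lfloor N/2\rfloor-1)>\lambda_i$ legitimate pulses remain to re-trigger it in the second half (condition $b)$); and the accumulated count in the preceding three-quarter period stays below $\bar{\lambda}_i$ because the attacker contribution is bounded by $2M\leq 4\lfloor d/9\rfloor\leq\tfrac{4}{9}d<\tfrac{7}{9}d\leq\bar{\lambda}_i$ (condition $c)$). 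Assembling these bounds reproduces the Scenario~$1.3$ phase estimate $\phi_i(t_1+\epsilon)\in[0,(1-l/2)\epsilon]$, and the contraction together with a Lemma~\ref{Lemma_2}-type firing-frequency guarantee drives $\delta(t)$ to $0$ exactly as in Theorem~\ref{Theorem_2_WithAttacks_Noclue}.
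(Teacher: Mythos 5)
Your proposal follows essentially the same route as the paper, whose own proof of Theorem~\ref{Theorem_4} is only a two-sentence deferral to the Part~\uppercase\expandafter{\romannumeral1}/Part~\uppercase\expandafter{\romannumeral2} structure of Theorem~\ref{Theorem_2_WithAttacks_Noclue} with the Mechanism~$2$ thresholds $\lambda_i=\lfloor d(i)/9\rfloor$, $\bar\lambda_i=d(i)-2\lfloor d(i)/9\rfloor$ and the strengthened degree bound $d>\lfloor 2N/3\rfloor$ substituted in. You actually carry the substitution further than the paper records---correctly isolating the pulse-counting inequalities in Scenarios~$1.3$/$2.3.x$ (the floor arithmetic and the attacker contributions to conditions $b)$ and $c)$) as the only steps needing genuine re-verification, which the paper leaves entirely implicit.
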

\begin{proof}The proof follows the same line of reasoning as in Theorem \ref{Theorem_2_WithAttacks_Noclue}. More specifically, using the same arguments as Part \uppercase\expandafter{\romannumeral1} of Theorem \ref{Theorem_2_WithAttacks_Noclue}, we can obtain that a pulse from neither a legitimate oscillator nor a stealthy Byzantine attacker could enlarge the containing arc of legitimate oscillators under Mechanism $2$, i.e, the length of the containing arc of legitimate oscillators is non-increasing. Then, following the same argument as in Part \uppercase\expandafter{\romannumeral2} of Theorem \ref{Theorem_2_WithAttacks_Noclue}, we know that if $d>\lfloor 2N/3\rfloor$ and $M\leq 2\times\lfloor d/9 \rfloor$ hold, the length of the containing arc of legitimate oscillators will keep decreasing until it converges to $0$.
\end{proof}

\begin{Theorem}\label{Theorem_5}
For a general connected network of $N$ PCOs, within which $M$ oscillators are colluding stealthy Byzantine attackers, if $M$ is no greater than $\lfloor d/9 \rfloor$ with $d>\lfloor2N/3\rfloor$, then all legitimate oscillators can be synchronized under Mechanism $2$ as long as their initial length of the containing arc is less than $\pi$ rad.
\end{Theorem}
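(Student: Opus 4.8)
The plan is to mirror the two-part argument of Theorem \ref{Theorem_3_WithAttacks_clue_N_known} (the colluding case under Mechanism $1$), carrying every step over to Mechanism $2$ by substituting the Mechanism $2$ thresholds $\lambda_i'=\lfloor d(i)/9\rfloor$ and $\bar{\lambda}_i'=d(i)-2\lfloor d(i)/9\rfloor$ for $\lambda_i$ and $\bar{\lambda}_i$, and the degree requirement $d>\lfloor 2N/3\rfloor$ for $d>\lfloor N/2\rfloor$. The colluding bound $M\leq\lfloor d/9\rfloor$ plays exactly the role that $M\leq\lfloor(d-\lfloor N/2\rfloor)/4\rfloor$ played in Theorem \ref{Theorem_3_WithAttacks_clue_N_known}, since it guarantees $M\leq\lfloor d/9\rfloor\leq\lfloor d(i)/9\rfloor=\lambda_i'$ for every legitimate oscillator $i$.

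First I would prove the colluding analogue of Lemma \ref{Lemma_5_encounter_attack} under Mechanism $2$. Since each stealthy attacker spaces its pulses by more than $T/2$, the $M$ colluding attackers jointly emit at most $M$ pulses in any quarter period; because $M\leq\lambda_i'$, these pulses alone can never satisfy condition $b)$ of Mechanism $2$, so an attack pulse can shift a legitimate oscillator only when it accompanies legitimate pulses, i.e.\ only in the configuration where the legitimate phases lie partially in $[0,\pi)$ and partially in $(\pi,2\pi]$ with $2\pi$ inside the containing arc. With this in hand, Part \uppercase\expandafter{\romannumeral1} (non-increasing arc) follows as in Theorem \ref{Theorem_2_WithAttacks_Noclue}: a legitimate firing occurs only at phase $2\pi$ and cannot lengthen the legitimate containing arc by the Scenario $c)$ reasoning of Lemma \ref{Lemma_1}, and every admissible attack pulse is confined to that same configuration, where the identical argument shows it cannot lengthen the arc either (though it may enlarge the arc of the union of the colluding attackers' neighbor sets). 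The Lemma \ref{Lemma_2} argument then shows every legitimate oscillator still fires within any window of length $3T/2$.

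For Part \uppercase\expandafter{\romannumeral2} (convergence) I would reuse the case split $0\leq\epsilon<\pi/2$ and $\pi/2\leq\epsilon<\pi$ together with the firing sub-scenarios of Theorem \ref{Theorem_1_WithoutAttack}, proving a $(1-l/2)$ contraction of $\delta(t)$ per firing round (per two rounds in the Scenario $2.3.2$ case). The new bookkeeping is that the pulse counts of a non-firing oscillator $j'$ and of a first-half-firing oscillator $i$ are now polluted by colluding attack pulses. The key inequalities to verify are: (i) a non-firing $j'$ receives at most $\lambda_{j'}'$ pulses in the first half, hence at least $d(j')-M-\lambda_{j'}'$ legitimate pulses in the second half, so $|\mathcal{N}_n|\geq d(j')-M-\lambda_{j'}'$; the attacker count $M$ cancels against the legitimate total to give $|\mathcal{N}_f|\leq N-d(j')+\lambda_{j'}'$, which under $d>\lfloor 2N/3\rfloor$ forces every $i\in\mathcal{N}_f$ to receive more than $\lambda_i'$ legitimate pulses in the second half, so condition $b)$ is met; and (ii) the three-quarter window preceding the triggering pulse, whose content is bounded by the first-half firings plus at most $M$ accompanying attack pulses, stays strictly below $\bar{\lambda}_i'$, so condition $c)$ is met. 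The contraction factor $(1-l/2)$ then follows from the phase computation of Theorem \ref{Theorem_1_WithoutAttack} verbatim.

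The main obstacle is closing the condition $c)$ count under the weaker requirement $d>\lfloor 2N/3\rfloor$. A direct transcription of Theorem \ref{Theorem_1_WithoutAttack}, which charges the triggering instant with $\lambda_i'$ second-half pulses (plus attacks), is too lossy here and would instead demand roughly $d>9N/13$. The fix I would use exploits $\epsilon<\pi/2=T/4$: the whole first-half firing activity already lies inside the quarter-period window that defines condition $b)$, so once $|\mathcal{N}_f|-1\geq\lambda_i'$ condition $b)$ is satisfied by the \emph{first} second-half pulse, and only that single pulse together with at most $M\leq\lambda_i'$ attack pulses is charged to the condition $c)$ window; the remaining case $|\mathcal{N}_f|\leq\lambda_i'$ makes the count trivially small. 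This reduces the worst case to $N<2d-4\lfloor d/9\rfloor$, which $d>\lfloor 2N/3\rfloor$ satisfies, with the Scenario $2.3.2$ (two-round) subcase requiring the same refinement applied across both rounds. Finally, to neutralize the colluding attackers' extra freedom to emit ``helper'' pulses that do not themselves enlarge the arc, I would invoke the reasoning of Corollary \ref{Corollary_3}: at most $2M<\bar{\lambda}_i'$ attack pulses can fall in any three-quarter period, so legitimate pulses satisfying condition $c)$ always survive and the $(1-l/2)$ contraction persists, yielding $\delta(t)\to 0$.
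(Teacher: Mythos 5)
Your proposal takes essentially the same route as the paper: the paper's entire proof of Theorem~\ref{Theorem_5} is the single sentence that it ``can be obtained following the same line of argument as in Theorem~\ref{Theorem_3_WithAttacks_clue_N_known} and is omitted,'' and your two-part structure (the Lemma~\ref{Lemma_5_encounter_attack} analogue plus non-increase of the arc, then the $(1-l/2)$ contraction via the scenario analysis of Theorem~\ref{Theorem_1_WithoutAttack}) is exactly that intended argument. What you add, and the paper does not, is an actual check of the Mechanism~$2$ constants, and your finding there is correct: the literal transcription of the condition~$c)$ count, namely $(|\mathcal{N}_f|-1)+M+\lambda_i'$ compared against $\bar{\lambda}_i'=d(i)-2\lfloor d(i)/9\rfloor$, requires roughly $N<2d-5\lfloor d/9\rfloor$, i.e.\ $d>9N/13$, which is \emph{not} implied by $d>\lfloor 2N/3\rfloor$ (e.g.\ $N=100$, $d=67$, $M=7$ makes the bound tight), so the ``omitted'' proof genuinely needs your refinement. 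Your Case~I repair is sound in spirit --- for $\epsilon<T/4$ the condition-$b)$ window $(t-T/4,t]$ already contains every pulse since $t_1$, so the triggering instant need not be charged with $\lambda_i'$ additional second-half pulses --- but the case split should be on oscillator $i$'s \emph{own} first-half received count $c_1$ rather than on $|\mathcal{N}_f|$: if $c_1\geq\lambda_i'$ the first second-half pulse triggers the jump with condition-$c)$ count at most $(|\mathcal{N}_f|-1)+M<\bar{\lambda}_i'$ (which does close under $d>\lfloor 2N/3\rfloor$), while if $c_1<\lambda_i'$ the $(\lambda_i'+1)$-th pulse overall triggers it with count exactly $\lambda_i'<\bar{\lambda}_i'$. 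The one place your argument remains as unverified as the paper's is Case~II / Scenario~$2.3.2$, where $\epsilon\geq T/4$ defeats the ``window covers everything'' device and the $3T/4$ window reaches back into the previous firing round; you assert the refinement carries across two rounds without checking it, but since the paper supplies no detail there either, this is a shared loose end rather than a defect specific to your proposal.
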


\begin{proof}
The proof can be obtained following the same line of argument as in Theorem \ref{Theorem_3_WithAttacks_clue_N_known} and is omitted.	
\end{proof}

\begin{Remark}
When $N$ is unknown to individual oscillators, $d$ has to be over $\lfloor 2N/3\rfloor$, which is greater than $\lfloor N/2\rfloor$ in the case where $N$ is known. The increased requirement on the connectivity of PCO networks is intuitive in that less knowledge of a PCO network requires stronger conditions to guarantee synchronization.
\end{Remark}

Table $1$ summarizes the conditions for Mechanism $1$ and Mechanism $2$ to achieve synchronization.

\begin{table*}[htbp]\label{Table_Global}
	\centering
	\small
	\centering{Table $1$. Synchronization conditions of Mechanism $1$ and Mechanism $2$ ($N$ denotes the total number of oscillators)}
	\begin{center}
		\begin{threeparttable}	
			\renewcommand\arraystretch{2}
			\scalebox{1}{\begin{tabular}{|c |c |c | c|c|c|}\hline
					\makecell{}&
					\makecell{Initial containing \\ arc length}& 
					\makecell{Degree of \\ network $d$}&  
					\makecell{Need knowledge \\  of $N$} &
					\makecell{Number of attackers $M$\\ (non-colluding case)}&
					\makecell{Number of attackers $M$\\ (colluding case)}\\ \hline 
					\makecell{Mechanism 1}& 
					\makecell{less than $\pi$} & 
					\makecell{$d>\lfloor N/2 \rfloor$} & 
					\makecell{Yes} &
				    \makecell{$M\leq2\times\lfloor (d-\lfloor N/2\rfloor)/4\rfloor$}&
					\makecell{$M\leq\lfloor (d-\lfloor N/2\rfloor)/4\rfloor$}\\ \hline 						
					\makecell{Mechanism 2}& 
					\makecell{less than $\pi$} & 
					\makecell{$d>\lfloor 2N/3 \rfloor$} & 
					\makecell{No} &  
					\makecell{$M\leq2\times\lfloor d/9\rfloor\rfloor$} & 
					\makecell{$M\leq\lfloor d/9\rfloor\rfloor$}\\ \hline
				\end{tabular}}
			\end{threeparttable}
		\end{center}	
	\end{table*}
	

\section{Simulations}
Consider a network of $30$ PCOs distributed on a two-dimension plane as illustrated in Fig. \ref{Topology}. Two oscillators in the network can communicate with each other if and only if their distance is no more than $50$ meters. Thus, the degree of the network is $d=24$. We set the initial time to $t=0$ and chose phases of oscillators randomly from $[0,\,\pi)$. Hence, the initial length of the containing arc satisfied $\delta(0)<\pi$. 

\begin{figure}[htbp]
	\centering
	\includegraphics[width=0.25\textwidth]{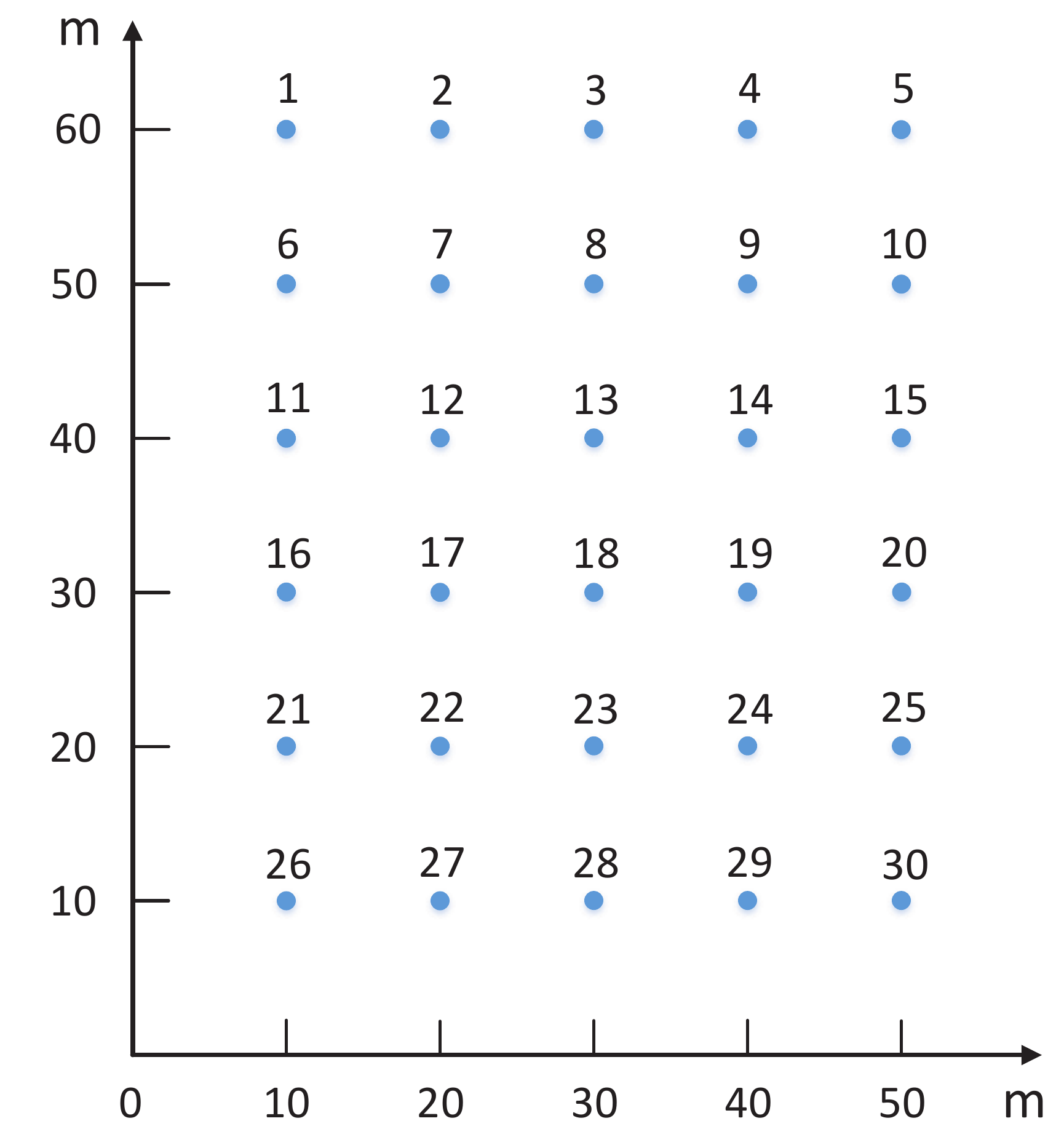}
	\caption{The deployment of the $30$ oscillators used in simulations.}
	\label{Topology}
\end{figure}

\subsection{In the Absence of Attacks}
We first considered the situation without attackers. As $d>\lfloor 2N/3\rfloor=20$, we know from Theorem \ref{Theorem_1_WithoutAttack} and Corollary \ref{Corollary_4} that the network will always synchronize, whether or not $N$ is available to individual oscillators. This was confirmed in Fig. \ref{No_attack_N_known}. 

\begin{figure}[htbp]
	\centering
	\includegraphics[width=0.4\textwidth]{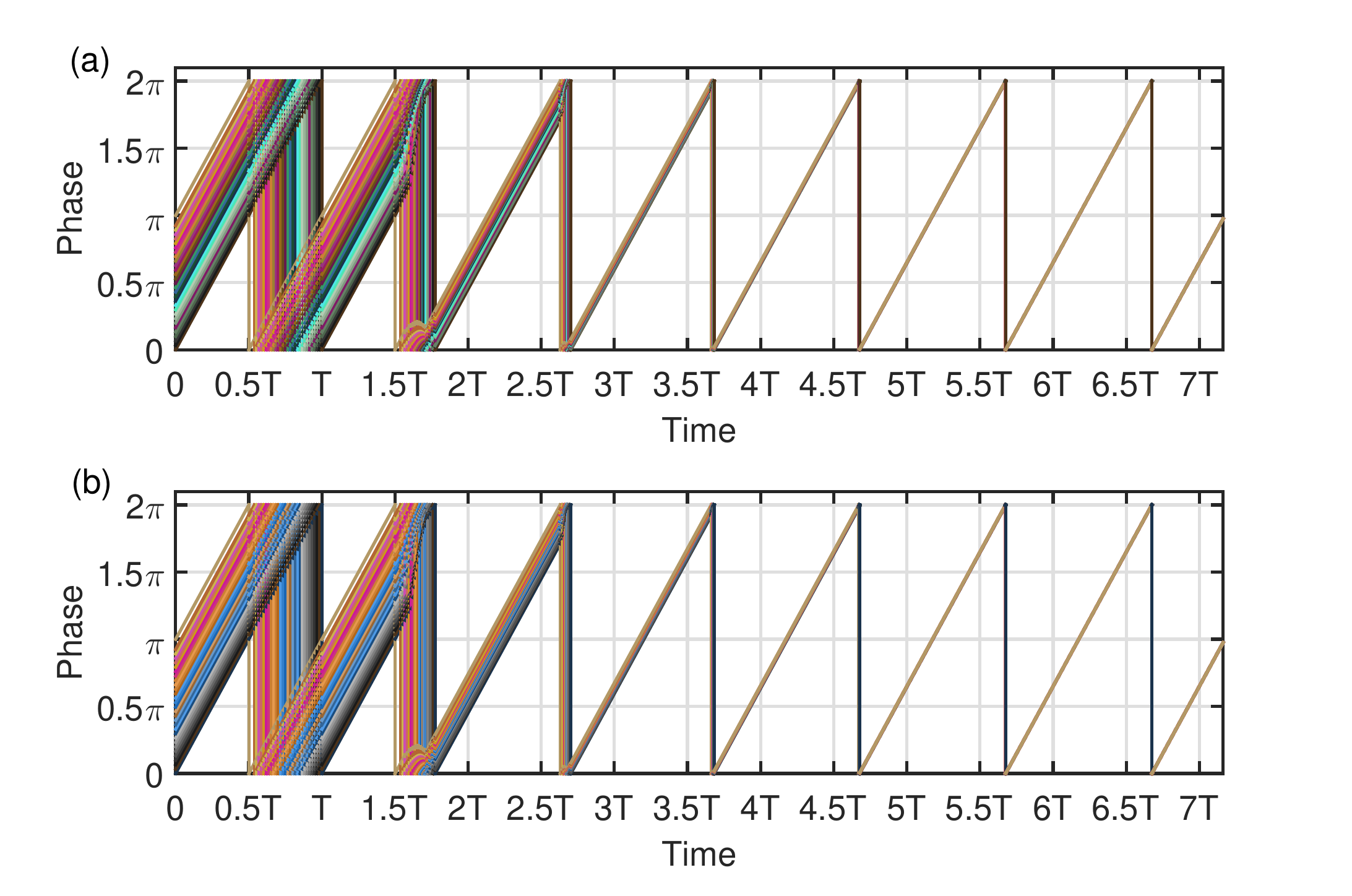}
	\caption{Plot $(a)$ and $(b)$ presented the phase evolutions of the $30$ PCOs under Mechanism $1$ and Mechanism $2$, respectively. The coupling strength was set to $l=0.1$.}
	\label{No_attack_N_known}
\end{figure}

\subsection{In the Presence of Stealthy Byzantine Attackers}

Using the same network, we first ran simulations in the presence of stealthy Byzantine attacks when $N$ is known to individual oscillators.

We assumed that $4$ out of the $30$ oscillators (oscillators $1$, $6$, $26$ and $30$) were compromised and acted as non-colluding Byzantine attackers. As $M=2\times\lfloor(d-\lfloor N/2\rfloor )/4\rfloor=4$, we know from Theorem \ref{Theorem_2_WithAttacks_Noclue} that the network will synchronize. This was confirmed by numerical simulations in Fig. \ref{Attack_N_known_No_clue}, which showed that even under attacks the length of the containing arc of legitimate oscillators converged to zero, despite the fact that the containing arc of oscillator $1$'s legitimate neighbors was enlarged by these attack pulses. 

\begin{figure}[htbp]
	\centering
	\includegraphics[width=0.4\textwidth]{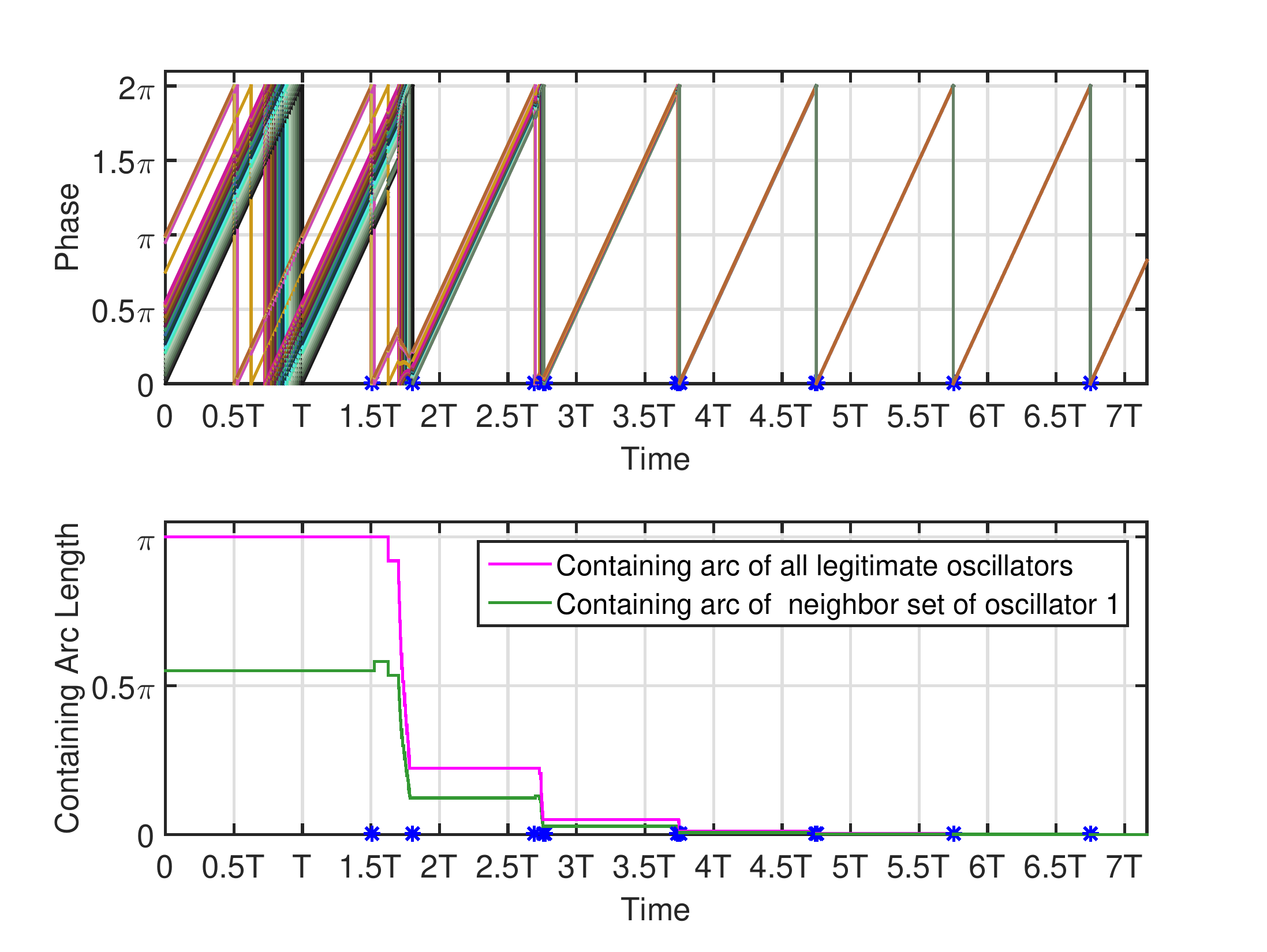}
	\caption{Phase evolution and the length of the containing arc of $26$ legitimate oscillators under Mechanism $1$ in the presence of $4$ non-colluding stealthy Byzantine attackers (oscillators $1$, $6$, $26$, $30$) with attacking pulse time instants represented by asterisks. The coupling strength was set to $l=0.1$.}
	\label{Attack_N_known_No_clue}
\end{figure}

When the $4$ attackers colluded with each other, according to Theorem \ref{Theorem_3_WithAttacks_clue_N_known}, the maximally allowable number of colluding attackers is $\lfloor (d-\lfloor N/2\rfloor )/4\rfloor=2$. Hence, the condition in Theorem \ref{Theorem_3_WithAttacks_clue_N_known} was not satisfied. Simulation results confirmed that legitimate oscillators indeed could not synchronize, as illustrated in Fig. \ref{Attack_N_known_No_clue_Counter}.

\begin{figure}[htbp]
	\centering
	\includegraphics[width=0.4\textwidth]{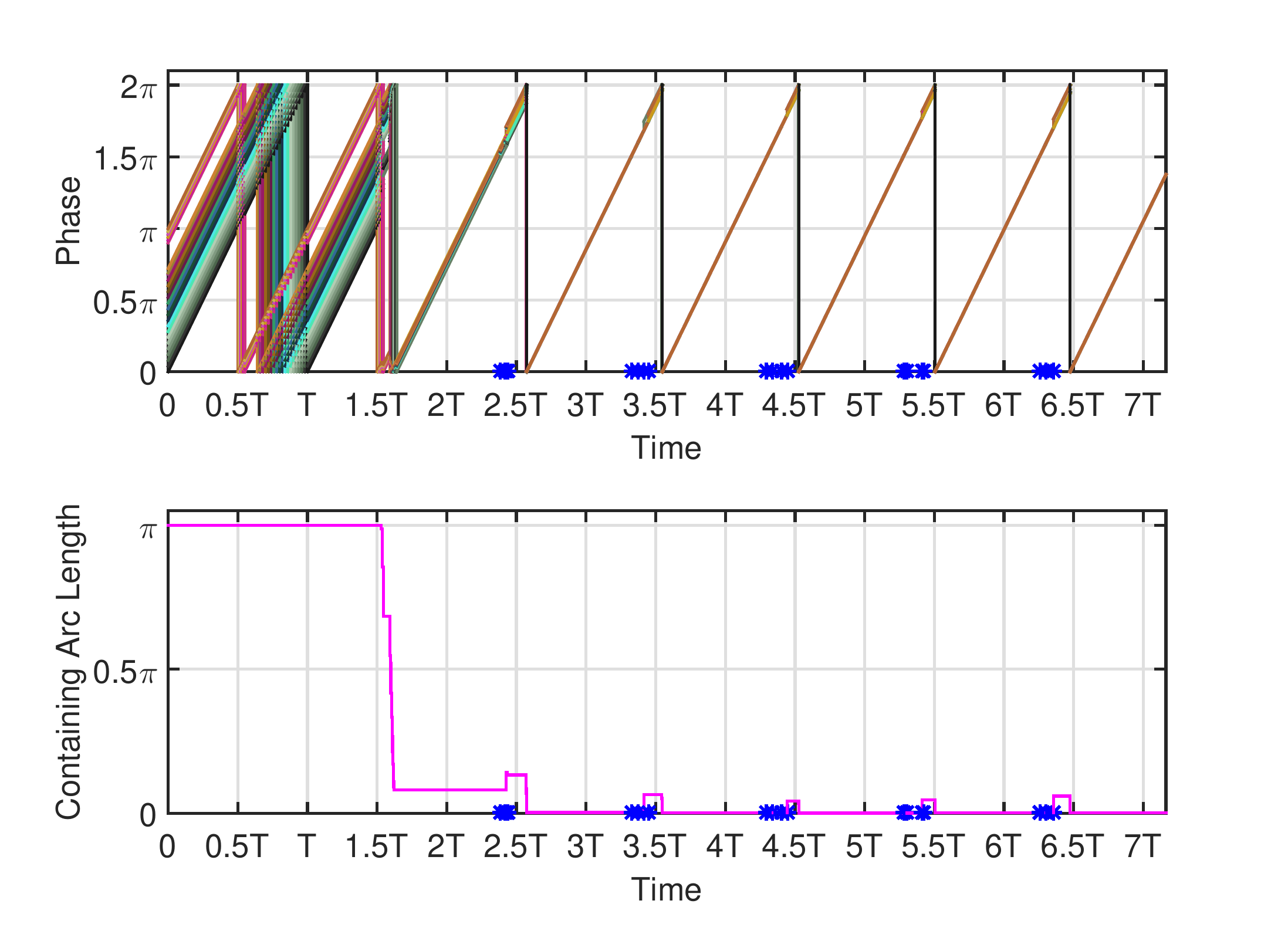}
	\caption{Phase evolution and the length of the containing arc of $26$ legitimate oscillators under Mechanism $1$ in the presence of $4$ colluding stealthy Byzantine attackers (oscillators $1$, $6$, $26$ and $30$) with attacking pulse time instants represented by asterisks. The coupling strength was set to $l=0.1$.}
	\label{Attack_N_known_No_clue_Counter}
\end{figure}

However, when we decreased the number of attackers to $2$ (oscillators $1$ and $6$), all legitimate oscillators synchronized (cf. Fig. \ref{Attack_N_known_clue}), confirming the results in Theorem \ref{Theorem_3_WithAttacks_clue_N_known}. It is worth noting that the containing arc of attacker $1$'s legitimate neighbors were enlarged by attacker pulses, cf. Fig. \ref{Attack_N_known_clue}.

\begin{figure}[htbp]
	\centering
	\includegraphics[width=0.4\textwidth]{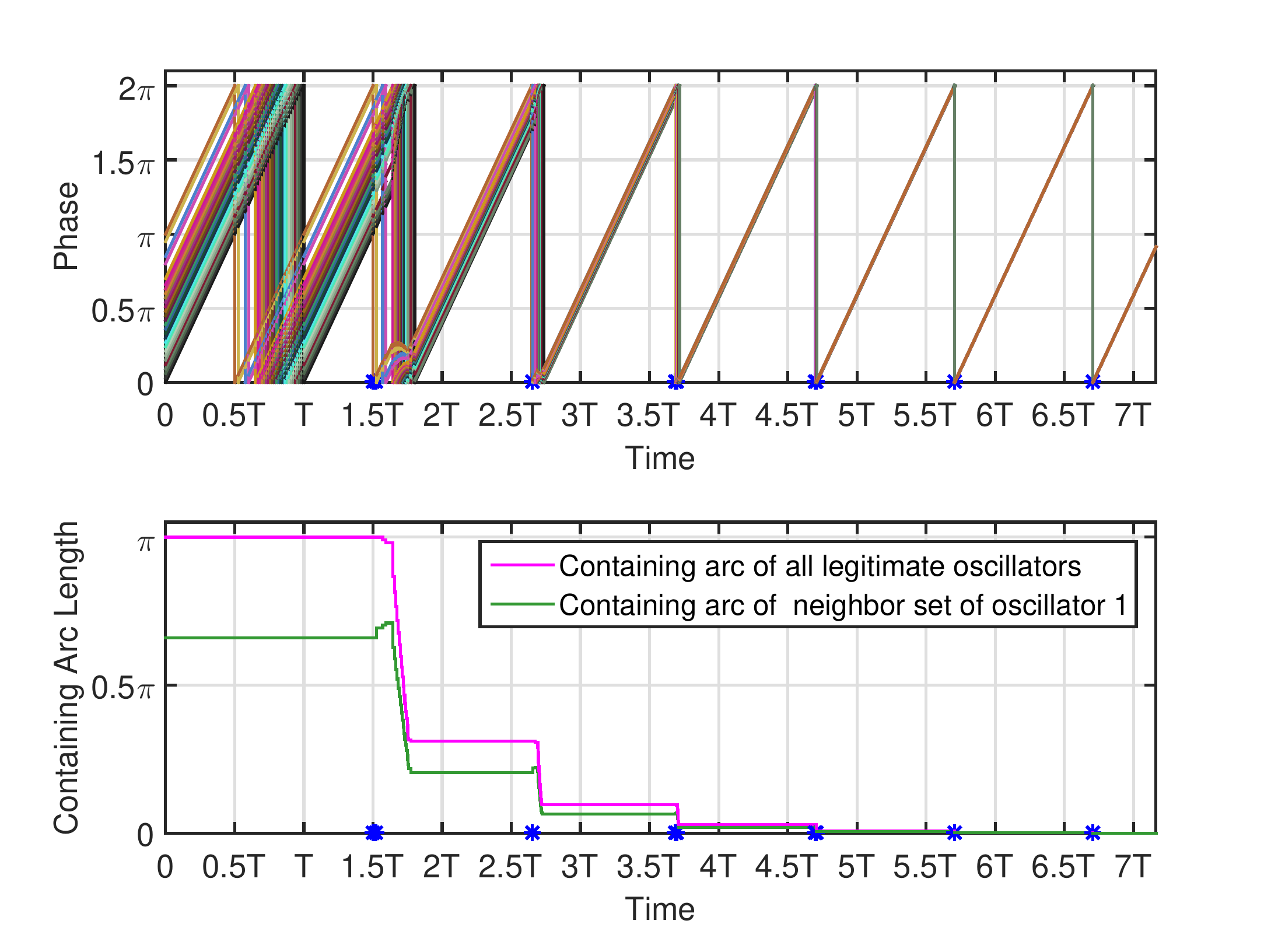}
	\caption{Phase evolution and the length of the containing arc of $28$ legitimate oscillators under Mechanism $1$ in the presence of $2$ colluding stealthy Byzantine attackers (oscillators $1$ and $6$) with attacking pulse time instants represented by asterisks. The coupling strength was set to $l=0.1$.}
	\label{Attack_N_known_clue}
\end{figure}

We also ran simulations in the presence of stealthy Byzantine attacks when $N$ is unknown to individual oscillators. We assumed that $4$ out of the $30$ oscillators (oscillators $1$, $6$, $18$ and $26$) were compromised and acted as stealthy non-colluding Byzantine attackers. According to Theorem \ref{Theorem_4}, all legitimate oscillators can be synchronized under Mechanism $2$. This was confirmed by numerical simulations in Fig. \ref{Attack_N_unknown_No_clue}, which showed that the length of the containing arc of legitimate oscillators converged to zero. 

\begin{figure}[htbp]
	\centering
	\includegraphics[width=0.4\textwidth]{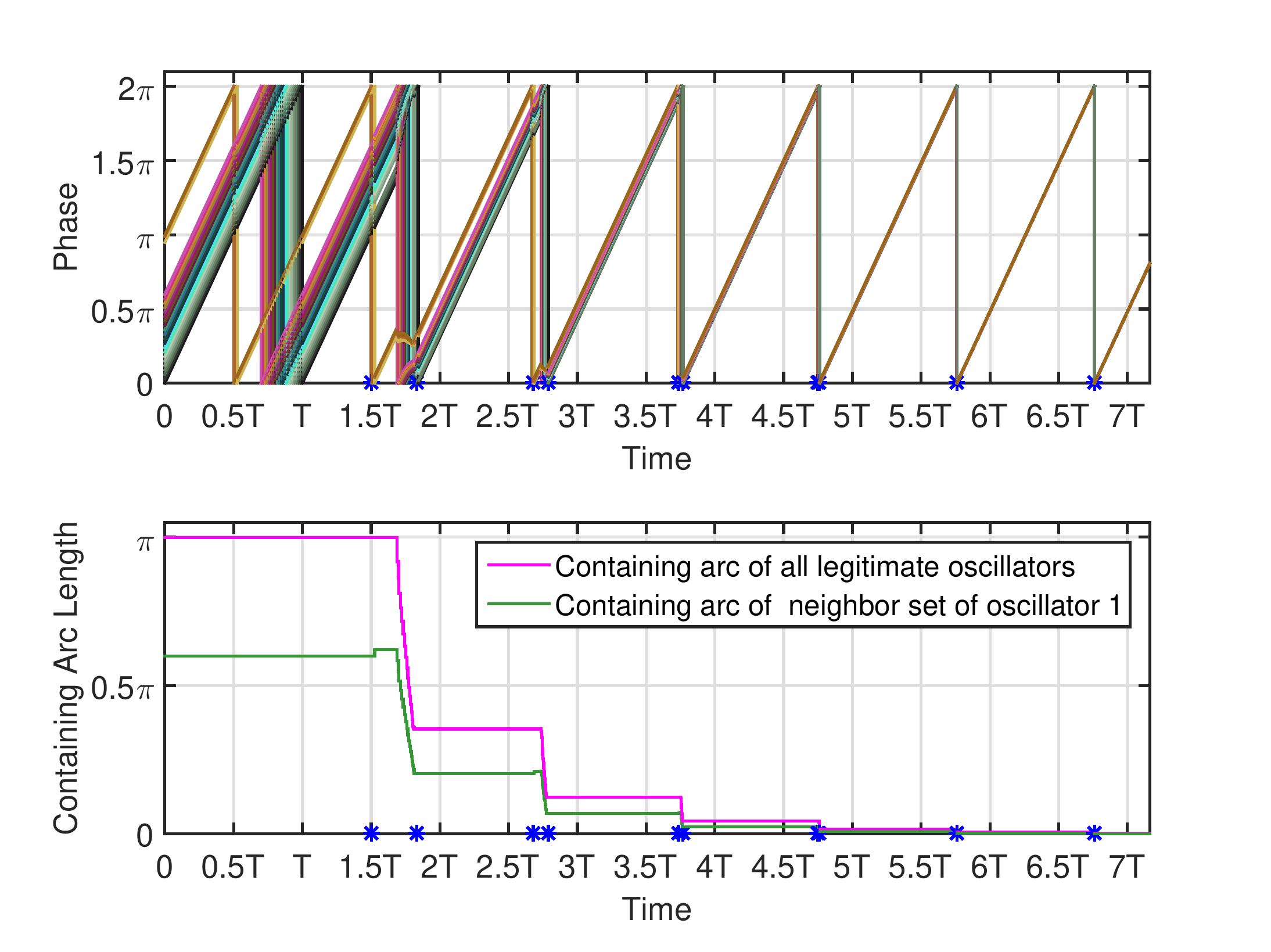}
	\caption{Phase evolution and the length of the containing arc of $26$ legitimate oscillators under Mechanism $2$ in the presence of $4$ stealthy non-colluding Byzantine attackers (oscillators $1$, $6$, $18$ and $26$) with attacking pulse time instants represented by asterisks. The coupling strength was set to $l=0.1$.}
	\label{Attack_N_unknown_No_clue}
\end{figure}

When all $4$ attackers colluded with each other, according to Theorem \ref{Theorem_5}, the maximally allowable number of attackers is $\lfloor d/9\rfloor=2$. Hence, the condition in Theorem \ref{Theorem_5} is not satisfied. Simulation results confirmed that legitimate oscillators indeed could not synchronize, as illustrated in Fig. \ref{Attack_N_unknown_No_clue_Counter}.

\begin{figure}[htbp]
	\centering
	\includegraphics[width=0.4\textwidth]{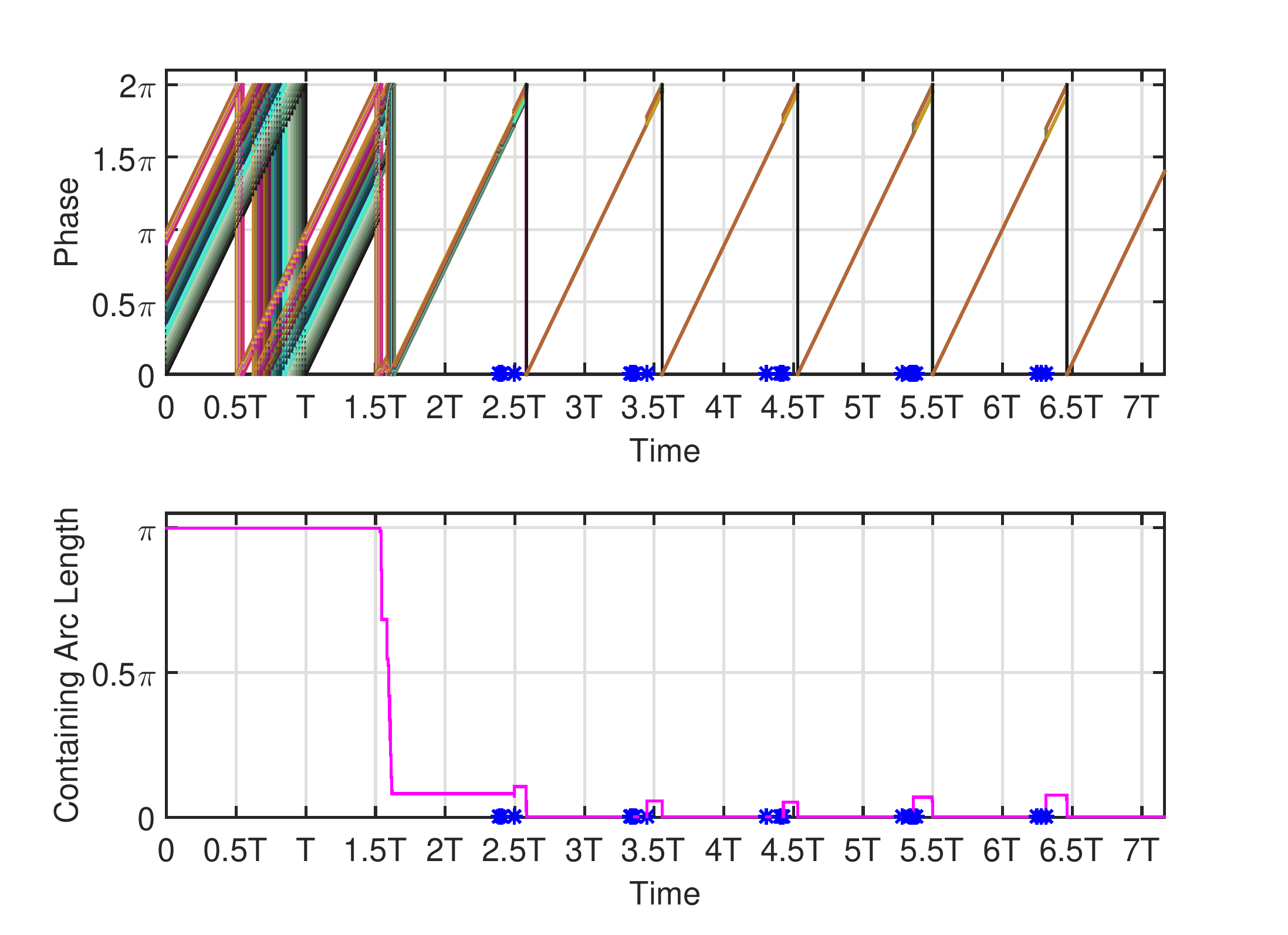}
	\caption{Phase evolution and the length of the containing arc of $26$ legitimate oscillators under Mechanism $2$ in the presence of $4$ colluding stealthy Byzantine attackers (oscillators $1$, $6$, $18$ and $26$) with firing time instants represented by asterisks. The coupling strength was set to $l=0.1$.}
	\label{Attack_N_unknown_No_clue_Counter}
\end{figure}

However, when we reduced the number of colluding attackers to $2$ (oscillators $1$ and $6$), all legitimate oscillators achieved synchronization (cf. Fig. \ref{Attack_N_unknown_clue}), which confirmed Theorem \ref{Theorem_5}.

\begin{figure}[htbp]
	\centering
	\includegraphics[width=0.4\textwidth]{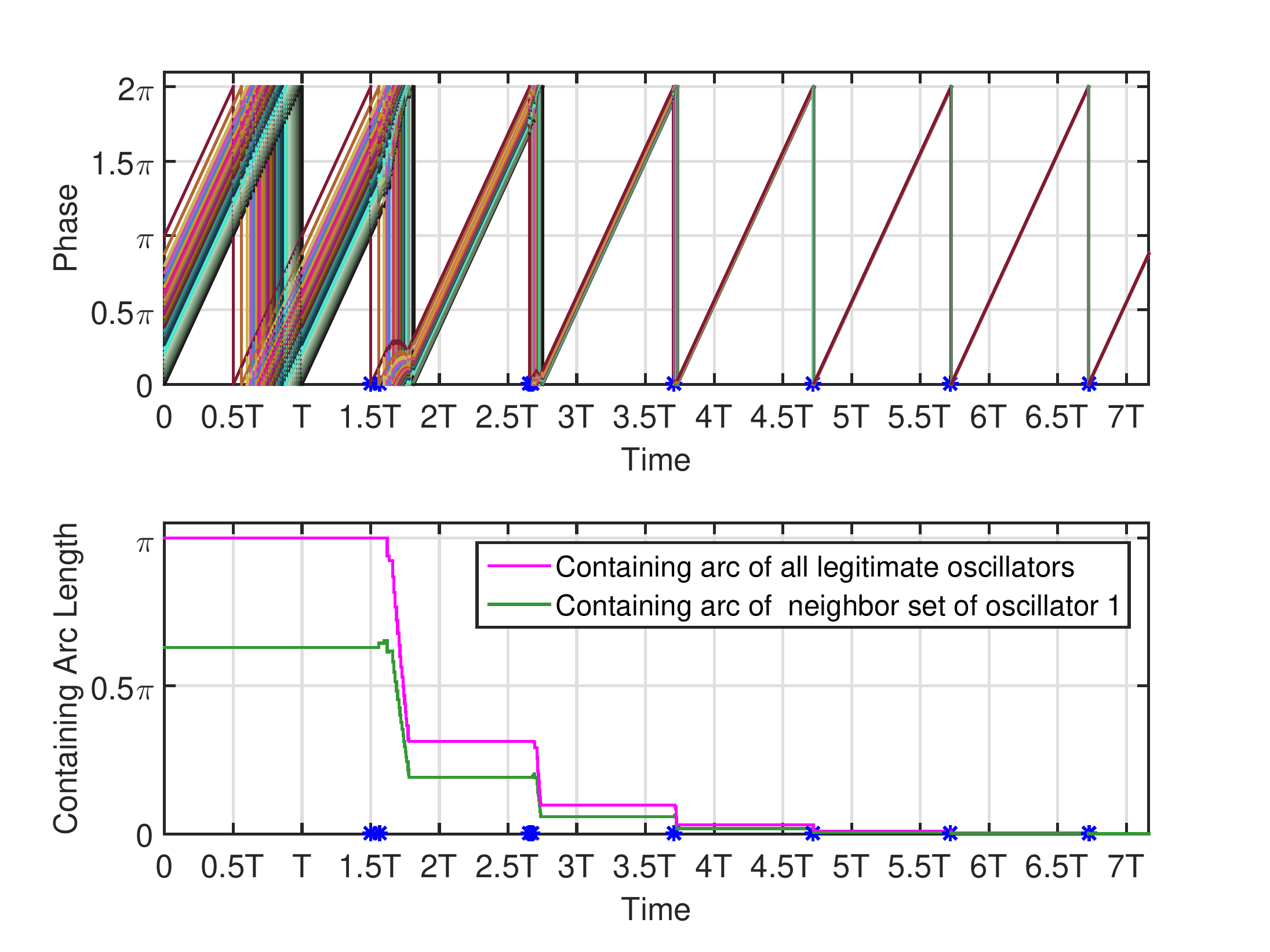}
	\caption{Phase evolution and the length of the containing arc of $28$ legitimate oscillators under Mechanism $2$ in the presence of $2$ colluding stealthy Byzantine attackers (oscillators $1$ and $6$) with attacking pulse time instants represented by asterisks. The coupling strength was set to $l=0.1$.}
	\label{Attack_N_unknown_clue}
\end{figure}

\subsection{Comparison with Existing Results}

In the absence of attacks, we compared Mechanism $1$ with existing approaches in \cite{klinglmayr2012self,yun2015robustness,wang2018pulse} under the PCO network in Fig. \ref{Topology} in the presence of time-varying delays. We assume that the delays are randomly distributed in $[0,\,0.1T]$. Noting that exact synchronization cannot be achieved in this case, similar to \cite{klinglmayr2012self}, we evaluated the performance using synchronization errors defined as follows:
\[
Synchronization~ Error=\max_{i,j\in\mathcal{N}_L}\{\min(2\pi-|\phi_i-\phi_j|,|\phi_i-\phi_j|)\}
\]
where $\mathcal{N}_L$ is the index set of all legitimate oscillators.

Fig. \ref{Sync_error_1} and Fig. \ref{Sync_error_2} show the synchronization errors of Mechanism $1$ and approaches in \cite{klinglmayr2012self,yun2015robustness,wang2018pulse} when the coupling strength was set to $l=0.3$ and $l=0.6$, respectively. Each data point was the average of $10,000$ runs with vertical error bars denoting standard deviations. It can be seen that our approach renders a smaller synchronization error. It is worth noting that Mechanism $2$ also renders a smaller synchronization error than the approaches in \cite{klinglmayr2012self,yun2015robustness,wang2018pulse} under the same set up. However, the results are omitted due to space limitations.

\begin{figure}[htbp]
	\centering
	\includegraphics[width=0.4\textwidth]{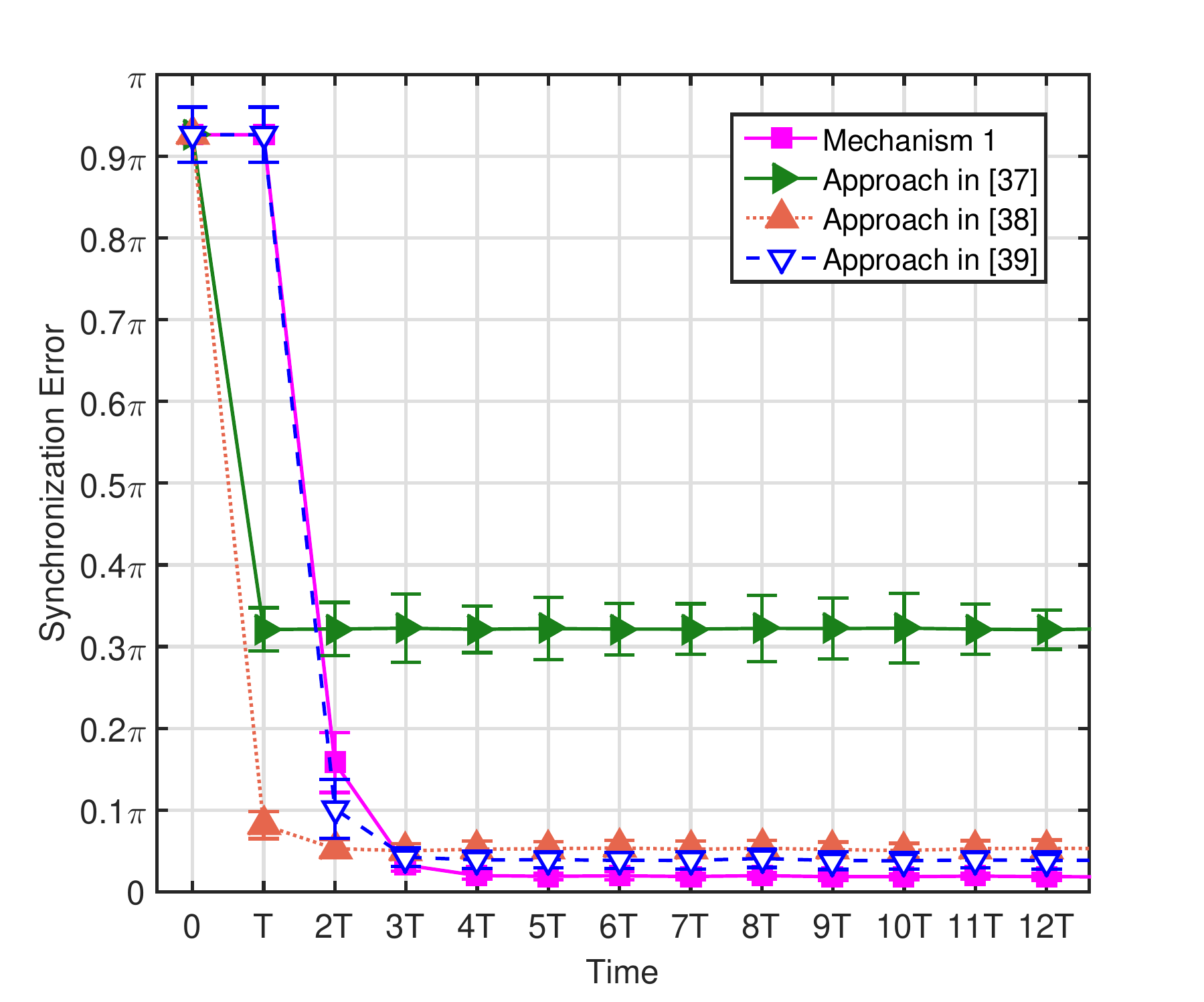}
	\caption{Comparison of our Mechanism $1$ with the approaches in \cite{klinglmayr2012self,yun2015robustness,wang2018pulse} in terms of synchronization error in the presence of time-varying delays uniformly distributed in $[0,\,0.1T]$. The coupling strength was set to $l=0.3$.}
	\label{Sync_error_1}
\end{figure}
\begin{figure}[h]
	\centering
	\includegraphics[width=0.4\textwidth]{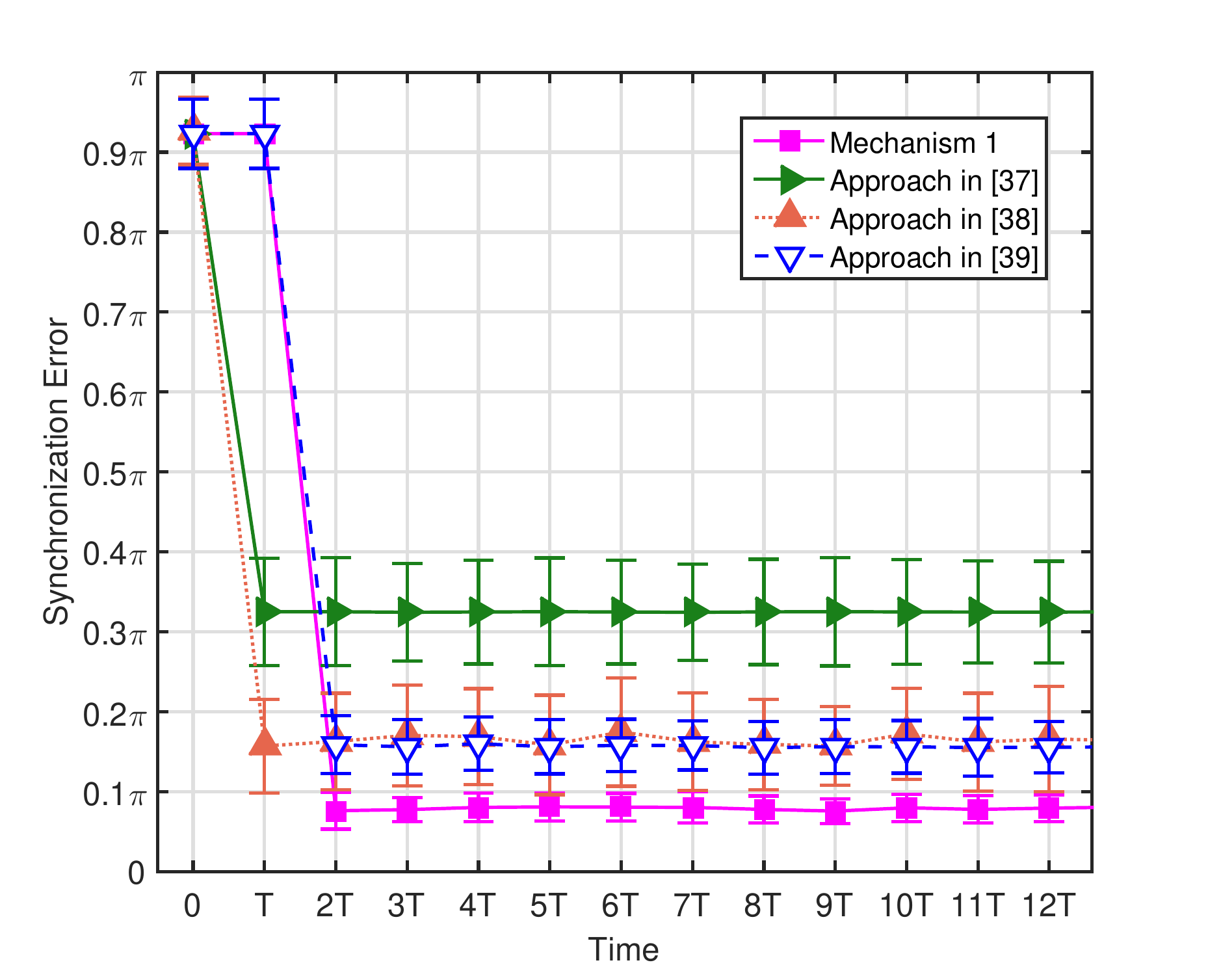}
		 \caption{Comparison of our Mechanism $1$ with the approaches in \cite{klinglmayr2012self,yun2015robustness,wang2018pulse} in terms of synchronization error in the presence of time-varying delays uniformly distributed in $[0,\,0.1T]$. The coupling strength was set to $l=0.6$.}
	\label{Sync_error_2}
\end{figure}

We also compared our proposed approach with existing approaches in \cite{klinglmayr2012self,yun2015robustness,wang2018pulse} under the PCO network in Fig. \ref{Topology} in the presence of non-colluding and colluding stealthy Byzantine attackers, respectively.

Fig. \ref{Sync_error_Non_colluding_N_Known} shows the synchronization errors of Mechanism $1$ and approaches in \cite{klinglmayr2012self,yun2015robustness,wang2018pulse} in the presence of $4$ non-colluding stealthy Byzantine attackers (oscillators $1$, $6$, $26$, $30$) and Fig. \ref{Sync_error_Colluding_N_Known} shows the corresponding synchronization errors in the presence of $2$ colluding stealthy Byzantine attackers (oscillators $1$ and $6$). Each data point was the average of $10,000$ runs with vertical error bars denoting standard deviations. It can be seen that our approach can achieve perfect synchronization whereas all existing approaches are subject to substantial synchronization errors. It is worth noting that our Mechanism $2$ also achieved perfect synchronization under the same set up. However, the results are omitted due to space limitations. 

\begin{figure}[htbp]
	\centering
	\includegraphics[width=0.4\textwidth]{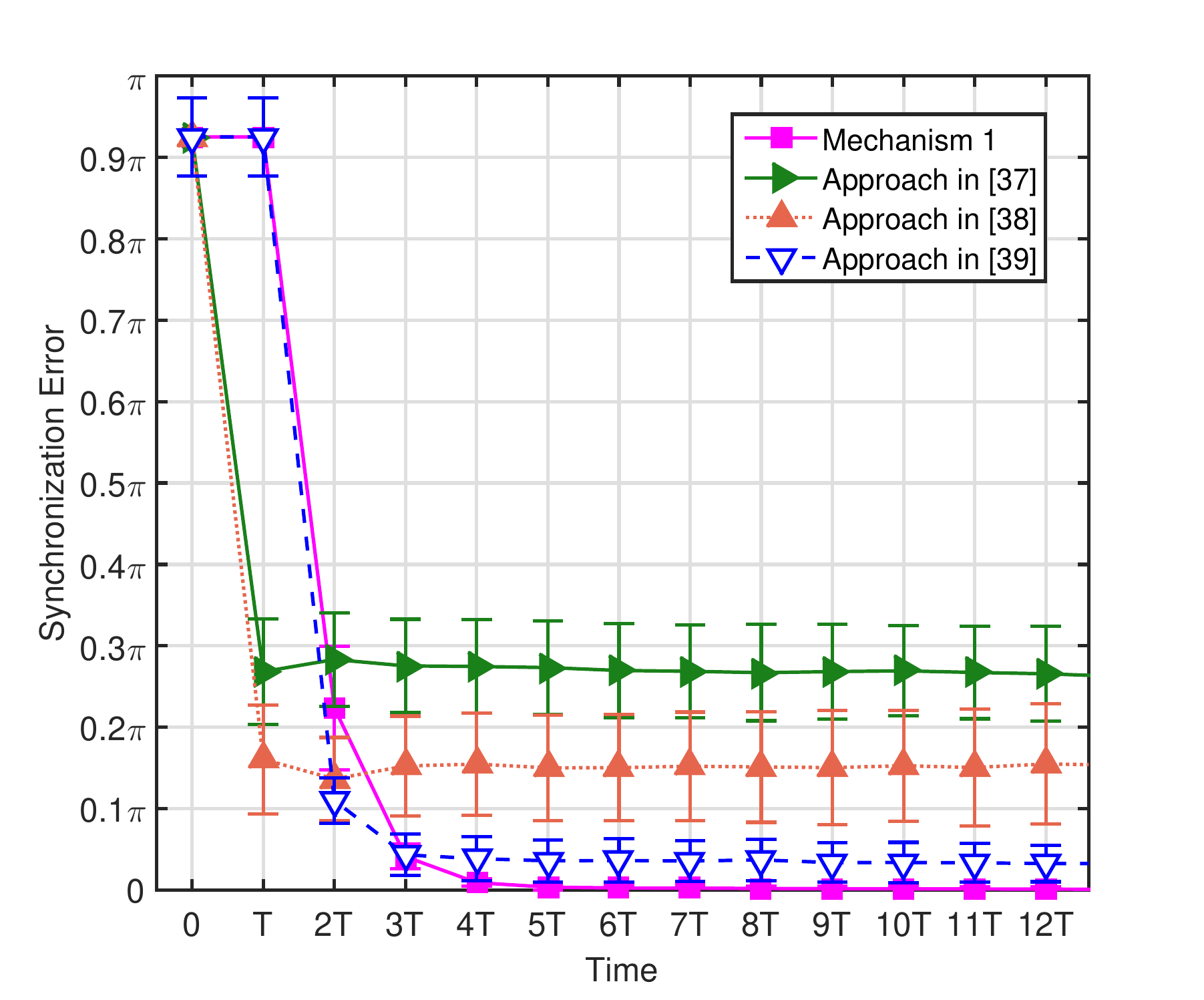}
	\caption{Comparison of our Mechanism $1$ with the attack resilient approaches in \cite{klinglmayr2012self,yun2015robustness,wang2018pulse} in terms of synchronization error in the presence of $4$ non-colluding stealthy Byzantine attackers (oscillators $1$, $6$, $26$, $30$). The coupling strength was set to $l=0.3$.}
	\label{Sync_error_Non_colluding_N_Known}
\end{figure}

\begin{figure}[htbp]
	\centering
	\includegraphics[width=0.4\textwidth]{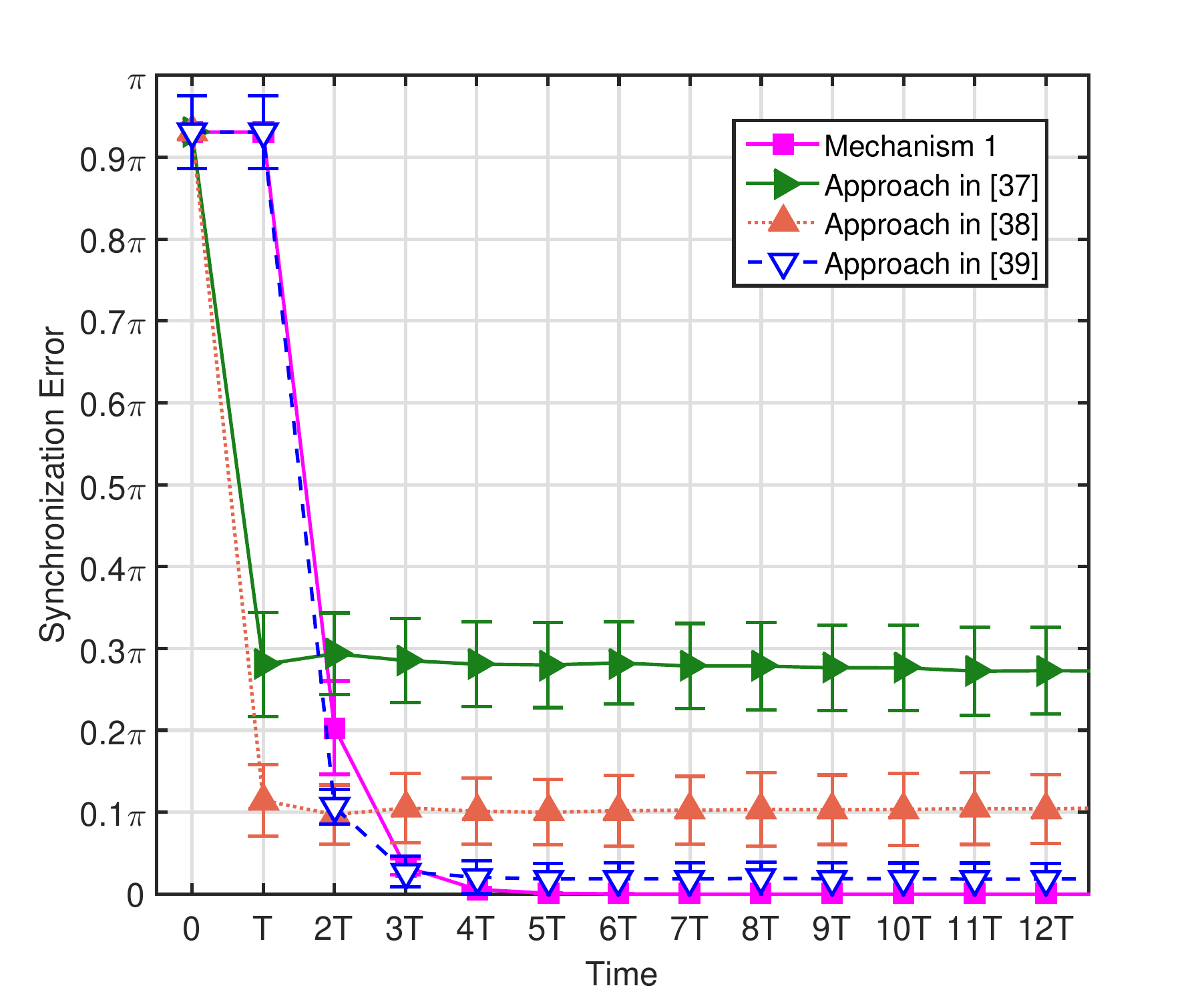}
	\caption{Comparison of our Mechanism $1$ with the attack resilient approaches in \cite{klinglmayr2012self,yun2015robustness,wang2018pulse} in terms of synchronization error in the presence of $2$ colluding stealthy Byzantine attackers (oscillators $1$ and $6$). The coupling strength was set to $l=0.3$.}
	\label{Sync_error_Colluding_N_Known}
\end{figure}

\section{CONCLUSIONS}
Due to unique advantages over conventional packet-based synchronization approaches in terms of simplicity, scalability, and energy efficiency, pulse-based synchronization gained increased attention in recent years. However, few results are available to address the attack-resilience of pulse base synchronization. In this paper, we propose a new pulse-based synchronization mechanism to improve the attack-resilience of general connected PCO networks. We rigorously prove that the new mechanism can achieve phase synchronization of general connected PCO networks in the presence of multiple stealthy Byzantine attackers, irrespective of whether they are colluding or not. Our results allow the initial phases of legitimate oscillators to reside in a half oscillation period, which is in distinct difference from most existing attack-resilience algorithms that require a priori (almost) synchronization among legitimate oscillators. The approach is also applicable when the total number of oscillators is unknown to individual oscillators. Numerical simulations confirmed the analytical results.   
\bibliographystyle{unsrt}
\bibliography{Attack}

\begin{IEEEbiography}[{\includegraphics[width=1in,height=1.25in,clip,keepaspectratio]{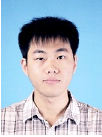}}]{Zhenqian Wang}
	was born in Hebei, China. He received the B.E. and M.Sc. degrees in automation and control theory from Tianjin University, Tianjin, China, in 2012 and 2015, respectively. He is currently working toward the Ph.D. degree in the Department of Electrical and Computer Engineering, Clemson University, Clemson, SC, USA. His current research focuses on attack-resilient clock synchronization.
\end{IEEEbiography}
\begin{IEEEbiography}[{\includegraphics[width=1in,height=1.25in,clip,keepaspectratio]{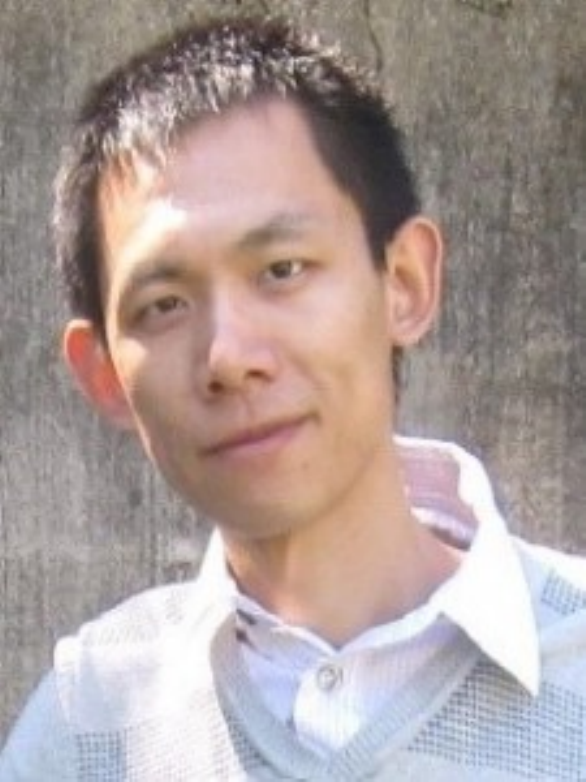}}]{Yongqiang Wang}
	(SM'13) was born in Shandong, China. He received the B.S. degree in Electrical Engineering \& Automation, the B.S. degree in Computer Science \& Technology from Xi'an Jiaotong University, Shaanxi, China, in 2004. He received the M.Sc. and the Ph.D. degrees in Control Science \& Engineering from Tsinghua University, Beijing, China, in 2009. 
	
	From 2007-2008, he was with the University of Duisburg-Essen, Germany, as a visiting student. He was a Project Scientist at the University of California, Santa Barbara. He is currently an Assistant Professor with the Department of Electrical and Computer Engineering, Clemson University, Clemson, SC, USA. His research interests are cooperative and networked control, synchronization of wireless sensor networks, systems modeling and analysis of biochemical oscillator networks, and model-based fault diagnosis. He received the 2008 Young Author Prize from IFAC Japan Foundation for a paper presented at the 17th IFAC World Congress in Seoul.
\end{IEEEbiography}
\end{document}